\numberwithin{equation}{section}
\newlength{\xtrawidth}
\newlength{\xtraheight}
\def\C{\mathbb{C}}
\def\Z{\mathbb{Z}}
\def\R{\mathbb{R}}
\def\P{\mathbb{P}}
\newcommand{\beq}{\begin{equation}}
\newcommand{\eeq}{\end{equation}}
\newcommand{\shortline}{\newline\vskip -7mm{\hbox to 2cm{\hrulefill}}\vskip 3mm}
\newtheorem{theorem}{Theorem}[section]
\newtheorem{lemma}[theorem]{Lemma}
\newenvironment{proof}[1][Proof]{\begin{trivlist}
\item[\hskip \labelsep {\bfseries #1}]}{\end{trivlist}}
\newenvironment{remark}[1][Remark]{\begin{trivlist}
\item[\hskip \labelsep {\bfseries #1}]}{\end{trivlist}}
\newcommand{\qed}{\nobreak \ifvmode \relax \else
      \ifdim\lastskip<1.5em \hskip-\lastskip
      \hskip1.5em plus0em minus0.5em \fi \nobreak
      \vrule height0.75em width0.5em depth0.25em\fi}
\newcommand{\be}{\begin{equation}} % start the equation
\newcommand{\ee}{\end{equation}} % end the equation
\newcommand{\bi}{\begin{itemize}} % start the item env
\newcommand{\ei}{\end{itemize}} % end the item env 
\newcommand{\mc}[1]{\mathcal{#1}} % curly math font
\newcommand{\mb}[1]{\mathbb{#1}} % bold math font		
\begin{document}
\begin{centering}
\vspace*{1.2cm}
{\Large \bf  Discrete Symmetries of Calabi-Yau Hypersurfaces \\[2mm]
                  in Toric Four-Folds}

\vspace{1cm}

{\bf Andreas P. Braun}\footnote{andreas.braun@physics.ox.ac.uk},
{\bf Andre Lukas}\footnote{lukas@physics.ox.ac.uk},
{\bf Chuang Sun}\footnote{chuang.sun@physics.ox.ac.uk},

{\small
\vspace*{.5cm}
Rudolf Peierls Centre for Theoretical Physics, University of Oxford\\
  1 Keble Road, Oxford OX1 3NP, UK\\[0.3cm]
}

\begin{abstract}\noindent
We analyze freely-acting discrete symmetries of Calabi-Yau three-folds defined as hypersurfaces in ambient toric four-folds. An algorithm which allows the systematic classification of such symmetries which are linearly realised on the toric ambient space is devised. This algorithm is applied to all Calabi-Yau manifolds with $h^{1,1}(X)\leq 3$ obtained by triangulation from the Kreuzer-Skarke list, a list of some $350$ manifolds. All previously known freely-acting symmetries on these manifolds are correctly reproduced and we find five manifolds with freely-acting symmetries. These include a single new example, a manifold with a $\mathbb{Z}_2\times\mathbb{Z}_2$ symmetry where only one of the $\mathbb{Z}_2$ factors was previously known. In addition, a new freely-acting $\mathbb{Z}_2$ symmetry is constructed for a manifold with $h^{1,1}(X)=6$. While our results show that there are more freely-acting symmetries within the Kreuzer-Skarke set than previously known, it appears that such symmetries are relatively rare. 
\end{abstract}
\end{centering}

%% end of abstract

\newpage

\tableofcontents
\setcounter{footnote}{0}

\section{Introduction}
Discrete symmetries of Calabi-Yau (CY) manifolds, both freely and non-freely acting, are important for a number of reasons. As an immediate application, freely-acting symmetries can be divided out, thereby leading to new CY manifolds with smaller Hodge numbers. Such quotient CY manifolds have been used to fill out the previous fairly sparse tip of the Hodge number plot~\cite{Gabella:2008id} and, due to their relatively small moduli spaces, they are useful for string compactifications. Further, most standard constructions lead to CY manifolds with a trivial fundamental group. Quotients of CY manifolds by freely-acting symmetries, on the other hand, have a non-trivial fundamental group. In the context of string compactifications on CY manifolds, the presence of a non-trivial fundamental group is required whenever gauge field Wilson lines need to be included. For CY compactifications of the heterotic string in particular the inclusion of Wilson lines appears to be the only viable way~\cite{Anderson:2014hia} to arrive at phenomenologically promising models. In fact,  the limited knowledge about such freely-acting discrete symmetries is one of the current ``bottlenecks" in the attempt to systematically construct heterotic CY vacua~\cite{Anderson:2011ns,Anderson:2012yf}. \\
Freely (and non-freely) acting discrete symmetries of CY manifolds can be important for string compactifications in yet another way. Provided they are not divided out, such symmetries can translate into discrete symmetries of the resulting lower-dimensional theory. Such discrete symmetries can carry important information about the structure of the effective theory, for example forbid certain unwanted operators, and they have been extensively used in particle phenomenology. The possible origin of such ``phenomenological" discrete symmetries within string theory is an important question which provides a further motivation for the present paper.\\

Let us summarise the systematic knowledge on freely-acting symmetries of CY three-folds to date. The oldest and conceptually simplest set of CY three-folds consists of complete intersection CY three-folds in products of projective spaces (CICYs), some $7890$ manifolds which have been classified in Ref.~\cite{Candelas:1987kf}. Considerable progress has been made over the past years in finding freely-acting symmetries of CICYs~\cite{Candelas:2008wb,Candelas:2010ve}, culminating in Ref.~\cite{Braun:2010vc} which provides a classification of freely-acting symmetries with a linear realisation on the projective ambient space for the entire CICY dataset. It was found that only $195$ of the $7890$ CICY manifolds have freely-acting symmetries of this kind, although many of these $195$ manifolds allow for multiple symmetries. These results have been used to construct new CICY quotients with small Hodge numbers~\cite{Candelas:2008wb,Constantin:2016xlj} and to systematically search for physically viable heterotic models on CICYs~\cite{Anderson:2011ns,Anderson:2012yf}.  Especially the latter work, which has led to the largest set of heterotic standard models to date, would have been impossible to complete without these results on discrete symmetries for CICYs.\\

Much less is known about discrete symmetries for the largest known set of CY three-folds, hypersurfaces in four-dimensional toric varieties, which were classified by Kreuzer and Skarke in Ref.~\cite{Kreuzer:2000xy}. Somewhat imprecisely but for ease of terminology we will refer to these manifolds as toric CY (TCY) manifolds. The Kreuzer-Skarke list consists of about half a billion reflexive polytopes leading to an even larger number of associated TCY manifolds obtained by triangulation. In Ref.~\cite{Altman:2014bfa}, this process of triangulation has been carried out for all cases with Picard number $h^{1,1}(X)\leq 6$ and we will be relying on this dataset.\\
Batyrev and Kreuzer~\cite{Batyrev:2005jc} have classified all TCY manifolds $X$ with freely-acting symmetries $\Gamma$ such that the quotient $X/\Gamma$ can again be described as a TCY manifold. Amazingly, among the half a billion reflexive polytopes, there are only $16$ freely-acting symmetries of this kind, associated to $16$ of these reflexive polytopes.\\
In general, the quotient manifold associated to a freely-acting symmetry of a TCY manifold does not have to be a TCY manifold itself. For this reason, there is no expectation that these $16$ cases provide a complete list of freely-acting symmetries for TCY manifolds. In fact, this is already clear from the overlap between the two sets of CICY and TCY three-folds. For example, it is well-known that the quintic in $\mathbb{P}^4$, which appears in both data sets, has a freely-acting $\mathbb{Z}_5\times\mathbb{Z}_5$ symmetry. The quintic is also among the $16$ cases identified by Batyrev and Kreuzer, who find a freely-acting $\mathbb{Z}_5$ symmetry. In other words, only a single $\mathbb{Z}_5$ symmetry is ``toric" and appears in the classification of Batyrev and Kreuzer, while the full $\mathbb{Z}_5\times\mathbb{Z}_5$ symmetry is not toric and is, hence, not obtained by their method. Further examples of this kind are provided by the bi-cubic in $\mathbb{P}^2\times\mathbb{P}^2$ (with a toric $\mathbb{Z}_3$ symmetry among the $16$ cases and a non-toric $\mathbb{Z}_3\times\mathbb{Z}_3$ symmetry group) and the tetra-quadric in $\mathbb{P}^1\times \mathbb{P}^1\times\mathbb{P}^1\times\mathbb{P}^1$ (with a toric $\mathbb{Z}_2$ symmetry among the $16$ cases and various larger, non-toric symmetry groups). These examples certainly show that the $16$ spaces identified by Batyrev and Kreuzer can have larger symmetry groups than the toric ones identified in their paper. It is also expected that there are more TCY manifolds with freely-acting (but non-toric) symmetries and the present paper provides an example which shows this is indeed the case. 
% As for non-freely acting symmetries, we are not aware of examples for TCY manifolds which are not CICY manifolds at the same time.\\
% but it is very easy to make such examples ... or do you mean non-freely acting symmetries at generic points in moduli space ?

The main purpose of this paper is to take first steps towards a classification of discrete symmetries for TCY three-folds. While in itself a fairly technical undertaking, this is highly relevant for the abovementioned tasks, including the construction of CY quotient manifolds with small Hodge numbers, finding phenomenologically promising CY vacua of the heterotic string and understanding the string origin of discrete symmetries in particle physics. Given the size and complexity of the Kreuzer-Skarke dataset, a complete classification is also a formidable task well beyond the present scope of available algorithms and computing power. The required triangulations of the reflexive polytopes in the Kreuzer-Skarke list have only been carried out for $h^{1,1}(X)\leq 6$~\cite{Altman:2014bfa} and this sets an upper limit on what we can currently hope to achieve. In this paper we will, in fact, be more modest and, for our systematic search, restrict to TCY manifolds with $h^{1,1}(X)\leq 3$. This amounts to a class of some $350$ spaces. A further restriction concerns the type of symmetries we take into consideration. For a TCY three-fold $X$ embedded into a toric ambient four-fold $A$, we first determine the group $G_A$ of symmetries of $A$ acting linearly on the homogeneous coordinates of $A$. Then we classify  symmetries groups $\Gamma$ of $X$ which can be embedded into $G_A$, that is, symmetries which descend from linear actions on the ambient space.\\
Broadly, the algorithm for finding such symmetries $\Gamma$ of TCY three-folds $X$, defined as the zero loci of families of polynomials $p$ in a toric four-fold $A$, proceeds as follows.
\begin{itemize}
\item Find the ambient space symmetry group $G_A$ from the toric data for $A$.
\item For a given finite group $\Gamma$, study all group monomorphisms $\Gamma\rightarrow G_A$.
\item For each such group monomorphism, find the family of invariant polynomials $p$.
\item Check if the hypersurfaces defined by such polynomials $p$ are generically smooth.
\item For freely-acting symmetries, check if the symmetry action is fixed point free on $X$.
\end{itemize}
For the case of freely-acting symmetries, the Euler number, $\eta(X)$, of the CY manifolds needs to be divisible by the group order $|\Gamma|$. A given TCY manifold $X$ (with $\eta(X)\neq 0$), therefore, has a finite list of candidate symmetry groups $\Gamma$. There are further indices~\cite{Candelas:1987du},  depending on the embedding $X\subset A$, which need to be divisible by the group order $|\Gamma|$ and which can be used to constrain the possible freely-acting groups $\Gamma$ further (or, in cases where $\eta(X)=0$, provide some non-trivial constraints). The net result is a finite list of group orders and, hence, a finite list of possible freely-acting symmetry groups for a given TCY three-fold, each of which can be analysed using the algorithm outlined above.\\  
For non freely-acting groups there is no simple a priori constraint on the group order. However, we can still apply the above algorithm (dropping the last step of checking fixed points) for the finite number of groups at any given group order, starting with group order two and successively increasing up to a desired maximum. In this way, we can find all symmetries (embedded into $G_A$), freely-acting or not, up to a given maximal group order. In practice, since symmetry groups of CY manifolds tend to be small, this can amount to a complete classification provided the group order can be pushed sufficiently high.\\

By applying this procedure, we find all freely-acting symmetries, linearly realised on the ambient space, of the $\sim 350$ TCY three-folds with $h^{1,1}(X)\leq 3$. We recover all symmetries among these manifolds known from the Batyrev-Kreuzer classification, as well as those known from the overlap with CICY manifolds. We find a single new, non-toric symmetry group, $\mathbb{Z}_2\times\mathbb{Z}_2$, on one of the $16$ Batyrev-Kreuzer manifolds, where one of the $\mathbb{Z}_2$ sub-groups is the toric symmetry identified by Batyrev and Kreuzer. In total, there are five TCY three-folds with $h^{1,1}(X)\leq 3$ and with freely-acting symmetries. Our results are summarised in Section~\ref{scan}.\\
While these systematic results show that there are new freely-acting symmetries, beyond those identified by Batyrev-Kreuzer and those known from CICY manifolds, the search has not led to new {\it manifolds} with freely-acting symmetries. To show that such manifolds exist, we also use our algorithm to construct a freely-acting $\mathbb{Z}_2$ symmetry for a TCY manifold with $h^{1,1}(X)=6$ which is neither among the 16 manifolds of Batyrev-Kreuzer nor a CICY manifold.\\

The outline of the paper is as follows. In the next section, we describe the mathematical background for our algorithm at a relatively informal level, with technical details and proofs relegated to the Appendices. Section~\ref{sec3} sets out the classification algorithm which is applied to two specific examples in Section~\ref{eg}. Our results from the systematic search of all TCY manifolds with $h^{1,1}(X)\leq 3$ are given in Section~\ref{scan} and we conclude in Section~\ref{concl}.

\section{Construction of Discrete Symmetries} \label{sec2}

In this section we review some background material on toric geometry and describe our algorithm for classifying freely acting symmetries of TCY manifolds. Our presentation will be informal and we will focus on the main ideas and results relevant for the classification algorithm. Technical details and proofs can be found in the Appendices.

\subsection{Toric Calabi-Yau hypersurfaces}
\label{conv}

The central objects of interest are Calabi-Yau three-folds defined as hypersurfaces in toric four-folds, TCY manifolds for short. This section provides a short review of TCY manifolds, mainly to set the scene and introduce the required notation. For a more thorough discussion the reader may, for example, consult Refs.~\cite{Batyrev:1994hm,Kreuzer:2006ax,Altman:2014bfa}. We begin with a broad outline of the construction.

Following Ref.~\cite{Batyrev:1994hm}, TCY manifolds (along with their mirrors) can be constructed elegantly from a pair of reflexive polytopes $\Delta$ and $\Delta^\circ$.
Underlying the construction are two four-dimensional lattices $M\cong\mathbb{Z}^4$ and $N\cong\mathbb{Z}^4$ with pairing $\langle \cdot\,, \cdot \rangle$, such that $\Delta \in M_{\mathbb{R}}$ and $\Delta^\circ \in N_{\mathbb{R}}$ satisfy
\begin{equation}
 \langle \Delta, \Delta^\circ \rangle \geq -1 \, . \label{polydual}
\end{equation}
A fan $\Sigma$ can be associated to the polytope $\Delta^\circ$ in the following way. First note that reflexivity of $\Delta^\circ$ means that the origin of $N$ is the unique interior lattice point of $\Delta^\circ$. All other lattice points ${\bf v}_i$ of $\Delta^\circ$ are primitive generators of the rays of the fan $\Sigma$. The cones of $\Sigma$ are given by a triangulation of $\Delta^\circ$, that is by forming subsets of the ${\bf v}_i$, each of which contains the generators of a cone. We will only consider triangulations which are star, regular and fine\footnote{When constructing a triangulation, one may ignore all lattice points on $\Delta^\circ$ in the interior of three-dimensional faces as such an omission has no effect on a Calabi-Yau hypersurface.}. 

The fan $\Sigma$ gives rise to a toric four-fold $A=\P_\Sigma$ by a construction which we will review below. Within this toric four-fold, TCY three-folds are defined as zero loci of polynomials $p$, whose constituent monomials are encoded by $\Delta$, the Newton polytope of the hypersurface equation, as explained in more detail below. The so-defined TCY hypersurfaces are denoted by $X$. Although the ambient space $\P_\Sigma$ is not necessarily smooth, for CY three-folds every fine triangulation leads to a smooth hypersurface $X$~\cite{Batyrev:1994hm}. 

It can be computationally expensive procedure to find all triangulations of a polytope $\Delta^\circ$ which give rise to a smooth Calabi-Yau hypersurface. However, not all of the triangulation data is crucial for the geometry of $X$. In particular, we may ignore everything which happens inside of faces of codimension one of $\Delta^\circ$. Furthermore, it is sufficient to specify a triangulation of $\Delta^\circ$ in terms of its cones of maximal dimension, which are denoted by ${\rm tr}(\Delta^\circ)$. The task of constructing all triangulations for sufficiently simple polytopes, that is, those with few lattice points, has been completed in Ref.~\cite{Altman:2014bfa} and this work represents the natural input data for our algorithm.\\

Let us now be more specific about the construction. The data in Ref.~\cite{Altman:2014bfa} provides a list ${\bf v}_i=(v_{ir})_{r=1,\ldots ,4}\in \Delta^\circ\cap N$, where $i=1,\ldots ,n$, of the lattice points of the polytope $\Delta^\circ$, along with a triangulation ${\rm tr}(\Delta^\circ)$, that is a list of cones which constitute the fan $\Sigma\subset N_{\mathbb{R}}$.

There are several ways to construct the toric ambient space $A=\P_\Sigma$ from the data encoded in the fan $\Sigma$. Here we will focus on the ``global construction" in terms of homogeneous coordinates which is best suited for our discussion of symmetries. In this approach, $\P_\Sigma$ is found as a quotient of an appropriate ``upstairs'' space by a group ${\cal G}$. The starting point for this construction is the space $\mathbb{C}^n$ with coordinates $x_i$ which are in one-to-one correspondence with the ray generators ${\bf v}_i$ of $\Sigma$. From this space we need to remove the exceptional set $Z(\Sigma)$ which is obtained as follows. Consider index sets $I=\{i_1,\ldots ,i_p\}\subset\{1,\ldots ,n\}$ for which the corresponding generators $\{{\bf v}_i\, |\, i \in I\}$ do not share a common cone in $\Sigma$. For any such $I$, this property is preserved if we enlarge $I$ by adding further indices, so that there is a minimal choice of such index sets generating all of them. This choice is related to the generators of the Stanley-Reisner-ideal and we denote the set of such minimal index sets by ${\cal I}$. With the corresponding zero loci
\begin{equation}
 Z(I)=\{{\bf x}\in\mathbb{C}^n\,|\,x_i=0\;\;\forall\, i\in I\}\subset\mathbb{C}^n\; ,\qquad Z(\Sigma)=\bigcup_{I\in{\cal I}}Z(I)\; , \label{Zdef}
\end{equation} 
we can construct the ``upstairs" space of the quotient we want to describe by removing the total zero locus $Z(\Sigma)$ from $\mathbb{C}^n$, that is, by
\begin{equation}
 B=\mathbb{C}^n-Z(\Sigma)\; . \label{Bdef}
\end{equation} 

To obtain the toric variety $A$, this upstairs space needs to be divided by the toric group ${\cal G}\subset({\mathbb{C}^*})^n$ whose elements ${\bf t}=(t_1,\ldots ,t_n)\in  (\mathbb{C}^*)^n$ act on $\mathbb{C}^n$ multiplicatively as $x_i\rightarrow t_i x_i$. The group ${\cal G}$ consists of all ${\bf t}\in (\mathbb{C}^*)^n$ which satisfy
$$
 \prod_{i=1}^nt_i^{v_{ir}}=1\;\;\mbox{ for all }\; r=1,\ldots ,4\; , 
\label{Gdef}
$$
and, in general, has a continuous part isomorphic to $(\mathbb{C}^*)^{n-4}$ and a finite part. Note that ${\cal G}$ only depends on the lattice vectors ${\bf v}_i$ but not on how they are grouped into cones. Consequently, ${\cal G}$ is the same for any triangulation of a given polytope. The continuous part of ${\cal G}$ can be explicitly obtained from the linear relations 
\begin{equation}
 \sum_{i=1}^nq_{ri}\,{\bf v}_i=0 \label{Qdef}
\end{equation}
between the generators ${\bf v}_i$. The coefficients $q_{ri}$ in these relations form a $(n-4)\times n$ matrix $Q$, the so-called charge matrix, whose columns we denote by ${\bf q}_i$. The continuous part of the toric group can then be written as
\begin{equation}
{\cal G}_{\rm cont}=\{({\bf s}^{{\bf q}_1},\ldots ,{\bf s}^{{\bf q}_n})\,|\,{\bf s}\in 
({\mathbb{C}^*})^{n-4}\}\cong (\mathbb{C}^*)^{n-4}\; ,
\end{equation}
where ${\bf s}=(s_1,\ldots ,s_{n-4})$ are the group parameters and we have used the short-hand notation ${\bf s}^{{\bf q}_i}=\prod_{r=1}^{n-4}s_r^{q_{ri}}$.

The toric variety $A = \P_\Sigma$ associated to the fan $\Sigma$ can now be constructed as the quotient
\begin{equation}\label{eq:toricasqoutient}
 A=\frac{B}{{\cal G}}=\frac{\mathbb{C}^n-Z(\Sigma)}{{\cal G}}\; .
\end{equation}

With the ambient space in hand, the family of polynomials defining the TCY manifolds $X\subset A$ is finally given by
\begin{equation}\label{eq:cydefequn}
 p = \sum_{{\bf m} \in \Delta \cap M} c_{\bf m} \prod_{{\bf v}_i \in \Delta^\circ  \cap N \setminus \{0\}} x_i^{\langle {\bf m},{\bf v}_i\rangle + 1 } \, .
\end{equation}
for a choice of complex constants $c_{\bf m}$. The sum in this expression runs over all lattice point of the Newton polytope $\Delta$ which can be obtained from $\Delta^\circ$ using Eq.~\eqref{polydual}. The Hodge numbers $h^{1,1}(X)$ and $h^{2,1}(X)$ of $X$ can also be computed from the data encoded by the polytopes $\Delta$ and $\Delta^\circ$ using the well-known results from Ref.~\cite{Batyrev:1994hm}.

The following is an example of a global TCY construction, following the route outlined above. This example will also be useful in order to illustrate our procedure for classifying symmetries and we will repeatedly come back to it as we go along.\\[3mm]
\underline{\bf Example:} Let us consider a polytope $\Delta^\circ$ with five vertices ${\bf v}_1,\ldots, {\bf v}_5$ given by the columns of the matrix
\begin{equation}
({\bf v}_1,\ldots ,{\bf v}_5)=
\left(\begin{array}{rrrrr}
-1 & -1 &  1  & -1  & -1 \\
 0 &  0 & -1  &  0  &  4 \\
 0 &  0 &  0  &  2  & -2 \\
 0 &  1 &  0  &  0  & -1
%1    2    3     4     5
\end{array}\right)\, .
\end{equation}
Furthermore, this polytope also contains the integral points ${\bf v}_6=(-1, 2, -1, 0)$ and ${\bf v}_7 = (-1, 0, 1, 0)$, which are interior to one-dimensional faces, the origin $(0,0,0,0)$, which is the unique interior point, as well as the integral point ${\bf v}_8 = (-1, 1, 0, 0)$, which is interior to a three-dimensional face and can, hence, be neglected. The triangulation is unique and reads explicitly
\begin{eqnarray}
 {\rm tr}(\Delta^\circ)&=&\left\{ \left[{\bf v}_5,{\bf v}_6,{\bf v}_1,{\bf v}_7 \right], \left[{\bf v}_5,{\bf v}_6,{\bf v}_1,{\bf v}_3\right], \left[{\bf v}_5,{\bf v}_6,{\bf v}_7,{\bf v}_4\right],
\left[{\bf v}_5,{\bf v}_6,{\bf v}_4,{\bf v}_3\right], \right.\nonumber\\
&&\;\,\left[{\bf v}_5,{\bf v}_1,{\bf v}_7,{\bf v}_3\right], \left[{\bf v}_5,{\bf v}_7,{\bf v}_4,{\bf v}_3\right],\left[{\bf v}_6,{\bf v}_1,{\bf v}_7,{\bf v}_2\right], \left[{\bf v}_6,{\bf v}_1,{\bf v}_2,{\bf v}_3\right], \nonumber\\
&&\;\,\left.\left[{\bf v}_6,{\bf v}_7,{\bf v}_4,{\bf v}_2\right],\left[{\bf v}_6,{\bf v}_4,{\bf v}_2,{\bf v}_3\right], \left[{\bf v}_1,{\bf v}_7,{\bf v}_2,{\bf v}_3\right], \left[{\bf v}_7,{\bf v}_4,{\bf v}_2,{\bf v}_3\right] \right\}
\end{eqnarray}
It follows that the zero set associated to this triangulation is described by the index set
\begin{equation}
 {\cal I} = \{\{1,4\},\{2,5\},\{3,6,7\}\} \label{Iex}
\end{equation}
and is explicitly given by
\begin{eqnarray}
Z(\Sigma) &=& Z(\{1,4\})\cup Z(\{2,5\})\cup  Z(\{3,6,7\})\nonumber\\
&=& \{x_1=x_4=0\}  \cup\{x_5=x_2=0\} \cup   \{x_6=x_7=x_3=0\}\; .
\end{eqnarray}
From the above vertices and Eq.~\eqref{Qdef} the charge matrix (with the columns ordered as ${\bf v}_1,{\bf v}_2,\ldots ,{\bf v}_7$) can be determined as
%[z5], [-2*z5 + 1/2*z6], [z4], [-2*z4 + 1/2*z6], [z4], [z5], [z6]]
\begin{equation}\label{Qex}
Q= \left(
\begin{array}{rrrrrrr}
 1 &  0 & 0 & 1 &  0 & 0  & -2 \\ %4 
 0 &  1 & 0 & 0  & 1 & -2 & 0 \\ %5
 0 &  0 & 2 & 0  & 0 & 1  & 1    %6 *2
\end{array}  
\right) \, .
\end{equation}
The polar dual polytope $\Delta$ can be obtained from Eq.~\eqref{polydual} and it has the five vertices given by the columns of the matrix
\begin{equation}
 \left(\begin{array}{rrrrr}
-7 & 1 & 1 & 1 & 1 \\
-6 & 0 & 2 & 2 & 2 \\
-4 & 0 & 0 & 0 & 4 \\
-8 & 0 & 0 & 8 & 0
\end{array}\right) \, .
\end{equation}
The defining polynomial, $p$, of the TCY manifold can then be computed from Eq.~\eqref{eq:cydefequn} by summing over all the lattice points of $\Delta$. For example, the five monomials in $p$ which arise from the above vertices of $\Delta$ are explicitly given by
\begin{equation}
x_1^8 x_7^4,\hspace{.5cm} x_3^2 , \hspace{.5cm}x_5^8x_6^4 , \hspace{.5cm}x_2^8 x_6^4, \hspace{.5cm}x_4^8 x_7^4\; . \label{vertmon}
\end{equation}
However, the most general polynomial $p$ has many more terms from the other lattice points of $\Delta$. From the standard formulae, the Hodge numbers of the so-defined TCY manifold $X=\{p=0\}\subset A=\mathbb{P}_\Sigma$ are given by $h^{1,1}(X)=3$ and $h^{2,1}(X)=83$.
\shortline

\subsection{Constraints on the order of symmetry groups}\label{sect:possiblegroups}
Freely-acting symmetries $\Gamma$ of a CY manifold $X$ are severely constrained since certain topological indices need to be divisible by the group order $|\Gamma|$. The simplest example for such an index is the Euler number, $\eta(X)$, of the manifold. The Euler numbers of $X$ and its quotient $X/\Gamma$ by a freely-acting symmetry $\Gamma$ are related by $\eta(X/\Gamma)=\eta(X)/|\Gamma|$ and this shows that $\eta(X)$ must indeed be divisible by the group order $|\Gamma|$. For manifolds with $\eta(X)\neq 0$ this already implies a finite number of possible group orders, and, hence, a finite number of possible freely-acting groups. 

When the manifold $X$ is embedded into an ambient space $A$ with normal bundle $N$, as is the case for TCY manifolds, there are further indices which need to be divisible by $|\Gamma|$. From Ref.~\cite{Candelas:1987du}, one class of such indices is given by the indices of the bundles $N^k\otimes TX^l$, where $k,l\in\mathbb{Z}^{\geq 0}$. Explicitly, they are given by
\begin{equation}
 \chi_{k,l}:={\rm ind}(N^k\otimes TX^l)=\int_X{\rm td}(TX)\wedge{\rm ch}(N^k\otimes TX^l) \label{indkl}
\end{equation}
where ${\rm td}(TX)=1+\frac{1}{12}{\rm c}_2(TX)$ is the Todd class of the CY manifold $X$. These quantities need to be divisible by $|\Gamma|$ for all $(k,l)$ but, as has been shown in Ref.~\cite{Candelas:1987du}, this is the case if and only if $|\Gamma|$ is a divisor of the four indices $\chi_{k,l}$ with $(k,l)\in\{(0,1),(1,0),(2,0),(3,0)\}$. 

Another relevant set of indices~\cite{Candelas:1987du} are the modified signatures of the bundles $N^k\otimes TX^l$, defined by
\begin{equation}
 \sigma_{k,l}:=\sigma(N^k\otimes TX^l)=\int_XL(TX)\wedge\widetilde{\rm ch}(N^k\otimes TX^l)\; , \label{sigkl}
\end{equation} 
where $L(TX)=1-\frac{1}{3}{\rm c}_2(TX)$ is the L-genus of the CY manifold and $\widetilde{\rm ch}$ is the Chern character with the curvature $R$ replaced by $2R$. It has been argued in Ref.~\cite{Candelas:1987du} that the only new condition from these signatures arises from $\sigma_{1,1}$. 

In summary, in order to strengthen the constraint on $|\Gamma|$ from the Euler number (or, in the case $\eta(X)=0$, to obtain constraints in the first place), we should consider divisibility by $|\Gamma|$ of the bundle indices $\chi_{0,1}$, $\chi_{1,0}$, $\chi_{2,0}$ and $\chi_{3,0}$ in Eq.~\eqref{indkl}  and divisibility of the signature $\sigma_{1,1}$ in Eq.~\eqref{sigkl}. This leads to a finite list of possible group orders $|\Gamma|$ and, therefore, to a finite list of possible freely-acting groups $\Gamma$ which can be systematically analysed based on the algorithm described below. For our example this works as follows.\\[3mm]
\underline{\bf Example:} Let us work out the above indices for our standard example. For the Euler number it follows that $\eta(X)=2(h^{1,1}(X)-h^{2,1}(X))=-160$. Using the explicit formulae in Ref.~\cite{Candelas:1987du} we find for the other indices
\begin{equation}
 \chi_{0,1} = -80 \;,\quad \chi_{1,0}=104\;\quad\chi_{2,0}=720\;,\quad \chi_{3,0}=2360\;,\quad \sigma_{1,1} = -1280\; .
\end{equation} 
The greatest common divisor of these indices is $8$ and, hence, the possible orders of freely-acting groups for the manifold are $|\Gamma|\in\{2,4,8\}$.
\shortline
As mentioned before, we are not aware of similar constraints on the order of non-freely acting symmetries. In this case, the best we can do is to apply our algorithm to all finite groups up to a given maximal order.

\subsection{Symmetry group of the ambient space}
\label{symzeroset}

With a description of the ambient toric space in hand, we are now in a position to find the subgroup, $G_A$, of the ambient space automorphism group which is linearly realized on the homogeneous coordinates $x_i$. For simplicity, we will simply refer to $G_A$ as the toric symmetry group. More details on the full automorphism group of a toric variety and the toric symmetry group, including proofs for the statements in this section, can be found in Appendix \ref{theoryfree}.

As mentioned above, we will only consider linear transformations on the homogeneous coordinates $x_i$, that is, elements of ${\rm Gl}(n,\mathbb{C})$. Our first task is to find the sub-group $G_B\subset {\rm Gl}(n,\mathbb{C})$ of linear automorphisms of the ``upstairs'' space $B$, defined in Eq.~\eqref{Bdef}. Then, we will study how such actions descend to the quotient $A$ \eqref{eq:toricasqoutient}, which we will refer to as the ``downstairs'' space. 

We recall that the structure of $B$ is encoded in the set ${\cal I}$ which contains index sets $I\subset\{1,\ldots ,n\}$, one for each zero locus $Z(I)$ which is removed from $\mathbb{C}^n$ in order to obtain $B$. While the index sets $I\in{\cal I}$ can overlap, that is corresponding zero sets $Z(I)$ can have a non-trivial intersection, it is straightforward to define a refinement of ${\cal I}$, denoted ${\cal J}$, whose index sets form a partition of $\{1,\ldots ,n\}$. Specifically, we can define an equivalence relation $\sim$ on $\{1,\ldots ,n\}$ by
\begin{equation}
 i\sim j\quad:\;\Longleftrightarrow\quad i,j\mbox{ are contained in the same sets }I\in{\cal I}\; .
\end{equation} 
The refinement ${\cal J}$ then consists of the equivalence classes under this relation.\\[3mm]
\underline{\bf Example:} The index set ${\cal I}$ for our example~\eqref{Iex} is a clean partition of $\{1,\ldots ,7\}$ and for this reason we have
\begin{equation}
 {\cal J}={\cal I} = \{\{1,4\},\{2,5\},\{3,6,7\}\}\; . \label{IJex}
\end{equation} 
The equality of the two sets is a special feature of this particular example. In general, the index sets ${\cal I}$ and ${\cal J}$ can be different.
\shortline
There are a number of obvious sub-groups of $G_B$ which can be written down immediately. These include the discrete permutation groups
\begin{equation}
 S=\{\sigma\in S_n\,|\,\sigma(I)=I,\;\forall\, I\in{\cal I}\}\subset S_n\; ,\qquad R=\{\sigma\in S_n\,|\,\sigma(I)\in{\cal I},\;\forall I\in{\cal I}\}\subset S_n\; ,
\end{equation}
which, via the obvious embedding $S_n\hookrightarrow {\rm Gl}(n,\mathbb{C})$, can also be thought off as sub-groups of ${\rm Gl}(n,\mathbb{C})$. The first of these groups is the stabiliser group of the zero sets labelled by $I\in{\cal I}$, that is, it leaves each component, $Z(I)$, of the zero set unchanged. The second group, $R$, maps the zero sets into each other. Clearly, both $S$ and $R$ leave the total zero set $Z(\Sigma)$ invariant and are, hence, sub-groups of $G_B$.

It can be shown (see Appendix~\ref{theoryfree}) that
\begin{equation}
S=\bigotimes_{J\in{\cal J}}S(J)\;,\qquad R\cong P\ltimes S\; , \label{SPres}
\end{equation}
where $S(J)$ is the group of permutations on the set $J$, and $P=R/S$. The group $P$ can be worked out explicitly by finding all permutation in $S_n$ which map the index sets $J\in{\cal J}$ into each other and which preserve the ``natural" order of indices within each set $J$. In other words, $P$ should be thought of as the group which permutes zero sets.

Another obvious sub-group of $G_B$ is given by
\begin{equation}
 H_B=\{g\in G_B\,|\, g(Z(I))=Z(I),\;\forall I\in{\cal I}\}\subset G_B\; , 
\end{equation}
that is, the sub-group which leaves the components $Z(I)$ of the zero set invariant individually. It can be shown (see Appendix~\eqref{theoryfree}) that
\begin{equation}
 H_B=\bigotimes_{J\in{\cal J}}{\rm Gl}(J,\mathbb{C})\; , \label{HBres}
\end{equation}
where ${\rm Gl}(J,\mathbb{C})$ denotes the general linear group acting on the coordinates $x_j$, where $j\in J$. It turns out that the full upstairs symmetry group $G_B$ can be expressed in terms of the above groups and is, in fact, given by
\begin{equation}
  G_B=P\ltimes H_B\; . \label{GBres}
\end{equation}  
In summary, the upstairs symmetry group $G_B$ consists of two parts, the group $H_B$ which consists of blocks of general linear groups whose action leaves the components of the zero sets $Z(I)$ unchanged and a permutation part, $P$, which permutes components of the zero set.\\[3mm]
\underline{\bf Example:} With the set ${\cal J}$ for our example in Eq.~\eqref{IJex} and Eq.~\eqref{SPres} it follows that the permutation groups $S$ and $P$ are given by
\begin{eqnarray}
 S&=&S(\{1,4\})\times S(\{2,5\})\times  S(\{3,6,7\})\cong S_2\times S_2\times S_3\label{Sex}\\
 P&=&\{{\rm id},((2,1),(4,5),(3),(6),(7))\}\cong S_2\label{Pex}
 \end{eqnarray}
 where the permutation which generates $P$ (given above in cycle notation) swaps the two blocks $\{2,5\}$ and $\{1,4\}$. From Eq.~\eqref{HBres}, we find for the continues part of the upstairs symmetry group 
\begin{equation}
 H_B={\rm Gl}(\{1,4\},\mathbb{C})\times {\rm Gl}(\{2,5\},\mathbb{C})\times {\rm Gl}(\{3,6,7\},\mathbb{C})\; .
\end{equation} 
In accordance with this structure, let us order the coordinates as $((1,4),(2,5),(3,6,7))$. The elements $h\in H_B$ can be represented as $7\times 7$ matrices which, relative to this ordering of coordinates, have the form
\begin{equation}
 h=\left(\begin{array}{ccc}h_2&0&0\\0&\tilde{h}_2&0\\0&0&h_3\end{array}\right)\; . \label{hmatex}
\end{equation}
Here $h_2$, $\tilde{h}_2$ are general linear $2\times 2$ matrices, acting on the coordinates $(1,4)$ and $(2,5)$, respectively, while $h_3$ is a general linear $3\times 3$ matrix acting on the coordinates $(3,6,7)$. For the same ordering of the coordinates, the matrix version of the generator of $P$ in Eq.~\eqref{Pex} can be written as
\begin{equation}
 p=\left(\begin{array}{ccc}0&\mathbbm{1}_2&0\\\mathbbm{1}_2&0&0\\0&0&\mathbbm{1}_3\end{array}\right)\; . \label{pmatex}
\end{equation} 
The upstairs symmetry group $G_B=P\ltimes H_B$ can then be viewed as all matrices generated by the matrices in Eqs.~\eqref{hmatex} and \eqref{pmatex}.
\shortline
The symmetry group $G_A$ of the toric variety $A$ is the part of the upstairs symmetry groups $G_B$ which descends to the quotient $A=B/{\cal G}$. On general grounds, this part is given by the formula
\begin{equation}
 G_{A}=\frac{N_{G_{B}}({\cal G})}{{\cal G}}\; , \label{GAdef}
\end{equation} 
where $N_{G_{B}}({\cal G})$ denotes the normalizer of ${\cal G}$ within $G_B$. The task is to evaluate this formula for the upstairs symmetry group $G_B$ as given in Eq.~\eqref{GBres} and this is explicitly carried out in Appendix~\ref{theoryfree}. Here, we summarise the result which can be expressed in terms of two further index sets ${\cal K}$ and ${\cal L}$. The index set ${\cal K}$ is defined as containing the sets $K\subset\{1,\ldots ,n\}$ which label coordinates with the same charge ${\bf q}_i$. This means that the toric group ${\cal G}$ can also be written as 
\begin{equation}
  {\cal G}=\bigotimes_{K\in{\cal K}}{\cal G}_{{\bf q}_K}(K)\; . \label{Gres}
\end{equation}
where  ${\cal G}_{{\bf q}_K}(K)$ denotes the group which acts by multiplying the coordinates $x_k$ for $k\in K$ by ${\bf s}^{{\bf q}_K}$. The second index set, ${\cal L}$, is simply the refinement of the index sets ${\cal J}$ and ${\cal K}$, that is, it consists of all non-empty intersections $J\cap K$, where $J\in{\cal J}$ and $K\in{\cal K}$. In general, this will be a new index set, distinct from the ones introduced previously. However, for reflexive polytopes, which is our case of interest, it can been shown (see Appendix~\ref{theoryfree}) that ${\cal L}={\cal K}$.

The downstairs versions of the groups $R$ and $S$ are defined as
\begin{equation}
 R_A=R\cap N_{{\rm Gl}(n,\mathbb{C}^n)}({\cal G})\;,\qquad S_A=S\cap C_{{\rm Gl}(n,\mathbb{C}^n)}({\cal G})\; ,
\end{equation}
where $C_{{\rm Gl}(n,\mathbb{C}^n)}({\cal G})$ is the centralizer of ${\cal G}$ in ${\rm Gl}(n,\mathbb{C}^n)$. As before, we can write $R_A$ as the semi-direct product
\begin{equation}
R_A=P_A\ltimes S_A\;,\qquad P_A=R_A/S_A\; .
\end{equation}
In other words, the group $R_A$ consists of all the elements of $r\in R$ which normalize ${\cal G}$, that is, all $r\in R$ for which the equations
\begin{equation}
 {\bf s}^{{\bf q}_{r(i)}}=\tilde{\bf s}^{{\bf q}_i} \label{PAeqs}
\end{equation} 
have solutions $\tilde{\bf s}\in (\mathbb{C}^*)^{n-4}$ for all ${\bf s}\in (\mathbb{C}^*)^{n-4}$. Then $P_A$ contains permutations in $R_A$ which preserve the order of indices within each block $K\in{\cal K}$ but permute the blocks with one another. Hence, $P_A$ can also be viewed as a sub-group of $S({\cal K})$, the permutation group of the blocks $K\in{\cal K}$. With this notation the symmetry group of the toric space $A$ can be written as
\begin{equation}
 G_A=P_A\ltimes(H_A/{\cal G})\; ,\quad H_A=\bigotimes_{K\in{\cal K}}{\rm Gl}(K,\mathbb{C})\; . \label{GAres} 
\end{equation} 
As usual, the above semi-direct product is defined by the multiplication rule
\begin{equation}
 (p,h) (\tilde{p},\tilde{h}):=(p\tilde{p},\tilde{p}^{-1}h\tilde{p}\tilde{h}) \label{sdprod}
\end{equation}
for $p,\tilde{p}\in P_A$ and $h,\tilde{h}\in H_A/{\cal G}$. Evidently, this symmetry group has the same structure as its upstairs counterpart. There is a continuous part, $H_A/{\cal G}$, with blocks of general linear groups acting on the homogeneous coordinates in a way that is consistent with the structure of $A$, and a discrete part, $P_A$, which permutes these blocks. For our example, following the above steps leads to the following.\\[3mm]
\underline{\bf Example:} The charge matrix~\eqref{Qex} for our example shows that the index set ${\cal K}$ is given by
\begin{equation}
 {\cal K}=\{\{1,4\},\{2,5\},\{3\},\{6\},\{7\}\}\; .
\end{equation} 
This is indeed merely a refinement of the index set ${\cal J}$ in Eq.~\eqref{IJex} so that the intersection of ${\cal J}$ and ${\cal K}$ does indeed not lead to a new index set, in line with our earlier claim. If we write the charge vectors~\eqref{Qex} as
\begin{equation}
 {\bf q}_{\{1,4\}}=\left(\begin{array}{c}1\\0\\0\end{array}\right)\,,\; {\bf q}_{\{2,5\}}=\left(\begin{array}{c}0\\1\\0\end{array}\right)\,,\;
 {\bf q}_{\{3\}}=\left(\begin{array}{c}0\\0\\2\end{array}\right)\,,\;{\bf q}_{\{6\}}=\left(\begin{array}{r}0\\-2\\1\end{array}\right)\,,\;
 {\bf q}_{\{7\}}=\left(\begin{array}{r}-2\\0\\1\end{array}\right)\; ,
\end{equation} 
then, from Eq.~\eqref{Gres}, the toric group can be written as
\begin{equation}
 {\cal G}={\cal  G}_{{\bf q}_{\{1,4\}}}(\{1,4\})\times {\cal  G}_{  {\bf q}_{\{2,5\}}}(\{2,5\})\times {\cal  G}_{  {\bf q}_{\{3\}}}(\{3\})\times {\cal  G}_{  {\bf q}_{\{6\}}}(\{6\})\times {\cal  G}_{  {\bf q}_{\{7\}}}(\{7\})\; . \label{Gex}
\end{equation}
In practice, adopting the coordinate ordering $((1,4),(2,5),(3),(6),(7))$ as before, the elements of ${\cal G}$ are given by the $7\times 7$ matrices
\begin{equation}
 g={\rm diag}({\bf s}^{{\bf q}_{\{1,4\}}},{\bf s}^{{\bf q}_{\{1,4\}}},{\bf s}^{{\bf q}_{\{2,5\}}},{\bf s}^{{\bf q}_{\{2,5\}}},{\bf s}^{{\bf q}_{\{3\}}},{\bf s}^{{\bf q}_{\{6\}}},{\bf s}^{{\bf q}_{\{7\}}})\; , \label{gmatex}
\end{equation} 
with ${\bf s}=(s_1,s_2,s_3)$ the three parameters of the toric group. From Eq.~\eqref{GAres}, we also find 
\begin{equation}
 H_A={\rm Gl}(\{1,4\})\times {\rm Gl}(\{2,5\})\times {\rm Gl}(\{3\})\times {\rm Gl}(\{6\})\times {\rm Gl}(\{7\})\; , \label{HAex}
\end{equation} 
with associated matrices $h\in H_A$ of the form
\begin{equation}
h=\left(\begin{array}{ccccc}h_2&0&0&0&0\\0&\tilde{h}_2&0&0&0\\0&0&h_1&0&0\\0&0&0&\tilde{h}_1&0\\0&0&0&0&h_1'\end{array}\right)\; , \label{hAmatex}
\end{equation}
where $h_2$, $\tilde{h}_2$ are general linear $2\times 2$ matrices and $h_1,\tilde{h}_1,h_1'\in\mathbb{C}^*$. The permutation part $P_A$ can be determined by solving the Eqs.~\eqref{PAeqs} for this case, using the charge matrix~\eqref{Qex}. The result is that $P_A$ is generated by the matrix 
\begin{equation}
 \tilde{p}=\left(\begin{array}{ccccc}0&\mathbbm{1}_2&0&0&0\\\mathbbm{1}_2&0&0&0&0\\0&0&1&0&0\\0&0&0&0&1\\0&0&0&1&0\end{array}\right)\; , \label{ptex}
\end{equation} 
which simultaneously permutes the blocks $(1,4)$ with $(2,5)$ and $(6)$ with $(7)$. Hence $P_A\cong S_2$.
\shortline

\subsection{Construction of representations} \label{constrep}
Having determined the symmetry group $G_A$ of the toric ambient space, our next step is to study group monomorphisms $R:\Gamma\rightarrow G_A$ from a finite group $\Gamma$. Given the structure of $G_A$ in Eq.~\eqref{GAres}, for  $\gamma\in\Gamma$, we can always write
\begin{equation}
 R(\gamma)=(\pi(\gamma),r(\gamma))\quad\mbox{where}\quad\pi:\Gamma\rightarrow P_A\; ,\quad r:\Gamma\rightarrow H_A/{\cal G}\; .
\end{equation} 
The multiplication rule~\eqref{sdprod} in $G_A$ means that the maps $\pi$ and $r$ must satisfy
\begin{equation}
\pi(\gamma\tilde{\gamma})=\pi(\gamma)\pi(\tilde{\gamma})\; ,\qquad r(\gamma\tilde{\gamma})=\pi(\tilde{\gamma})^{-1}r(\gamma)\pi(\tilde{\gamma}
)r(\tilde{\gamma})\; ,
\end{equation}
and, hence, $\pi:\Gamma\rightarrow P_A$ is a group homomorphism (indeed a permutation representation given that $P_A$ can be viewed as a sub-group of the permutation group $S({\cal K}$) and $r:\Gamma\rightarrow H_A/{\cal G}$ is a $\pi$-homomorphism. Therefore, we proceed by first generating the permutation representations $\pi$ and then, for each $\pi$, the corresponding $\pi$-representations $r$. 

Permutation representations can be obtained by a standard method available in the Mathematics literature, see for example Ref.~\cite{aschbacher2000finite}, particularly Section 2, for details. Broadly, this method works as follows. For a permutation representation $\pi:\Gamma\rightarrow S({\cal K})$, where $S({\cal K})$ is the permutation group of the blocks $K\in{\cal K}$, the set ${\cal K}$ splits into subsets ${\cal K}_i$ on each of which $\pi$ acts transitively.  For an $x\in{\cal K}_i$ we denote by ${\cal H}_i\subset \Gamma$ the stabiliser sub-group of $x$. Then the coset $\Gamma/{\cal H}_i$ and ${\cal K}_i$ can be identified via the map ${\cal H}_i\gamma\rightarrow \gamma x$ for $\gamma\in \Gamma$. The action of the permutation representation $\pi$ on ${\cal K}_i$ can now be described, via this identification, by a right-multiplication on the cosets in $\Gamma/{\cal H}_i$. Hence, in essence, all that is required to construct the permutation representations $\pi:\Gamma\rightarrow S({\cal K})$ is knowledge of all the sub-groups of $\Gamma$, something that can be worked out by standard group theory methods. From those representations $\pi$ we then have to select the ones whose image is contained in $P_A\subset S({\cal K})$. Note that, while we are asking for the full representation $R$ to be injective, this does not necessarily have to be the case for $\pi$.

The second step is to construct the maps $r:\Gamma\rightarrow H_A/{\cal G}$ for each $\pi$ and there are two technical complications we have to resolve in this context. First, we have to deal with the fact that $r$ is, in general, a $\pi$-homomorphisms rather than a regular group homomorphism and, secondly, that $r$ corresponds not to a linear but a (multi-)projective representation, since its target space is not simply $H_A$ but the quotient $H_A/{\cal G}$. 

To disentangle these two problems, let us first consider the case where $\pi$ is the trivial representation so that
\begin{equation}
 r:\Gamma\rightarrow \frac{H_A}{\cal G}=\frac{\bigoplus_{K \in {\cal K}} {\rm Gl}(K,\C)}{\cal G} \label{rmap}
\end{equation}
is actually a group homomorphism. There is a standard method to classify projective representations $\rho:\Gamma\rightarrow {\rm Gl}(n,\mathbb{C})/\mathbb{C}^*$ of a group $\Gamma$ using the group's Schur cover, $\hat{\Gamma}$. The Schur cover is a certain central extension of the original group and can be computed purely from group theory. Then, all projective representations $\rho$ of $\Gamma$ can be obtained from linear representations $\hat{\rho}:\hat{\Gamma}\rightarrow {\rm Gl}(n,\mathbb{C})$ of the Schur cover by simply projecting to the quotient (see, for example, Ref.~\cite{aschbacher2000finite} or the summary in Appendix~ \ref{sect:schurcoversstandard} for details). This method can be trivially generalised to the multi-projective case, that is, to homomorphisms of the form
\begin{equation}
 \rho:\Gamma\rightarrow  \bigoplus_{K \in {\cal K}} Gl(K,\C)/\C^* \, .
\end{equation} 
Specifically, we can study all linear representations $\hat{\rho}:\hat{\Gamma}\rightarrow  \bigoplus_{K \in {\cal K}} Gl(K,\C)$ of the Schur cover group and then project to the quotient in order to obtain all such multi-projective representations $\rho$. These are not quite yet the representations $r$ we are interested in, since the toric group ${\cal G}$ in Eq.~\eqref{rmap} is usually a genuine sub-group of $(C^*)^{|{\cal K}|}$. However, we can deal with this complication by simply selecting from the representations $\hat{\rho}$ those representations $\hat{r}$ which lead to well-defined homomorphisms upon taking the quotient by ${\cal G}$.\\[3mm]
\underline{\bf Example:} Let us illustrate the discussion so far with or standard example, using the group $\Gamma=\mathbb{Z}_2=\{1,-1\}$ for simplicity. Note that, from our previous index analysis, $|\Gamma|=2$ is indeed a possible order of a freely-acting group for this TCY manifold. 

The permutation part, $P_A$, of the ambient space symmetry group is isomorphic to $S_2$ so there are two possible choices for the representation $\pi$. In line with our simplifying assumption above we will first discuss the case where $\pi$ is trivial. (The other case will be exemplified below.) The Schur cover of $\mathbb{Z}_2$ is $\mathbb{Z}_2$ so we can simply study linear representations $\hat{r}:\mathbb{Z}_2\rightarrow H_A$, for $H_A$ as in Eq.~\eqref{HAex}. There are many possible choices and one illustrative example, for the coordinate ordering $((1,4),(2,5),(3),(6),(7))$, is
\begin{equation}
 \pi(\gamma)=\mathbbm{1}_7\;,\qquad \hat{r}(1)=\mathbbm{1}_7\;,\qquad \hat{r}(-1)={\rm diag}(1,-1,1,-1,-1,-1,1)\; . \label{symm3}
\end{equation} 
Clearly, $\hat{r}(-1)$ chosen in this way normalises the toric group ${\cal G}$ in Eq.~\eqref{Gex} and, hence, this representation descends to a well-defined homomorphism $r:\mathbb{Z}_2\rightarrow H_A/{\cal G}$.
\shortline

We are now ready to describe the algorithm for constructing all $\pi$-representations $r$ in the general case. We start with one of the representations $\pi:\Gamma\rightarrow P_A$ constructed as described above and recall that $P_A$ can be viewed as a subgroup of $S({\cal K)}$, the permutation group of the blocks $K\in{\cal K}$. In general, there are various orbits under the action of $\pi$ but in order to simplify the discussion let us assume that there is only a single orbit. The multi-orbit case is a straightforward generalization. Let us denote the blocks in this single orbit by $K_i\in{\cal K}$, where $i=1,\ldots ,b$, and their stabiliser sub-groups under the representation $\pi$ by $\Gamma_i\subset\Gamma$. We can think of the representation $R$ as being given in the following form
\begin{equation}
 R(\gamma)=\pi(\gamma){\rm diag}(r_1(\gamma),\ldots ,r_b{\gamma})\; ,
\end{equation}
where the $r_i(\gamma)$ act on the blocks $K_i$. The maps $r_i$ can now be restricted to $\Gamma_i$ so that they induce projective representations $\tilde{r}_i:\Gamma_1\rightarrow{\rm Gl}(K_i,\mathbb{C})/\mathbb{C}^*$. Focusing on $\tilde{r}_1$, we can construct all such representations from linear representations $\hat{r}_1:\hat{\Gamma}_1\rightarrow {\rm Gl}(K_1,\mathbb{C})$ of the Schur cover $\hat{\Gamma}_1$, as discussed above. The point is now that $\tilde{r}_1$ already determines the full $\pi$ homomorphism $r$, as is explained in detail in Appendix~\ref{app:toricpitwrep}. Broadly, this works as follows. First choose $\gamma_i\in\Gamma$ which map the blocks $K_i$ to the block $K_1$ under the representation $\pi$, that is, which satisfy $\pi(\gamma_i)(i)=1$. It can then be shown that, for every block $i$,  $\gamma\in\Gamma$ can be written uniquely in the form
\begin{equation}
 \gamma=\gamma_{\pi(\gamma)(i)}h\gamma_i^{-1}\; , \label{decompo}
\end{equation}
where $h\in\Gamma_1$. Then it follows that
\begin{equation}
 r_i(\gamma)=\tilde{r}_1(h)\; , \label{rres}
\end{equation}
so that all $r_i$ and, hence, $r$ is fixed in terms of $\tilde{r}_1$.\\[3mm]
\underline{\bf Example:} Coming back to our standard example, with group $\Gamma=\mathbb{Z}_2=\{1,-1\}$, we would now like to discuss a case with a non-trivial choice of $\pi$. The only non-trivial choice is, in fact, $\pi(1)=\mathbbm{1}_7$ and $\pi(-1)=\tilde{p}$, with the matrix $\tilde{p}$ given in Eq.~\eqref{ptex}. There are three orbits under the action of $\pi$ defined in this way, namely ${\cal K}_1=\{\{1,4\},\{2,5\}\}$, ${\cal K}_2=\{\{6\},\{7\}\}$ and ${\cal K}_3=\{\{3\}\}$.

Let us start with the orbit ${\cal K}_1$. The stabilizer group $\Gamma_1$ of the block $\{1,4\}$ is the trivial group and for the group elements $\gamma_i$ in Eq.~\eqref{decompo} we can choose $\gamma_{\{1,4\}}=1$ and $\gamma_{\{2,5\}}=-1$. This means that the correspondence between $\gamma\in\Gamma$ and $h\in \Gamma_1$ established by Eq.~\eqref{decompo} is given by $1\rightarrow 1$ and $-1\rightarrow 1$. As a result, from Eq.~\eqref{rres}, $r$ must be trivial on the orbit ${\cal K}_1$. A similar argument shows that $r$ must also be trivial on the orbit ${\cal K}_2$. 

Finally, for the orbit ${\cal K}_3$ the stabiliser group for the only block $\{3\}$ in this orbit is the full group $\Gamma_1=\Gamma$ and we can, for example, choose $-1\in\Gamma$ to act as $-1$ in this direction. This means the full homomorphism $R$ for the ordering $((1,4),(2,5),(3),(6),(7))$ is specified by
\begin{equation}
\pi(-1)=\tilde{p}\;,\qquad r(-1)={\rm diag}(1,1,1,1,-1,1,1)\; ,
\end{equation} 
with the matrix $\tilde{p}$ in Eq.~\eqref{ptex}. Note that this action is not freely acting.  
\shortline
To conclude our discussion, we summarise the algorithm to construct the monomorphisms $R:\Gamma\rightarrow G_A$.
\begin{itemize}
\item Find all representations $\pi:G\rightarrow P_A$, not necessarily faithful.
\item Focus on one of the representations $\pi$ and find all orbits of the blocks under its action.
\item For each orbit pick one block, $K_1$, and consider its stabiliser group $\Gamma_1$.
\item Study all projective representations $\tilde{r}_1:\Gamma_1\rightarrow {\rm Gl}(K,\mathbb{C})/\mathbb{C}^*$ by using the Schur cover $\hat{\Gamma}_1$ of $\Gamma_1$. 
\item For each such representation $\tilde{r}_1$, re-construct $r$ on each orbit and, hence, the full representation $r$.
\item Check if the combination of $\pi$ and $r$  does indeed define a monomorphism $R:\Gamma\rightarrow G_A$.
\end{itemize}

\subsection{Smoothness of hypersurfaces} \label{sect:smoothness}
A group monomorphism $R:\Gamma\rightarrow G_A$, constructed as described in the previous sub-section, provides an automorphism of the ambient space $A$ and it defines an action on the homogeneous coordinates $x_i$ of $A$. For $R$ to give rise to an automorphism of the TCY manifold $X\subset A$ the defining polynomial $p$ of $X$ needs to be invariant under the action of $R$. The general invariant polynomial $p^{R(\Gamma)}$ under the action of $R$ can be found from a generic defining polynomial $p$ of $X$ by applying the Reynolds operator to $p$, that is, by
\begin{equation}
p^{R(\Gamma)}(x) = \sum_{\gamma \in \Gamma} p\left(R(\gamma)x\right) \; . \label{reynolds}
\end{equation}
In general, the invariant polynomials $p^{R(\Gamma)}$ have fewer independent coefficients than a general $p$ and it is not guaranteed that the generic TCY space $X^{R(\Gamma)}$ defined by $p^{R(\Gamma)}$ is smooth. In the following, we devise a method to check smoothness for a TCY space $X$ defined as the zero locus of a polynomial $p$.

So far, we have used the global description of a toric variety in terms of homogeneous coordinates. Alternatively, a toric variety may be glued together from patches, one for each cone of maximal dimension (see Appendix~ \ref{sect:patches} for details). Our smoothness check will be carried in this latter description of a toric variety, proceeding patch by patch.

Concretely, to each four-dimensional cone $\sigma$ in the fan $\Sigma$ we can associate a patch $V(\sigma)$ with affine coordinates $x_i^\sigma$, which correspond to the rays ${\bf v}_i$ of $\sigma$. The defining polynomial, $p^\sigma$, on this patch can be obtained from $p$ by setting the coordinates $x_i=x_i^\sigma$ for ${\bf v}_i\in\sigma$ and setting all other coordinates to one. For the toric varieties we are considering, the patches $V(\sigma)$ are either of the form $\mathbb{C}^4$ or $\mathbb{C}^4/G$ for a finite Abelian group $G$. In the latter case, the coordinates $x_i^\sigma$ should be interpreted as ``upstairs" coordinates on $\mathbb{C}^4$. In the case $V(\sigma)=\mathbb{C}^4$ the patch is clearly non-singular. For $V(\sigma)=\mathbb{C}^4/G$ and for fans constructed from fine triangulations of reflexive four-dimensional polytopes, the action of $G$ is such that it has fixed points only at the origin, $x_i^\sigma=0$, of the patch. As a practical matter, a patch $V(\sigma)$ is smooth precisely if the matrix formed by ${\bf v}_i\in\sigma$ has determinant $\pm 1$. For all non-smooth patches identified in this way, we have to check whether or not the origin lies on the TCY hypersurface, that is, if $p^\sigma(0)= 0$. If it is for at least one patch the TCY hypersurface is singular and we can stop.

Otherwise, we can proceed and check if the defining equation $p$  gives rise to singularities. This is done patch by patch, considering the Jacobi ideals
\begin{equation}
 I^\sigma=\langle p^\sigma,\frac{\partial p^\sigma}{\partial x_1^\sigma},\ldots,\frac{\partial p^\sigma}{\partial x_4^\sigma}\rangle\; . \label{jacobi}
\end{equation} 
If all ideals $I^\sigma$ have dimension $-1$ the TCY hypersurface is smooth, otherwise it is singular.\\[3mm]
\underline{\bf Example:} We would like to check smoothness for our standard example with symmetry $\Gamma=\mathbb{Z}_2$ and the symmetry action 
\begin{equation}
(x_1,x_2,x_3,x_4,x_5,x_6,x_7) \mapsto (x_1,x_2,-x_3,-x_4,-x_5,-x_6,x_7) \;, \label{symmex1}
\end{equation}
as given in Eq.~\eqref{symm3}. Specifically, we need to verify that a generic element in the family $X^{R(\Gamma)}$ is smooth. Our general algorithm performs this check for a polynomial $p^{R(\Gamma)}$ with random coefficients, as generated from Eq.~\eqref{reynolds}. A simple choice for $X^{R(\Gamma)}$ is given by the simple polynomial
\begin{equation}\label{eq:ex3hs}
 p^{R(\Gamma)} = x_3^2 + x_1^8 x_7^4 + x_4^8 x_7^4 + x_2^8 x_6^4 + x_5^8 x_6^4 \; ,  
\end{equation}
which contains only the monomials~\eqref{vertmon} which correspond to the vertices of the polytope $\Delta$. 

We first note that there are four singular patches on the ambient space which correspond to the cones
\begin{equation}
\left[{\bf v}_1,{\bf v}_5,{\bf v}_6,{\bf v}_7\right], \left[{\bf v}_4,{\bf v}_5,{\bf v}_6,{\bf v}_7\right], \left[{\bf v}_1,{\bf v}_2,{\bf v}_6,{\bf v}_7\right], \left[{\bf v}_2,{\bf v}_4,{\bf v}_6,{\bf v}_7\right] \, .
\end{equation}
As none of the associated loci meets our hypersurface \eqref{eq:ex3hs}, these singularities do not induce singularities on $X^{R(\Gamma)}$. 

To see that there are no singularities induced by the hypersurface equation~\eqref{eq:ex3hs}, note that this equation and all its partial derivatives can only vanish in one of the patches when $x_3=0$. As the whole situation is furthermore symmetric under swapping any of $x_1 \leftrightarrow x_4$, $x_2 \leftrightarrow x_5$ or $x_6 \leftrightarrow x_7$, we only need to check a single four-dimensional cone, which we can take to be the one generated by $\left[{\bf v}_1,{\bf v}_2,{\bf v}_3,{\bf v}_6\right]$. In the corresponding patch, the hypersurface equation becomes
\begin{equation}\label{eq:ex3hsaffine}
p^{\sigma}= x_3^2 + x_1^8  + 1 + x_2^8 x_6^4 + x_6^4 \, . 
\end{equation}
It can be checked that the Jacobi ideal, $I^\sigma$, of this equation has dimension $-1$ so that $X^{R(\Gamma)}$ is indeed smooth.
\shortline
Our method for checking smoothness can, hence, be summarised as follows:
\begin{itemize}
 \item Determine the most general invariant family of $R(\Gamma)$-invariant hypersurfaces $X^{R(\Gamma)}$ by finding the corresponding polynomials $p^{R(\Gamma)}$ from Eq.~\eqref{reynolds}.
 \item For each four-dimensional cone, $\sigma\in\Sigma$, with rays ${\bf v}_i$ introduce affine coordinates $x_i^\sigma$ and the affine version, $p^\sigma$, of the polynomial $p^{R(\Gamma)}$ on the patch $V(\sigma)$. 
 \item For each four-dimensional cone $\sigma$ of $\Sigma$ carry out the following.
 \begin{itemize}
 \item Check if $V(\sigma)$ is smooth by checking if the determinant of the matrix formed by ${\bf v}_i\in\sigma$ is $\pm 1$. If $V(\sigma)$ is not smooth check if the singularity at $x_i^\sigma=0$ lies on the TCY hypersurface, that is, check if $p^\sigma(0)=0$. If so the TCY hypersurface is singular and we can stop.
 \item Check if the Jacobi ideal $I^\sigma$ in Eq.~\eqref{jacobi} has dimension $-1$. If it does not the TCY hypersurface is singular and we can stop. 
 \end{itemize}
\item If the above checks have been passed for all four-dimensional cones $\sigma\in\Sigma$ the generic $R(\Gamma)$-invariant TCY hypersurface $X^{R(\Gamma)}$ is smooth. 
 \end{itemize}
 
\subsection{Fixed point freedom} \label{sec:fpsec}
In order to check the fixed points of $R(\Gamma)$ we return to the global description of the toric variety. The idea is to check the appropriate fixed point condition in the upstairs space $B$ in Eq.~\eqref{Bdef} rather than in the more complicated toric space $A=B/{\cal G}$.

To set this up, we denote by $q:B\rightarrow A$ the projection to the quotient space. We also require lifts $\hat{\Gamma}$ and $\hat{G}_A$ of the groups $\Gamma$ and $G_A$ to the upstairs space, together with projections $\Pi:\hat{\Gamma}\rightarrow\Gamma$ and $\nu:\hat{G}_A\rightarrow G_A$ and the upstairs version $\hat{R}:\hat{\Gamma}\rightarrow\hat{G}_A$ of the homomorphism $R$ such that $R\circ\Pi=\nu\circ\hat{R}$. (In practice, $G_A$ and $\hat{G}_A$ are generated by the same matrices, where the projective relations are taken into account for $G_A$ and ignored for $\hat{G}_A$.) Writing $\gamma=\Pi(\hat{\gamma})$, we have
\begin{equation}
 q\circ \hat{R}(\hat{\gamma})=R(\gamma)\circ q\; . \label{projcom}
\end{equation} 
For $\gamma\in\Gamma$ we define the downstairs fixed point set as
\begin{equation}
 F_\gamma=\{{\bf x}\in A\;|\;R(\gamma){\bf x}={\bf x}\}\; ,
\end{equation} 
We would like to compare this set with its upstairs counterpart for $\hat{\gamma}\in\hat{\Gamma}$, where $\gamma=\Pi(\hat{\gamma})$, defined as
\begin{equation} 
\hat{F}_{\hat{\gamma}}=\{\hat{\bf x}\in\hat{A}\;|\;\hat{R}(\hat{\gamma})(\hat{\bf x})\in{\cal G}\hat{\bf x}\}\; . \label{fixup}
\end{equation} 
Then Eq.~\eqref{projcom} implies that
\begin{equation}
 q(\hat{F}_{\hat{\gamma}})=F_{\Pi(\hat{\gamma})}\; ,
\end{equation} 
that is, we obtain the downstairs fixed point set simply by projecting its upstairs counterpart to the quotient. 

Of course we are not primarily interested in the fixed point set of $R(\Gamma)$ in $A$ but in its intersection with the TCY manifold $X\subset A$. We denote this intersection by ${\cal F}_\gamma :=F_\gamma\cap X$. By $\hat{X}\subset B$ we denote a lift of the TCY manifold such that $q(\hat{X})=X$. Then, the upstairs version of the fixed point set intersected with $\hat{X}$ given by $\hat{\cal F}_{\hat{\gamma}}=\hat{F}_{\hat{\gamma}}\cap\hat{X}$ projects down to ${\cal F}_\gamma$, that is
\begin{equation}
 q(\hat{\cal F}_{\hat{\gamma}})={\cal F}_\gamma\; .
\end{equation} 
What we want is for the Calabi-Yau manifold not to intersect the fixed point set, that is, we require ${\cal F}_\gamma$ to be empty for all $\gamma\in\Gamma$. From the previous equation this is the same as saying that $\hat{\cal F}_{\hat{\gamma}}$ is empty for all $\hat{\gamma}\in\hat{\Gamma}$. Therefore, from Eq.~\eqref{fixup}, we should, in a first instance, solve the equations
\begin{equation}
 \hat{R}(\hat{\gamma}){\bf z}=g{\bf z}\; , \label{fpeq}
\end{equation} 
where $g=({\bf s}^{q_{1}},\ldots ,{\bf s}^{q_n})$ is an element of the toric group ${\cal G}$, parametrized ${\bf s}=(s_1,\ldots , s_{n-4})$. We can think of these equations as defining an ideal $I$ in the ring $\mathbb{C}[{\bf z},{\bf s}]$. (It may be necessary to multiply some components of Eq.~\eqref{fpeq} with powers of some parameters $s_r$ if the corresponding charges $q_r^i$ are negative, in order to achieve polynomial form.) To this ideal we have to add the defining TCY equation in order to get the ideal which described $\hat{\cal F}_{\hat{\gamma}}$. Then, in order for the action to be fixed point free we need the dimension of this ideal to be $-1$ for all $\hat{\gamma}\in\hat{\Gamma}$. \\[3mm]
\underline{\bf Example:} Finally, let us check for our standard example that the action of $\mathbb{Z}_2$ symmetry given in Eq.~\eqref{symmex1} is indeed fixed point free.

For this, we need to find  all fixed loci on the ambient space and intersect with the hypersurface equation $p^{R(\Gamma)} = 0$, with $p^{R(\Gamma)}$ given in Eq.~\eqref{eq:ex3hs}. The fixed loci of the ambient space are $x_3 = x_4 = x_5 = x_6=0$, together with their images under any permutation $x_1 \leftrightarrow x_4$, $x_2 \leftrightarrow x_5$ or $x_6 \leftrightarrow x_7$. Since the situation is symmetric under these permutations, it again suffices to check one such locus only. Solving $p^{R(\Gamma)} = x_3 = x_4 = x_5 = x_6= 0 $ we find that $x_1 = 0$ or $x_7=0$, a locus which is contained within in the exceptional set $Z(\Sigma)$. In conclusion, the action of the symmetry~\eqref{symmex1} is indeed fixed point free. 
\shortline

\subsection{Algorithm of classification} \label{sec3}

We now summarise the entire algorithm classifying symmetries of TCY manifolds which are realised by a linear action on the toric ambient space. For each TCY hypersurface $X$ defined by a pair of reflexive polytopes $\Delta,\Delta^\circ$, this algorithm consists of four main steps:
\begin{itemize}
\label{it:shortalg}
\item[I.] {\bf Possible symmetry groups (Section~\ref{sect:possiblegroups}):} For freely-acting symmetries, compute the common divisors of the Euler number and the signatures of twisted bundles, which provides us with a finite list of possible groups orders and, hence, a finite list of groups $\Gamma$. For non freely-acting symmetries we merely search through all finite groups $\Gamma$ up to some maximal group order. 
\item[II.] {\bf Compute the symmetries of the ambient space (Section~\ref{symzeroset}):} Compute the toric symmetry group $G_A=P_A\ltimes(H_A/{\cal G})$ from the toric data. This group consist of blocks of general linear groups contained in $H_A$ and permutations of these blocks in $P_A$. 
\item[III.] {\bf Generate the representations (Section~\ref{constrep}):} For each group $\Gamma$ found in step I, generate all group monomorphisms $R:\Gamma\rightarrow G_A$.
\item[IV.] {\bf Check the symmetry actions (Section~\ref{sect:smoothness} and \ref{sec:fpsec}):} Generate defining equations $p^{R(\Gamma)}$ which are invariant under the action of $R(\Gamma)$. Then check smoothness of the hypersurface $X^{R(\Gamma)}$ defined by $p^{R(\Gamma)}=0$ and, for the case of freely-acting symmetries,  the absence of $R(\Gamma)$ fixed points on $X^{R(\Gamma)}$.
\end{itemize}
An implementation of the above algorithm requires the data of a fan $\Sigma$ obtained from a triangulation of $\Delta^\circ$ for each model. We are using the database compiled in \cite{Altman:2014bfa} as input data containing such models. Furthermore, we are using GAP\cite{GAP4} for the required group-theoretical data. Our computation system is composed of Mathematica modules, and the computations in commutative algebra are performed using Singular \cite{DGPS}. The Calabi-Yau hypersurfaces are generated with random coefficients $c_{\bf m}$ for each monomial ranging from $0$ to $100$. If the computation in Singular is too time-consuming over the field of rational numbers, we alternatively use finite fields and check over a number of primes.

\section{Examples}
\label{eg}
We will now carry out our algorithm for two specific examples. The first example is for a freely-acting $\mathbb{Z}_4$ symmetry on the tetra-quadric hypersurface in $\mathbb{P}^1\times\mathbb{P}^1\times\mathbb{P}^1\times\mathbb{P}^1$. Since the tetra-quadric is a TCY as well as a CICY manifold, this $\mathbb{Z}_4$ symmetry is, in fact, well-known in the literature~\cite{Braun:2010vc} . This example serves as an illustration of our method in an established context. The second example is for a TCY manifold with $h^{1,1}(X)=6$ and a freely-acting $\mathbb{Z}_2$ symmetry which was not previously known. 

\subsection{A $\mb{Z}_4$ symmetry of the tetra-quadric}\label{sect:twtetraquadric}
The tetra-quadric with Hodge numbers $h^{1,1(X)}=4$ and $h^{1,2}(X)=68$ is described by a polytope $\Delta^\circ$ with eight vertices given by
\begin{equation}
\begin{array}{lllllllllllllll}
{\bf v}_1&=& (-1,0,0,0)&& {\bf v}_2&=&(-1,0,0,1)&& {\bf v}_3&=&(-1,0,1,0)&& {\bf v}_4&=&(-1,1,0,0)\\
{\bf v}_5&=&(1,-1,0,0)&&{\bf v}_6&=&(1,0,-1,0)&&{\bf v}_7&=&(1,0,0,-1)&&{\bf v}_8&=&(1,0,0,0) \; , 
\end{array}
\end{equation}
The zero set is then characterised by 
\begin{eqnarray}
\mc{I} &=& \{\{1,8\},\{2,7\},\{3,6\},\{4,5\}\} \\
Z &=& \{x_1 = x_8 = 0\} \cup \{x_2 = x_7 = 0 \} \cup \{x_3 = x_6 = 0 \} \cup \{x_4 = x_5 = 0 \} \label{esTQ}
\end{eqnarray}
The linear relations between the above vertices leads to the charge matrix
\begin{equation}
 Q= \left(
\begin{array}{cccccccc}
 0 & 0 & 0 & 1 & 1 & 0 & 0 & 0 \\
 0 & 0 & 1 & 0 & 0 & 1 & 0 & 0 \\
 0 & 1 & 0 & 0 & 0 & 0 & 1 & 0 \\
 1 & 0 & 0 & 0 & 0 & 0 & 0 & 1 \\
\end{array}\right)\; ,
\end{equation}
with the associated toric group
\begin{equation}
\mc{G} = \{ \left\{s_4,s_3,s_2,s_1,s_1,s_2,s_3,s_4\right\} | s_i \in \mb{C}^* \}\; .\label{tgTQ}
\end{equation}\\[3mm]
\underline{\bf Step I:} The Euler number of the tetra-quadric is $\eta(X)=-128$ and the other relevant twisted indices in Eqs.~\eqref{indkl} and \eqref{sigkl} can be worked out as 
\begin{equation}
 \chi_{0,1}=-64\;,\quad \chi_{1,0}=80\;,\quad  \chi_{2,0}=544 \;\quad  \chi_{3,0}=1776  \;,\quad \sigma_{1,1}=-1280\; .
\end{equation} 
The greatest common divisor of these numbers and the Euler number is $16$ so for freely-acting symmetry groups $\Gamma$ we should consider the group orders $|\Gamma|\in\{2,4,8,16\}$.\\[3mm]
\underline{\bf Step II:} In order to compute the ambient space symmetry group we need to determine the index sets ${\cal J}$ and ${\cal K}$ from the above toric data. In this particular case it turn out they are both equal to ${\cal I}$ so that
\begin{gather}
\mc{J} = \mc{K} = \mc{I} =  \{\{1,8\},\{2,7\},\{3,6\},\{4,5\}\}\; .
\end{gather}
Following the general procedure explained in Section~\ref{symzeroset} it is then straightforward to determine the ambient space symmetry group $G_A=P_A\ltimes(H_A/{\cal G})$. One finds that
\begin{eqnarray}
 P_A&\cong&S_4\\
 H_A &=&GL(\{1,8\},\mb{C}) \times GL(\{2,7\},\mb{C}) \times GL(\{3,6\},\mb{C}) \times GL(\{4,5\},\mb{C})  \label{HATQex}
\end{eqnarray} 
where $P_A$ acts by permuting the four blocks $\{1,8\},\{2,7\},\{3,6\},\{4,5\}\in{\cal K}$.\\[3mm]
\underline{\bf Step III:} In general, we need to study all group monomorphisms $R:\Gamma\rightarrow G_A$ for all groups $\Gamma$ with order $|\Gamma|\in\{2,4,8,16\}$.  For the purpose of this example, we focus on the group $\Gamma=\mathbb{Z}_4$ and we denote its generator by $\gamma$, so that $\gamma^4=1$. The representation matrices will be written down for the coordinate ordering $((1,8),(2,7),(3,6),(4,5))$, in line with the grouping into $2\times 2$ blocks in Eq.~\eqref{HATQex}. 

First, we need to construct the permutation representations $\pi:\mathbb{Z}_4\rightarrow P_A\cong S_4$. For illustration purpose we choose $\pi$ as
\begin{equation}
 \pi(\gamma)=\left(\begin{array}{cccc}0&\mathbbm{1}_2&0&0\\\mathbbm{1}_2&0&0&0\\0&0&0&\mathbbm{1}_2\\0&0&\mathbbm{1}_2&0\end{array}\right)\; , \label{piTQex}
\end{equation} 
that is, as a simultaneous swap of blocks $(1,8)\leftrightarrow (2,7)$ and $(3,6)\leftrightarrow (4,5)$. There are two orbits of blocks under the action of $\pi$, namely ${\cal K}_1=\{\{1,8\},\{2,7\}\}$ and ${\cal K}_2=\{\{3,6\},\{4,5\}\}$. We can focus on ${\cal K}_1$ (the other orbit ${\cal K}_2$ will work in exactly the same way) and note that it has two blocks $K_1=\{1,8\}$ and $K_2=\{2,7\}$. The stabilizer groups $\Gamma_1$ of $K_1$ is given by $\Gamma_1=\{1,\gamma^2\}\cong\mathbb{Z}_2\subset \mathbb{Z}_4$. We choose the representation $\tilde{r}_1$ of $\Gamma_1$ on the block $K_1$ as
\begin{equation}
 \tilde{r}_1(\gamma^2)=\sigma_3={\rm diag}(1,-1)\; .
\end{equation} 
The full map $r:\Gamma\rightarrow H_A/{\cal G}$ can then be re-constructed from $\tilde{r}_1$ and is it straightforward to show that
\begin{equation}
 r(\gamma)={\rm diag}(\mathbbm{1}_2,\sigma_3,\mathbbm{1}_2,\sigma_3)\; .
\end{equation} 
Combining this with the permutation representation~\eqref{piTQex} we finally get
\begin{equation}
  R(\gamma)=\left(\begin{array}{cccc}0&\sigma_3&0&0\\\mathbbm{1}_2&0&0&0\\0&0&0&\sigma_3\\0&0&\mathbbm{1}_2&0\end{array}\right)\; .\label{RTQ}
\end{equation}\\[3mm]
The general invariant polynomial under this symmetry action is given in Appendix \ref{app:polytq}.
\underline{\bf Step IV:} To check fixed points we have to write down Eq.~\eqref{fpeq} for all $\mathbb{Z}_4$ group elements using the group action defined by Eq.~\eqref{RTQ} and the elements~\eqref{tgTQ} of the toric group. This leads to 
\begin{equation}
\begin{array}{ll}
\gamma&\left\{
\begin{array}{lllllllllllllll}
x_2-s_4 x_1&=&0&&-s_4 x_8-x_7&=&0&&x_1-s_3 x_2&=&0&&x_8-s_3 x_7&=&0\\
x_4-s_2 x_3&=&0&&-s_2 x_6-x_5&=&0&&x_3-s_1x_4&=&0&&x_6-s_1 x_5&=&0
\end{array}\right.\\
\gamma^2&\left\{
\begin{array}{lllllllllllllll}
x_1-s_4 x_1&=&0&&-s_4 x_8-x_8&=&0&&x_2-s_3 x_2&=&0&&-s_3 x_7-x_7&=&0\\
x_3-s_2 x_3&=&0&&-s_2 x_6-x_6&=&0&&x_4-s_1x_4&=&0&&-s_1 x_5-x_5&=&0
\end{array}\right.\\
\gamma^3&\left\{
\begin{array}{lllllllllllllll}
x_2-s_4 x_1&=&0&&x_7-s_4 x_8&=&0&&x_1-s_3 x_2&=&0&&-s_3 x_7-x_8&=&0\\
x_4-s_2 x_3&=&0&&x_5-s_2 x_6&=&0&&x_3-s_1x_4&=&0&&-s_1 x_5-x_6&=&0
\end{array}\right.
\end{array}
\end{equation}
It can be shown in each case that the solutions to these equations are either contained in the exceptional set~\eqref{esTQ} or are point-like in the ambient space and, hence, do not lie on a generic tetra-quadric hypersurface.

The most general defining polynomial $p$ consistent with the $\mathbb{Z}_4$ symmetry contains $21$ independent terms and coefficients, including, for example
\begin{equation}
 p\;\ni\; x_5^2 x_6^2 x_7^2 x_8^2\,,\; x_1^2 x_2^2 x_5^2 x_6^2\,,\; x_1^2 x_2^2 x_3^2 x_4^2\,,\; x_1 x_2 x_3 x_4 x_5 x_6 x_7 x_8\,,\cdots
\end{equation} 
A calculation, for example using Singular~\cite{DGPS}, shows that a sufficiently general combination of these terms does indeed produce a smooth hypersurface.

\subsection{A $\mb{Z}_2$ symmetry of the twisted tetra-quadric}\label{h116}
We now proceed to a more complicated example which leads to a freely-acting $\mb{Z}_2$ symmetry which was not previously known. Moreover, the underlying manifold is not among the $16$ manifolds identified by Batyrev and Kreuzer and, hence, this example shows that the number of manifolds with freely-acting symmetries within the Kreuzer Skarke list is definitely larger than $16$. 

The space is defined by a polytope $\Delta^\circ$  with the $10$ vertices
\begin{equation}
\begin{array}{lllllllllll}
{\bf v}_1&=& (-1,-1,-1,-1)&& {\bf v}_2&=&(-1,0,0,0)&& {\bf v}_3&=&(0,-1,0,0)\\
{\bf v}_4&=&(0,0,-1,0)&&{\bf v}_5 &=& (0,0,0,-1)&&{\bf v}_6&=&(0,0,0,1)\\
{\bf v}_7&=& (0,0,1,0)&&{\bf v}_8&=&(0,1,0,0)&&{\bf v}_9&=&(1,0,0,0)\\
{\bf v}_{10}&=&(1,1,1,1)\; .
\end{array}\label{newexvert}
\end{equation}
The zero set is specified by
\begin{eqnarray}
{\cal I}&=&\{\{1,10\},\{2,9\},\{3,8\},\{4,7\},\{5,6\},\{1,6,7\},\{1,6,8\},\{1,6,9\},\{1,7,8\},\{1,7,9\},\nonumber\\
&&\;\;\{1,8,9\},\{2,3,4\},\{2,3,5\},\{2,3,10\},\{2,4,5\},\{2,4,10\},\{2,5,10\},\{3,4,5\},\nonumber\\
&&\;\;\{3,4,10\},\{3,5,10\},\{4,5,10\},\{6,7,8\},\{6,7,9\},\{6,8,9\},\{7,8,9\}\}\; .
\end{eqnarray}
The linear dependencies among the vertices~\eqref{newexvert} lead to the charge matrix
\begin{equation}
%\footnotesize
\cal{Q} =\left(
\begin{array}{cccccccccc}
 0 & 1 & 0 & 0 & 0 & 0 & 0 & 0 & 1 & 0 \\
 0 & 0 & 1 & 0 & 0 & 0 & 0 & 1 & 0 & 0 \\
 0 & 0 & 0 & 1 & 0 & 0 & 1 & 0 & 0 & 0 \\
 0 & 0 & 0 & 0 & 1 & 1 & 0 & 0 & 0 & 0 \\
 0 & 1 & 1 & 1 & 1 & 0 & 0 & 0 & 0 & 1 \\
 1 & 0 & 0 & 0 & 0 & 1 & 1 & 1 & 1 & 0 \\
\end{array}
\right) \; ,
\end{equation}
which identifies the corresponding TCY as a `twisted version' of the tetra-quadric. Its Hodge numbers are $h^{1,1}(X)=6$ and $h^{2,1}(X)=46$. The above charge matrix gives rise to the toric group
\begin{equation}
{\cal G} = \{(s_6,s_1 s_5,s_2 s_5,s_3 s_5,s_4 s_5,s_4 s_6,s_3 s_6,s_2 s_6,s_1 s_6,s_5)\; |\; s_i \in \mb{C}^*\}\; .
\end{equation}\\[3mm]
\newline
\underline{\bf Step I:} The Euler number of the twisted tetra-quadric is $\eta(X)=-80$ and the other relevant twisted indices are
\begin{equation}
 \chi_{0,1}=-40\;,\quad \chi_{1,0}=50\;,\quad  \chi_{2,0}=330  \;\quad  \chi_{3,0}=1070\;,\quad \sigma_{1,1}=-1080\; .
\end{equation} 
The greatest common divisor of these numbers and the Euler number is $10$ so for freely-acting symmetry groups $\Gamma$ we should consider the group orders $|\Gamma|\in\{2,5,10\}$.\\[3mm]
\\[3mm]
\underline{\bf Step II:} As can be seen from the charge matrix and the exceptional set given above, we have that $\mc{J}  = \mc{K} = \{ \{1\}, \{2\}, \{3\}, \{4\}, \{5\}, \{6\}, \{7\},\{8\}, \{9\}, \{10\} \}$. \\[3mm]
The ambient space symmetry group $G_A=P_A\ltimes(H_A/{\cal G})$ is found from our general procedure as
\begin{eqnarray}
 P_A&\cong&S_4 \times \Z_2 \\
 H_A &=& GL(1,\C)^{10} = ( \C^*)^{10}
\end{eqnarray} 
where the $S_4$ in $P_A$ acts by permuting the four ordered tuples $(2,9),(3,8),(4,7),(5,6)$ and the $\Z_2$ acts by mapping
$(1,2,3,4,5) \leftrightarrow (10,9,8,7,6)$. \\[3mm] 
\underline{\bf Step III:} Here, we need to study all group monomorphisms $R:\Gamma\rightarrow G_A$ for all finite groups $|\Gamma|\in\{2,5,10\}$. For the purpose of this example, we focus on the group $\Gamma=\mathbb{Z}_2$ and we denote its generator by $\gamma$. A simple choice of representation is to take $r$ to be trivial and to set the permutation part to
\begin{equation}\label{eq:groupmyEx}
\pi(\gamma) = ((1,10),(2,9),(3,8),(4,7),(5,6))\; .
\end{equation}
\\[3mm]
\underline{\bf Step IV:}
As described earlier, a generic polynomial symmetric under the $\Z_2$ \eqref{eq:groupmyEx} may easily be found by summing a generic polynomial with its image, the result is found in Appendix \ref{app:monosttq}. Such a polynomial gives rise to a smooth hypersurface $X^{R(\Z_2)}$. To see the fixed point freedom, we have to find solutions to \eqref{fpeq}. For the action above, fixed points are found as solutions of 
\begin{equation}\label{eq:fixedpointsmyEX}
\begin{aligned}
& (x_1 s_6, x_2 s_1 s_5, x_3 s_2 s_5, x_4 s_3 s_5, x_5 s_4 s_5, x_6 s_4 s_6, x_7 s_3 s_6, x_8 s_2 s_6, x_9 s_1 s_6, x_{10} s_5) \\
= & (x_{10}, x_9,x_8,x_7,x_6,x_5,x_4,x_3,x_2,x_1) \, .
\end{aligned}
\end{equation}
To find the corresponding loci on the toric variety, we may consider the elimination ideal with respect to the variables $s_i$. It is generated by all polynomials of the form $x_i^2 x_j^2 - x_{10-i}^2x_{10-j}^2$ for $i,j = 1..5$ and has dimension $5$, so that there are only fixed points under this group action. As none of these points meet a generic smooth symmetric polynomial, it follows that \eqref{eq:fixedpointsmyEX} has no simultaneous solutions with a generic smooth symmetric polynomial and there are not fixed points under the group action \eqref{eq:groupmyEx}.
\\[3mm]

\section{Systematic Search for Symmetries} \label{scan}
The classification of reflexive four-dimensional polytopes by Kreuzer and Skarke \cite{Kreuzer:2000xy} forms a natural starting point to apply our algorithm and find new examples of Calabi-Yau threefolds with freely acting symmetries. A scan over all triangulations off all reflexive polytopes, however, is presently beyond reach. In particular, it is already a daunting task to find all triangulations, as the number of triangulations of each reflexive polytope increases dramatically with the Picard number. In Ref.~\cite{Altman:2014bfa} all triangulations for $h^{1,1}(X)\leq 6$ were found\footnote{The largest number of reflexive polytopes for fixed $h^{1,1}(X)$ exists for $h^{1,1}(X)=27$.}. For the purpose of our symmetry classification we will focus on the subset of those triangulations with $h^{1,1}(X)\leq 3$, some $350$ manifolds to which we apply our classification algorithm. 

\subsection{Scan results for freely-acting symmetries}
In this sub-section we present the result of our classification, a list of all freely-acting symmetries, linearly realised in the ambient space, for TCY manifolds with $h^{1,1}(X)\leq 3$. We find that only five of these TCY manifolds allow for freely-acting symmetries of this kind and we briefly list those five cases, together with the most important data, in the following. The invariant polynomials are recorded in Appendix \ref{app:invpolys}. \\[3mm]
\newline
\underline{\bf Case \#1, quintic in $\mathbb{P}^4$:}
\begin{itemize}
\item Hodge numbers: $h^{1,1}(X)=1$, $h^{2,1}(X)=101$
\item Possible symmetry orders from indices: $|\Gamma|\in\{5,25\}$
\item Ray generators of fan $\Sigma$:
\begin{equation}\label{eq:raysquintic}
({\bf v}_1,\ldots ,{\bf v}_5)=
  \left(
\begin{array}{rrrrr}
-1&-1&-1&-1&4\\
0&0&0&1&-1\\
0&0&1&0&-1\\
0&1&0&0&-1
\end{array}
\right)
\end{equation}
\item Charge matrix:
\begin{equation}
Q=\left(\begin{array}{ccccc}1&1&1&1&1\end{array}\right)
\end{equation}
\item Symmetry $\Gamma=\mathbb{Z}_5=\langle \gamma\rangle$:
\begin{equation}\label{eq:z5quinticaction1}
 R(\gamma)={\rm diag}(1,\alpha_5,\alpha_5^2,\alpha_5^3,\alpha_5^4)\;,\qquad \alpha_5=e^{2\pi i/5}\; .
\end{equation} 
\item Symmetry $\Gamma=\mathbb{Z}_5\times\mathbb{Z}_5=\langle \gamma_1,\gamma_2\rangle$, with four possible actions of $\gamma_1$
\begin{equation}\label{eq:z5quinticaction2}
\begin{array}{llll}
(1)\;:&R(\gamma_1)&=&{\rm diag}(1,\alpha_5^3,\alpha_5,\alpha_5^4,\alpha_5^2)\\
(2)\;:&R(\gamma_1)&=&{\rm diag}(1,\alpha_5,\alpha_5^2,\alpha_5^3,\alpha_5^4)\\
(3)\;:&R(\gamma_1)&=&{\rm diag}(1,\alpha_5^4,\alpha_5^3,\alpha_5^2,\alpha_5)\\
(4)\;:&R(\gamma_1)&=&{\rm diag}(1,\alpha_5^2,\alpha_5^4,\alpha_5,\alpha_5^3)\\
\end{array}
\end{equation}
and the same action of $\gamma_2$
\begin{equation}
R(\gamma_2)=\left(
\begin{array}{ccccc}
 0 & 0 & 0 & 0 & 1 \\
 1 & 0 & 0 & 0 & 0 \\
 0 & 1 & 0 & 0 & 0 \\
 0 & 0 & 1 & 0 & 0 \\
 0 & 0 & 0 & 1 & 0 \\
\end{array}\right)
\end{equation}
in all four cases.
\item Remarks: This is polytope ID\# 2 of the database of \cite{Altman:2014bfa}. The quintic is of course a CICY manifold and all the above freely-acting symmetries are well-known. The $\mathbb{Z}_5$ symmetry is toric and among the $16$ cases found by Batyrev and Kreuzer, while the four $\mathbb{Z}_5\times\mathbb{Z}_5$ actions are non-toric.
\end{itemize}
\vskip 3mm
\underline{\bf Case \#2, bi-cubic in $\mathbb{P}^2\times \mathbb{P}^2$:}
\begin{itemize}
\item Hodge numbers: $h^{1,1}(X)=2$, $h^{2,1}(X)=83$
\item Possible symmetry orders from indices: $|\Gamma|\in\{3,9\}$
\item Ray generators of fan $\Sigma$:
\begin{equation}
({\bf v}_1,\ldots ,{\bf v}_6)=
  \left(
\begin{array}{rrrrrr}
-1&-1&2&-1&-1&2\\
0&0&0&0&1&-1\\
0&0&0&1&0&-1\\
0&1&-1&0&0&0
\end{array}
\right)
\end{equation}
\item Charge matrix:
\begin{equation}
Q=\left(\begin{array}{cccccc}0&0&0&1&1&1\\1&1&1&0&0&0\end{array}\right)
\end{equation}
\item Symmetry $\Gamma=\mathbb{Z}_3=\langle \gamma\rangle$:
\begin{equation}
 R(\gamma)={\rm diag}(1,\alpha_3,\alpha_3^2,1,\alpha_3,\alpha_3^2)\;,\qquad \alpha_3=e^{2\pi i/3}\; .
\end{equation} \label{eq:actionbicubic1}
\item Symmetry $\Gamma=\mathbb{Z}_3\times\mathbb{Z}_3=\langle \gamma_1,\gamma_2\rangle$, with four possible actions of $\gamma_1$
\begin{equation}\label{eq:actionbicubic2}
\begin{array}{llll}
(1)\;:&R(\gamma_1)&=&{\rm diag}(1,\alpha_2^2,\alpha_3,1,\alpha_3^2,\alpha_3)\\
(2)\;:&R(\gamma_1)&=&{\rm diag}(1,\alpha_3^2,\alpha_3,1,\alpha_3,\alpha_3^2)\\
(3)\;:&R(\gamma_1)&=&{\rm diag}(1,\alpha_3,\alpha_3^2,1,\alpha_3^2,\alpha_3)\\
(4)\;:&R(\gamma_1)&=&{\rm diag}(1,\alpha_3,\alpha_3^2,1,\alpha_3,\alpha_3^2)\\
\end{array}
\end{equation}
and the same action of $\gamma_2$
\begin{equation}
R(\gamma_2)=
\left(
\begin{array}{cccccc}
 0 & 0 & 1 & 0 & 0 & 0 \\
 1 & 0 & 0 & 0 & 0 & 0 \\
 0 & 1 & 0 & 0 & 0 & 0 \\
 0 & 0 & 0 & 0 & 0 & 1 \\
 0 & 0 & 0 & 1 & 0 & 0 \\
 0 & 0 & 0 & 0 & 1 & 0 \\
\end{array}\right)
\end{equation}
in all four cases.
\item Remarks: This is polytope ID\# 10 of the database of \cite{Altman:2014bfa}. The bi-cubic is of course a CICY manifold and all the above freely-acting symmetries have been found before~\cite{Braun:2010vc}. The $\mathbb{Z}_3$ symmetry is toric and among the $16$ cases found by Batyrev and Kreuzer, while the four $\mathbb{Z}_3\times\mathbb{Z}_3$ actions are non-toric.
\end{itemize}
\vskip 3mm
\underline{\bf Case \#3:}
\begin{itemize}
\item  Hodge numbers: $h^{1,1}(X)=3$, $h^{2,1}(X)=83$
\item Possible symmetry orders from indices: $|\Gamma|\in\{2,4,8\}$
\item Ray generators of the fan $\Sigma$:
\begin{equation}
({\bf v}_1,\ldots ,{\bf v}_7)=
\left(\begin{array}{rrrrrrr}
-1 & -1 &  1  & -1  & -1 &-1 &-1\\
 0 &  0 & -1  &  0  &  4 &2&0\\
 0 &  0 &  0  &  2  & -2 &-1&1\\
 0 &  1 &  0  &  0  & -1&0&0
%1    2    3     4     5
\end{array}\right)
\end{equation} \label{vertc3}
\item Charge matrix:
\begin{equation}
Q= \left(
\begin{array}{rrrrrrr}
 1 &  0 & 0 & 1 &  0 & 0  & -2  \\ %4 
 0 &  1 & 0 & 0  & 1 & -2 & 0 \\ %5
 0 &  0 & 2 & 0  & 0 & 1  & 1    %6 *2
\end{array}  
\right)\label{cmc3}
\end{equation}
\item Symmetry $\Gamma=\mathbb{Z}_2=\langle \gamma\rangle$:
\begin{equation}\label{eq:actioncase3}
 R(\gamma)={\rm diag}(1,1,-1,-1,-1,-1,1)
\end{equation} 
\item Remarks: This is polytope ID\# 147 of the database of \cite{Altman:2014bfa}. The manifold and its freely-acting $\mathbb{Z}_2$ symmetry is among the $16$ cases identified by Batyrev and Kreuzer.
\end{itemize}
\vskip 3mm
\underline{\bf Case \#4:}
\begin{itemize}
\item  Hodge numbers: $h^{1,1}(X)=3$, $h^{2,1}(X)=115$
\item Possible symmetry orders from indices: $|\Gamma|\in\{2,4,8\}$
\item Ray generators of fan $\Sigma$:
\begin{equation}
({\bf v}_1,\ldots ,{\bf v}_7)=
\left(\begin{array}{rrrrrrr}
-1&-1&-1&-1&-1&1&-1\\
0&0&0&2&2&-1&0\\
0&0&1&-1&0&0&0\\
0&2&1&-1&-1&0&1
\end{array}\right)\label{vertc4}
\end{equation}
\item Charge matrix:
\begin{equation}
Q= \left(
\begin{array}{rrrrrrr}
0&0&0&0&1&2&1\\
0&0&1&1&0&2&0\\
1&1&0&0&2&4&0
\end{array}\right)\label{Qc4}
\end{equation}
\item Symmetry $\Gamma=\mathbb{Z}_2=\langle \gamma\rangle$:
\begin{equation}\label{eq:actioncase4_1}
 R(\gamma)={\rm diag}(1,-1,1,-1,1,-1,-1)
\end{equation} \label{eq:actioncase4_2}
\item Symmetry $\Gamma=\mb{Z}_2 \ltimes \mb{Z}_2=\langle\gamma_1,\gamma_2\rangle$:
\begin{equation}
 R(\gamma_1)={\rm diag}(1,-1,1,-1,1,-1,-1)\,,\;\; R(\gamma_2)={\rm diag}(\sigma_1,\sigma_1,1,-1,-1)\,,\;\;
  \sigma_1=\left(\begin{array}{cc}0&1\\1&0\end{array}\right)
\end{equation}  
\item Remarks: This is polytope ID\# 245 of the database of \cite{Altman:2014bfa}. This manifold with its freely-acting $\mathbb{Z}_2$ symmetry is among the $16$ cases identified by Batyrev and Kreuzer. The polytope has two triangulations realizing different regions of the K\"ahler cone. They both allow the same fixed-point free actions as given above. However, the $\mb{Z}_2 \ltimes \mb{Z}_2$ is non-toric and has not been found before. We will take a closer look at this symmetry in the following sub-section.
\end{itemize}
\vskip 3mm
\underline{\bf Case \#5:}
\begin{itemize}
\item  Hodge numbers: $h^{1,1}(X)=3$, $h^{2,1}(X)=115$
\item Possible symmetry orders from indices: $|\Gamma|\in\{2,4,8\}$
\item Rays of fan $\Sigma$:
\begin{equation}
({\bf v}_1,\ldots ,{\bf v}_7)=
\left(\begin{array}{rrrrrrr}
-1&-1&-1&-1&-1&-1&1\\
0&0&0&2&2&2&-1\\
0&1&1&-1&-1&0&0\\
0&0&1&-1&0&0&0
\end{array}\right)
\end{equation}
\item Charge matrix:
\begin{equation}
Q= \left(
\begin{array}{rrrrrrr}
0&0&1&1&0&0&2\\
0&1&0&0&1&0&2\\
1&0&0&0&0&1&2
\end{array}\right)
\end{equation}
\item Symmetry $\Gamma=\mathbb{Z}_2=\langle \gamma\rangle$:
\begin{equation}\label{eq:actioncase5}
 R(\gamma)={\rm diag}(-1,1,-1,1,-1,1,-1)
\end{equation}
\item Remarks: This is polytope ID\# 249 of the database of \cite{Altman:2014bfa}. This manifold with its freely-acting $\mathbb{Z}_2$ symmetry is among the $16$ cases identified by Batyrev and Kreuzer. The polytope has three different triangulations realizing different regions of the K\"ahler cone. All of them allow the same fixed-point free action given above. 
\end{itemize}

\subsection{A closer look at the $\mb{Z}_2 \ltimes \mb{Z}_2$ symmetry in case \#4}
In the previous classification, we have found one new case, not previously known from the classification of symmetries for CICYs or the Batyrev Kreuzer classification. We would now like to discuss this case in more detail.

We recall that the vertices of $\Delta^\circ$ for this TCY manifold are given by Eq.~\eqref{vertc4}, the associated charge matrix by~\eqref{Qc4} and that the Hodge numbers are $h^{1,1}(X)=3$ and $h^{2,1}(X)=115$. We consider the following triangulation\footnote{There is a second triangulation with zero set $\mc{I} = \{\{1,2\},\{3,4\},\{5,6,7\}\}$ for which the discussion is similar to the one presented here.}
\begin{align}
&[{\bf v}_1,{\bf v}_3,{\bf v}_4,{\bf v}_5],\;[{\bf v}_1,{\bf v}_3,{\bf v}_4,{\bf v}_7],\;[{\bf v}_1,{\bf v}_3,{\bf v}_5,{\bf v}_6],\;[{\bf v}_1,{\bf v}_3,{\bf v}_6,{\bf v}_7],\;[{\bf v}_1,{\bf v}_4,{\bf v}_5,{\bf v}_6],\;[{\bf v}_1,{\bf v}_4,{\bf v}_6,{\bf v}_7], \nonumber \\
&[{\bf v}_2,{\bf v}_3,{\bf v}_4,{\bf v}_5],\;[{\bf v}_2,{\bf v}_3,{\bf v}_4,{\bf v}_7],\;[{\bf v}_2,{\bf v}_3,{\bf v}_5,{\bf v}_6],\;[{\bf v}_2,{\bf v}_3,{\bf v}_6,{\bf v}_7],\;[{\bf v}_2,{\bf v}_4,{\bf v}_5,{\bf v}_6],\;[{\bf v}_2,{\bf v}_4,{\bf v}_6,{\bf v}_7]
\end{align}
which consists of twelve four-dimensional cones and leads to the zero set
\begin{eqnarray}
\mc{I} &=& \{\{1,2\},\{5,7\},\{3,4,6\}\}\label{Ic4}\\
 Z(\Sigma)&=& \{x_1 = x_2 = 0\} \cup \{x_5 = x_7 = 0 \} \cup \{x_3 = x_4 = x_6 = 0 \} \label{zsc4}
\end{eqnarray}
The charge matrix~\eqref{Qc4} implies the following toric group:
\begin{equation}
\mc{G} = \{ (s_3,s_3,s_2,s_2,s_1s_3^2,s_1^2s_2^2s_3^4,s_1)\; |\; s_1,s_2,s_3 \in \mb{C}^* \} \label{tgc4}
\end{equation}
\\[3mm]
\underline{\bf Step I:} The Euler number for this space is given by $\eta(X)=-224$ and the relevant indices~\eqref{indkl} and \eqref{sigkl} by
\begin{equation}
 \chi_{0,1}=-112\;,\quad \chi_{1,0}=152\;,\quad \chi_{2,0}=1072\;,\quad \chi_{3,0}=3528\;,\quad \sigma_{1,1}=-1280\; .
\end{equation} 
The greatest common divisors of the numbers and the Euler number is $8$, so the possible orders of freely-acting groups are $\Gamma\in\{2,4,8\}$.\\[3mm]
\underline{\bf Step II:} From the index set ${\cal I}$ in Eq.~\eqref{Ic4} we find the two associated index sets
\begin{equation}
\mc{J} = \{\{1,2\},\{5,7\},\{3,4,6\}\}\;,\qquad \mc{K} = \{\{1,2\},\{3,4\},\{5\},\{6\},\{7\}\}\; .
\end{equation}
These sets imply an ambient space symmetry group $G_A=P_A\ltimes H_A/{\cal G}$ with
\begin{equation}
 H_A={\rm Gl}(\{1,2\},\mathbb{C})\times{\rm Gl}(\{3,4\},\mathbb{C})\times{\rm Gl}(\{5\},\mathbb{C})\times{\rm Gl}(\{6\},\mathbb{C})\times{\rm Gl}(\{7\},\mathbb{C})\; ,
\end{equation}
while $P_A$ is trivial.\\[3mm]
\underline{\bf Step III:} In general, we should consider all groups of order $|\Gamma|\in\{2,4,8\}$ but for the present discussion we focus on the order four group $\Gamma=\mb{Z}_2 \ltimes \mb{Z}_2=\langle\gamma_1,\gamma_2\rangle$. There are numerous possible group monomorphisms $R:\Gamma\rightarrow G_A$ and we consider the specific example with
\begin{equation}
 R(\gamma_1)={\rm diag}(1,-1,1,-1,1,-1,-1)\,,\;\; R(\gamma_2)={\rm diag}(\sigma_1,\sigma_1,1,-1,-1)\,,\;\;
  \sigma_1=\left(\begin{array}{cc}0&1\\1&0\end{array}\right)\; . \label{Rc4}
\end{equation} 
Note that, while these two matrices generate a group of order eight, this is reduced to order four once projective identifications are taken into account.\\[3mm]
\underline{\bf Step IV:} The most general TCY manifold of this type is defined by linear combination of $153$ monomials (corresponding to the 153 lattice points on $\Delta$) which, in line with Eq.~\eqref{eq:cydefequn}, correspond to the points in the polar polytope $\Delta$. These monomials have the form
\begin{equation}
 x_1^8 x_4^4 x_7^4\;,\quad x_1^8 x_3 x_4^3 x_7^4\;,\quad x_1^8 x_3^2 x_4^2 x_7^4\;,\;\cdots\;,\quad x_2^4 x_3^4 x_5^2 x_7^2\;,\quad
 x_3^3 x_4 x_5^4\;,\quad x_2^2 x_3^4 x_5^3 x_7\;,\quad x_3^4 x_5^4\;,\;\cdots
\end{equation} 
Requiring invariance under the action of the symmetry~\eqref{Rc4} reduces this list to $40$ independent invariant polynomials of the form
\begin{equation}
x_6^2 \;,\quad x_2^2 x_3^4 x_5^3 x_7 -  x_1^2 x_4^4 x_5^3 x_7 \;,\;\cdots\; ,\quad x_2^8 x_3^4 x_7^4 + x_1^8 x_4^4 x_7^4 \; .
\end{equation}
It can be checked that a generic linear combination $p^{R(\Gamma)}$ of these $40$ invariant building blocks leads to a smooth TCY manifold $X^{(R(\Gamma)}$. 

In order to check fixed point freedom, we focus on the action of $R(\gamma_2)$. With the toric group~\eqref{tgc4}, Eq.~\eqref{fpeq} leads to the following fixed point equations
\begin{equation}
\begin{array}{rrrrrrrrrrrrrrr}
s_1 s_3^2 x_5-x_5&=&0&&s_1^2 s_2^2 s_3^4 x_6+x_6&=&0&&s_1 x_7+x_7&=&0&&s_3 x_1-x_2&=&0 \\
s_3 x_2-x_1&=&0&&s_2 x_3-x_4&=&0&&s_2 x_4-x_3&=&0
\end{array}
\end{equation}
for $R(\gamma_2)$. After excluding the zero set~\eqref{zsc4}, it can be checked that no component of the so-defined fixed point set intersects the generic TCY manifold $X^{R(\Gamma)}$. The same conclusion can be reached for the action of the other group elements.

\section{Conclusion}\label{concl}
In this paper, we have analysed symmetry groups of CY three-folds defined as hypersurfaces in toric ambient spaces (TCY manifolds) and we have developed methods to classify those symmetries which descend from a linear action on the toric ambient spaces. Based on these results, we have set up an algorithm that allows for the classification of symmetries for TCY manifolds obtained by triangulation from the Kreuzer-Skarke list. We have carried out this classification for a small initial sample of all TCY manifolds with $h^{1,1}(X)\leq 3$, some $350$ manifolds. 

We find that only $5$ out of these $350$ TCY manifolds admit (linearly realised) freely-acting symmetries. Our algorithm reproduces all  freely-acting symmetries known previously from either the overlap with the list of CICY manifolds~\cite{Braun:2010vc} or the classification of toric symmetries by Batyrev and Kreuzer~\cite{Batyrev:2005jc}. We find one new, non-toric $\mathbb{Z}_2\ltimes\mathbb{Z}_2$ symmetry, where one of the $\mathbb{Z}_2$ factors corresponds to  one of the $16$ symmetries identified by Batyrev and Kreuzer.

We have also used our methods to construct a new, freely-acting $\mathbb{Z}_2$ symmetry on a TCY manifold with $h^{1,1}(X)=6$ which was not previously known to have freely-acting symmetries. This example shows that the subset of TCY manifolds with freely-acting symmetries is larger than the $16$ cases identified by Batyrev and Kreuzer. 

From our preliminary evidence it appears that the fraction of TCY manifolds with freely-acting (linearly realised) symmetries is relatively small, namely five manifolds out of $350$. This figure is broadly in line with what was found in the context of CICY manifolds~\cite{Braun:2010vc}, where about $2.5\%$ of the 7890 types carry linearly realised, freely-acting symmetries. 

Naturally, our algorithm can also be used to find non-freely acting symmetries. There is no a-priori constraint on the group order anymore but we can simply search all finite groups up to a maximal order and, of course, omit the fixed point check from the algorithm. 

It would be desirable to apply our algorithm to a larger class of TCY manifolds. As stands this is possible for TCY manifolds with $h^{1,1}(X)\leq 6$ since the corresponding triangulations have been worked out in Ref.~\cite{Altman:2014bfa}. To extend our analysis to the entire Kreuzer-Skarke list of about half a billion reflexive polytopes is a formidable challenge, not least because working out the triangulations for the entire list is computationally difficult. Dealing with this will certainly require some modifications and refinements of our algorithm. For example, it is conceivable that necessary conditions for the existence of freely-acting symmetries can be used to select a subset of polytopes for triangulation. These and related issues are the subject of future work.

%%%%%%%%%%%%%%%%%%%%%%%%%%%%%%%%%%%%%%%%%%%%%%%%%%%%%%%%%%%%%%%
\section*{Acknowledgements}
A.~L.~and A.~B.~would like to acknowledge support by the STFC grant~ST/L000474/1.  A.~L.~is also partially supported by the EPSRC network grant EP/N007158/1.
\newpage

%%%%%%%%%%%%%%%%%%%%%%%%%%%%%%%%%%%%%%%%%%%%%%%%%%%%%%%%%%%%%%%%%%%

\newpage
\appendix{}
\addtocontents{toc}{\protect\setcounter{tocdepth}{1}}

\section{The Automorphism Group of a Toric Variety}
\label{theoryfree}
  
\subsection{Demazure's structure theorem}

The automorphism group of a toric variety is elegantly captured in terms of Demazure's structure theorem \cite{Oda1988,cox2011toric}. It states that
the Lie algebra of the automorphism group of a toric variety is generated by the action of the open dense algebraic torus giving it its name, as well as maps 
\begin{equation}\label{liealgroots}
\phi_{i \nu}: x_i \mapsto x_i +  \lambda \prod_{k \neq i} x_k^{\langle \alpha_{i \nu}, {\bf v}_k \rangle} \, ,
\end{equation}
where the $\alpha_{i \nu}$ are lattice points in $M$ for which $\langle \alpha_{i \nu}, {\bf v}_i \rangle = -1$ and 
$\langle \alpha_{i \nu}, {\bf v}_k \rangle > -1$ for all $k \neq i$. The $\alpha_{i \nu}$ are called Demazure roots. 
We denote the group generated by \eqref{liealgroots} and the action of the algebraic torus by ${\rm Aut}_0(\P_{\Sigma'})$.\\

The full automorphism group is recovered as follows \cite{Oda1988,cox2011toric}. For any automorphism  of the fan, i.e. maps in ${\rm Gl}(N)$ which preserve the set of cones in $\Sigma$, there is an associated permutation of homogeneous coordinates which gives rise to an automorphism of
$\P_{\Sigma}$. Let us denote the group of all such maps by ${\rm Aut}(N,\Sigma)$. This automorphism is trivial if it is a Weyl reflection, defined by
\begin{equation}
 W(N,\Sigma): \{ n \mapsto n - \langle \alpha,n \rangle ({\bf v}_i (\alpha) - {\bf v}_j(-\alpha)) \}
\end{equation}
for any pair of semi-simple roots $\alpha$ (a root of ${\bf v}_i(\alpha)$) and $-\alpha$ (a root of ${\bf v}_j(-\alpha)$. Then, the following relation holds:
\begin{equation}
 {\rm Aut}(\P_{\Sigma'})/{\rm Aut}_0(\P_{\Sigma'}) = {\rm Aut}(N,\Sigma')/ W(N,\Sigma')\, .
\end{equation}
Note that the Demazure roots, and hence the continuous part of the automorphism group, is completely independent of the triangulation, 
as it only depends on the one-dimensional cones. We are going to show in section \ref{sect:setJ} that this remains true when we 
restrict to the Lie algebra of linearly realized automorphisms.\\

In the following, we will prove a statement analogous to Demazure's structure theorem for automorphisms linearly realized on the homogeneous coordinates of a toric variety which is suited for our classification of symmetries. This is based on the ``global" presentation of a toric variety $A$ as the quotient
\begin{equation}
 A=B/{\cal G}\;,\qquad B=\C^n - Z(\Sigma)\; .
\end{equation}

\subsection{Zero set and permutation group}

\label{theozeroset}

We start with an analysis of the symmetries of the zero set $Z(\Sigma)$ of a toric variety corresponding to the index set ${\cal I}$. Its elements are sets $I\subset\{1,\ldots,n\}$ which encode the components of the zero set as defined in Eq.~\eqref{Zdef}. For this index set we define the following  permutations groups:
\begin{equation}
 S=\{\sigma\in S_n\,|\,\sigma(I)=I,\;\forall\, I\in{\cal I}\}\subset S_n\; ,\qquad R=\{\sigma\in S_n\,|\,\sigma(I)\in{\cal I},\;\forall I\in{\cal I}\}\subset S_n\; .
\end{equation} 
The group $S$ leaves all sets $I$ in ${\cal I}$ invariant individually and we refer to it as the stabiliser group while $R$ permutes the sets $I$ into each other. Clearly, $S$ is a sub-group of $R$. It is actually a normal sub-group  since for $s\in S$ and $r\in R$ we have $rsr^{-1}(I)=I$ and, hence, $rsr^{-1}\in S$. We can, therefore, define the quotient group 
\begin{equation}
 P=R/S\; ,
\end{equation}
 which can be thought of as the group of permutations on ${\cal I}$, induced 
from permutations in $S_n$. We can write down the short exact group sequence
 \begin{equation}
  1\rightarrow S\stackrel{\imath}{\rightarrow} R\stackrel{[\cdot]}{\rightarrow} 
P\rightarrow 1\; . \label{seq}
 \end{equation} 
where $\imath$ is the inclusion map and $[\cdot ]$ denotes taking the 
equivalence class. We now show that this sequence splits. We begin with the 
\begin{remark} Denote by ${\cal I}^c=\{\{1,\ldots ,n\}\backslash I\,|\,I\in{\cal 
I}\}$ the complementary sets. Then the groups $S$ and $R$ are the same for 
${\cal I}$ and for ${\cal I}^c$. This simply follows from the fact that a set 
$I$ is invariant (is mapped to another set $I'$) under a permutation $\sigma\in 
S_n$ if and only if the complement $\{1,\ldots  ,n\}\backslash I$ is invariant 
(is mapped to $\{1,\ldots  ,n\}\backslash I'$) under $\sigma$.
\end{remark}
The complication is that the subsets in ${\cal I}$ can overlap. It is, 
therefore, useful to split the set $\{1,\ldots ,n\}$ up in a different way. 
Define an equivalence relation on $\{1,\ldots ,n\}$ by
\begin{equation}
 i\sim j :\,\Leftrightarrow i\mbox{ and }j \mbox{ are contained in the same sets 
}I\in{\cal I}\; . \label{equiv}
\end{equation} 
We denote the equivalence classes by ${\cal J}=\{1,\ldots ,n\}/\sim$ and it 
follows that
\begin{equation}
 \{1,\ldots ,n\}=\bigcup_{J\in{\cal J}}J
\end{equation}
as a disjoint union. Note that for a $J\in{\cal J}$ and $I\in{\cal I}$ with 
$J\cap I\neq\emptyset$ it follows that $J\subset I$. Indeed, if $j\in J\cap I$ 
and $k\in J$ then $k\sim j$ and, hence, $k$ and $j$ must be contained in the 
same sets of ${\cal I}$. Since $j\in J$ it follow that $k\in J$, so that 
$J\subset I$. This means every $I\in{\cal I}$ can be written as a disjoint union
\begin{equation}
 I=\bigcup_{J\in{\cal J},J\cap I\neq\emptyset}J\; . \label{decomp}
\end{equation} 
\begin{lemma} For $s\in S$ it follows that $s(J)=J$ for all $J\in{\cal J}$. 
Further, for $r\in R$ it follows that $r(J)\in{\cal J}$ for all $J\in{\cal J}$.
\label{lemmaJ}\end{lemma}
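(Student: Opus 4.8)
The plan is to reduce everything to a clean reformulation of the equivalence relation~\eqref{equiv}. For each $i\in\{1,\ldots,n\}$ I would record its ``membership profile'' $\mu(i):=\{I\in{\cal I}\,|\,i\in I\}$, the collection of zero-set components containing $i$. By the definition of $\sim$, one has $i\sim j$ if and only if $\mu(i)=\mu(j)$, so each class $J\in{\cal J}$ is exactly the set of indices sharing a given profile. I would also note at the outset that $R$ is a group: it is closed under composition, and its elements restrict to injections of the finite set ${\cal I}$ into itself, hence to bijections, so that $r^{-1}\in R$ whenever $r\in R$; likewise $s^{-1}\in S$ for $s\in S$.

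First I would treat $r\in R$. The key step is that $r$ preserves the equivalence relation, i.e.\ $i\sim j\Rightarrow r(i)\sim r(j)$. For any $I\in{\cal I}$ we have $r(i)\in I\Leftrightarrow i\in r^{-1}(I)$, and $r^{-1}(I)\in{\cal I}$ since $r^{-1}\in R$. As $i\sim j$ means $i$ and $j$ lie in exactly the same members of ${\cal I}$, it follows that $r(i)\in I\Leftrightarrow r(j)\in I$ for every $I$, so $\mu(r(i))=\mu(r(j))$. Hence all of $r(J)$ lies in a single class $J'\in{\cal J}$, giving $r(J)\subseteq J'$. Applying the identical argument to $r^{-1}$ yields $r^{-1}(J')\subseteq J$, equivalently $J'\subseteq r(J)$, so that $r(J)=J'\in{\cal J}$.

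The statement for $s\in S$ then follows quickly. Since $S\subset R$, the previous step already gives $s(J)\in{\cal J}$, and as distinct classes are disjoint it suffices to exhibit one common element. Choosing any $i\in J$ and using $s^{-1}(I)=I$ (valid because $s^{-1}\in S$), we obtain $s(i)\in I\Leftrightarrow i\in s^{-1}(I)\Leftrightarrow i\in I$ for every $I\in{\cal I}$, whence $s(i)\sim i$ and $s(i)\in J$. Therefore $s(J)=J$.

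I expect no serious obstacle; the lemma is essentially bookkeeping once the membership-profile characterisation is in place. The only point needing a little care is the reverse inclusion $J'\subseteq r(J)$ in the second paragraph. Rather than attempting a cardinality count---which would require separately knowing that $r$ sends distinct classes to distinct classes---I would obtain it cleanly by running the profile-preserving argument symmetrically for $r^{-1}$, which is legitimate precisely because $R$ is a group.
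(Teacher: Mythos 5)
Your proof is correct and follows essentially the same route as the paper: permutations in $R$ send members of ${\cal I}$ to members of ${\cal I}$ and therefore preserve the equivalence relation defining ${\cal J}$, with the $S$ case following because each $I$ is fixed individually. Your treatment of the reverse inclusion $J'\subseteq r(J)$ via $r^{-1}\in R$ is in fact slightly more explicit than the paper's one-line appeal to $r^{-1}(l)\in J$, but it is the same underlying argument.
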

\begin{proof} For the first part consider a set $J\in{\cal J}$, a permutation 
$s\in S$ and any $j\in J$. Assume that $i$ is contained in precisely the sets 
$I_1,\ldots ,I_p\in{\cal I}$. Hence, $s(i)\in s(I_k)=I_k$, so that $s(j)$ is 
also contained in $I_1,\ldots ,I_p\in{\cal I}$. Also, $s(j)$ cannot be contained 
in any other $I\in{\cal I}$, which is not among $I_1,\ldots ,I_p\in{\cal I}$, 
since $j$ would have to be in $I$ in this case. This implies that $i\sim s(j)$, 
so $s(j)\in J$ and, hence $s(J)\subset J$. Since $s$ is injective and $J$ is a 
finite set this means $s(J)=J$. \\
For the second part, again focus on a $J\in {\cal J}$ and a permutation $r\in 
R$. Pick a $j\in J$ and call $\tilde{J}\in{\cal J}$ the equivalence class of the 
image $r(j)$. I want to show that $r(J)\subset\tilde{J}$. Let $k\in J$, so that 
$j$ and $k$ are contained in the same sets $I_1,\ldots ,I_p\in{\cal I}$. Then, 
both $r(j)$ and $r(k)$ are contained in $r(I_1),\ldots ,r(I_p)\in{\cal I}$ and 
in no further $I\in{\cal I}$. This means that $r(k)\sim r(j)$, so 
$r(k)\in\tilde{J}$ and, hence, $r(J)\subset\tilde{J}$. Further, every 
$l\in\tilde{J}$ is the image under $r$ of $r^{-1}(l)\in J$, so that 
$r(J)=\tilde{J}$. $\Box$
\end{proof}
We can now understand the structure of the stabilizer group $S$.
\begin{theorem} The stabilizer group $S$ is given by $S=\bigotimes_{J\in{\cal 
J}}S(J)$, where $S(J)$ is the group of permutations on the set $J$. \label{thS} 
\end{theorem}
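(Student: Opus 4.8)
The plan is to prove the asserted equality of groups by establishing the two inclusions $\bigotimes_{J\in{\cal J}}S(J)\subseteq S$ and $S\subseteq\bigotimes_{J\in{\cal J}}S(J)$, after first fixing the natural identification of $\bigotimes_{J\in{\cal J}}S(J)$ with a subgroup of $S_n$. Since the equivalence classes $J\in{\cal J}$ partition $\{1,\ldots,n\}$, any tuple $(\sigma_J)_{J\in{\cal J}}$ with $\sigma_J\in S(J)$ assembles into a single permutation of $\{1,\ldots,n\}$ that acts as $\sigma_J$ on each block $J$ and leaves the block-structure intact; this gives the embedding $\bigotimes_{J\in{\cal J}}S(J)\hookrightarrow S_n$ that I would use throughout, and I would check at the outset that it is an injective group homomorphism (composition of assembled permutations matches componentwise composition, because the blocks are disjoint and no factor moves an index out of its block).

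For the inclusion $\bigotimes_{J\in{\cal J}}S(J)\subseteq S$, I would take such an assembled permutation $\sigma$ and verify $\sigma(I)=I$ for every $I\in{\cal I}$. The key input is the decomposition~\eqref{decomp}, which writes each $I\in{\cal I}$ as a disjoint union of the blocks $J\in{\cal J}$ that it contains. Since $\sigma$ maps each block $J$ to itself, it maps any such union of blocks to itself, whence $\sigma(I)=I$ and therefore $\sigma\in S$. This direction is essentially immediate once~\eqref{decomp} is in hand.

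For the reverse inclusion $S\subseteq\bigotimes_{J\in{\cal J}}S(J)$, I would invoke the first part of Lemma~\ref{lemmaJ}, which states precisely that any $s\in S$ satisfies $s(J)=J$ for all $J\in{\cal J}$. This means $s$ permutes the elements within each block $J$ separately, so its restriction $s|_J$ lies in $S(J)$; assembling these restrictions exhibits $s$ as the image of $(s|_J)_{J\in{\cal J}}\in\bigotimes_{J\in{\cal J}}S(J)$ under the embedding above. Combining the two inclusions then yields the claimed equality.

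I do not expect a genuine obstacle here: the substantive content has already been extracted into Lemma~\ref{lemmaJ} (block-preservation by elements of $S$) and into the block-decomposition~\eqref{decomp} of the sets $I\in{\cal I}$, so the theorem reduces to a direct bookkeeping consequence of these two facts. The only point that requires slight care is the explicit verification that $\bigotimes_{J\in{\cal J}}S(J)\to S_n$ is indeed an injective homomorphism, so that the set-theoretic inclusions upgrade to an identification of groups rather than merely of underlying sets.
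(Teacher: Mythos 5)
Your proposal is correct and follows essentially the same route as the paper's proof: one inclusion via part (i) of Lemma~\ref{lemmaJ} and the other via the block decomposition~\eqref{decomp} of each $I\in{\cal I}$. The only addition is your explicit check that the assembled permutations form an embedded subgroup of $S_n$, which the paper leaves implicit but is harmless bookkeeping.
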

\begin{proof} From Lemma~\eqref{lemmaJ} it is clear that $S\subset 
\bigotimes_{J\in{\cal J}}S(J)$. Conversely, let $s\in \bigotimes_{J\in{\cal 
J}}S(J)$, so that $s(J)=J$ for all $J\in{\cal J}$. Since, from 
Eq.~\eqref{decomp}, every $I\in{\cal I}$ can be written as a union of certain 
$J\in{\cal J}$ it follows that $s(I)=I$ for all $I\in{\cal I}$, so $s\in S$ and, 
hence, $\bigotimes_{J\in{\cal J}}S(J)\subset S$. $\Box$
\end{proof}

Returning to the general story, we call a permutation $r\in R$ {\em 
order-preserving} iff for all $J\in{\cal J}$ and all $i,j\in J$ with $i<j$ it 
follows that $r(i)<r(j)$. In other words, such permutations preserve the natural 
ordering of the numbers $1,\ldots ,n$ but only within each set $J\in{\cal J}$. 
The set of all order-preserving permutations forms a sub-group of $R$ which we denote by $P'$. 
\begin{lemma} Each equivalence class $p\in P$ contains exactly one 
order-preserving permutation in $P'$. \end{lemma}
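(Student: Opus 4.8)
The plan is to turn the statement into a bookkeeping fact about permutations of the index set ${\cal J}$. The key reduction, which I would establish first, is that passing to cosets of $S$ is the same as remembering only the induced action on ${\cal J}$. By Lemma~\ref{lemmaJ} every $r\in R$ induces a well-defined permutation $\bar r$ of ${\cal J}$ via $J\mapsto r(J)$. I would then check that two elements $r,r'\in R$ lie in the same coset of $S$ precisely when they induce the same permutation of ${\cal J}$: if $r'=rs$ with $s\in S$ then $r'(J)=r(s(J))=r(J)$ by Theorem~\ref{thS}, and conversely if $r(J)=r'(J)$ for all $J\in{\cal J}$ then $r^{-1}r'$ fixes every $J$ setwise and so lies in $S$, again by Theorem~\ref{thS}. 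Hence a class $p\in P$ is precisely the datum of a permutation $\bar r$ of ${\cal J}$ sending each $J$ to some $\tilde J=r(J)$ of equal cardinality.

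For existence I would produce a canonical order-preserving representative directly from this datum. Fix any representative $r$ of $p$ and define $r'$ by sending, for each $J\in{\cal J}$, the $k$-th smallest element of $J$ to the $k$-th smallest element of $r(J)$. Since the classes $J$ partition $\{1,\ldots,n\}$ and their images $r(J)$ do likewise, this is a well-defined permutation of $\{1,\ldots,n\}$, and it is order-preserving by construction, so $r'\in P'$. It remains to see that $r'\in R$ and that $r'\in rS$. For the former I would use the disjoint decomposition $I=\bigcup_{J\subset I}J$ of Eq.~\eqref{decomp}: since $r'(J)=r(J)$ as sets, it follows that $r'(I)=\bigcup_{J\subset I}r(J)=r(I)\in{\cal I}$, so $r'\in R$. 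Because $r'(J)=r(J)$ for every $J$, the coset criterion above gives $r'\in rS=p$.

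For uniqueness, suppose $r_1,r_2\in p$ are both order-preserving. Lying in the same class, they induce the same permutation of ${\cal J}$, so $r_1(J)=r_2(J)$ as sets for every $J$. On each $J$ the restrictions $r_1|_J$ and $r_2|_J$ are then both order-preserving bijections onto the same finite totally ordered set $r_1(J)$; since there is exactly one monotone bijection between two finite chains of equal size, $r_1|_J=r_2|_J$ for all $J$, whence $r_1=r_2$.

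The only genuinely delicate point is the verification that the constructed $r'$ actually lies in $R$ rather than merely in $S_n$, which is where the combinatorial structure of ${\cal I}$ enters through Eq.~\eqref{decomp}; the remaining ingredients are the elementary uniqueness of monotone bijections between equinumerous chains and the dictionary between cosets of $S$ and permutations of ${\cal J}$ supplied by Theorem~\ref{thS}. I expect the coset-versus-$\bar r$ translation to be the step where care is most needed, since both directions of it rely on the full strength of the factorisation $S=\bigotimes_{J\in{\cal J}}S(J)$.
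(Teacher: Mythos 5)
Your proof is correct and follows essentially the same route as the paper: existence via the canonical monotone bijection on each block $J\to r(J)$, and uniqueness from the fact that there is only one monotone bijection between two finite chains of equal size. The only cosmetic difference is that the paper obtains the representative by composing a given one with an element of $S$ (so membership in $R$ and in the coset is automatic), whereas you build it directly and then verify those memberships via Eq.~\eqref{decomp} and the coset-versus-$\bar r$ dictionary — both verifications are sound.
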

 \begin{proof} For existence, consider an arbitrary representative $s\in p$ of 
an equivalence class $p\in P$ and consider the restriction $s|_J:J\rightarrow 
s(J)$ to any $J\in {\cal J}$.  We can find a permutation $r_J\in S(J)$ of $J$ 
such that $s|_J\circ r_J$ has the order-preserving property. The maps 
$r_J:J\rightarrow J$ combine into permutation a $r\in S_n$ which, from 
Theorem~\ref{thS} is an element of the stabilizer group $S$. Then, $s\circ r\in 
p$ and it is order-preserving.\\
 For uniqueness, consider two permutations $s,\tilde{s}\in p$ which are both 
order-preserving. This means than $r\equiv s\circ\tilde{s}^{-1}$ is in $S$ that 
is, $r|_J\in S(J)$ is a permutation of $J$. For $i,j\in J$ with $i<j$ we have 
$r(i)<r(j)$ which means that $r|_J={\rm id}_J$. This is true for all $J\in{\cal 
J}$ so that $r={\rm id}$ and, hence, $s=\tilde{s}$. $\Box$
 \end{proof}
This result means that we can define a group isomorphism $\tau:P\rightarrow 
P'\subset R$ which assigns to a class $p\in P$ the unique order-preserving 
permutation $\tau(p)\in p$. Of course, $[\tau (p)]=p$, so $\tau$ splits the 
sequence~\eqref{seq}. This means that $R$ is a semi-direct product
\begin{equation}
 R\cong P\ltimes S \label{sd}
\end{equation}
with the isomorphism $\varphi: P\times S\rightarrow R$ defined by $\varphi 
((p,s))=\tau(p)s$ and the multiplication in $P\ltimes S$ given by
\begin{equation}
 (p,s)(\tilde{p},\tilde{s})=(p\tilde{p},\tilde{p}^{-1}s \tilde{p}\tilde{s})\; .
\end{equation} 
Hence, we can determine $P\cong P'$ by finding the permutations in $S_n$ which map 
the sets ${\cal I}$ into each other (that is, which are elements of $R$) and are 
order-preserving. The group $R$ can then be obtained from $S$ and $P$ by forming 
the semi-direct product~\eqref{sd}.

\subsection{Invariance group of the upstairs space}
We now discuss the structure of $G_B$ of the upstairs space $B=\mathbb{C}^n-Z(\Sigma)$.
First note that there is an obvious embedding $S_n\hookrightarrow {\rm 
Gl}(n,\mathbb{C})$ defined by $\sigma ({\bf e}_i)={\bf e}_{\sigma (i)}$, with 
the standard unit vectors ${\bf e}_i$. Hence, the above subgroups $S,R,P'\subset 
S_n$ have isomorphic images in ${\rm Gl}(n,\mathbb{C})$ which we denote by 
$S_B,R_B,P_B$, respectively. In fact, from the definition of $S, R,P$ it is 
clear that $S_B,R_B,P_B$ leave the zero set $Z$ invariant and are, hence, 
sup-groups of $G_B$. Another obvious sub-group of $G_B$ is 
\begin{equation}
 H_B=\{g\in G_B\,|\, g(Z(I))=Z(I),\;\forall I\in{\cal I}\}\subset G_B\; ,
\end{equation}
that is, the sub-group which leaves the components $Z(I)$ of the zero set 
invariant individually.  Clearly, $S_B\subset H_B$. 
\begin{lemma} For $g\in G_B$ we have $g(Z(I))\in{\cal Z}$ for all $I\in {\cal 
I}$, that is, elements of $G_B$ map zero sets $Z(I)$ into other zero sets. 
\end{lemma}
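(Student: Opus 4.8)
The plan is to recognise the components $Z(I)$, $I\in{\cal I}$, as the irreducible components of the affine variety $Z(\Sigma)$, and then to use that an invertible linear map preserving $Z(\Sigma)$ necessarily permutes its irreducible components.

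First I would record two elementary facts about the coordinate subspaces $Z(I)=\{{\bf x}\in\C^n\,|\,x_i=0\ \forall i\in I\}$: each is a linear subspace and therefore irreducible as an algebraic variety, and one has the inclusion $Z(I)\subseteq Z(I')$ precisely when $I'\subseteq I$ (indeed, if $j\in I'\setminus I$ then the standard basis vector ${\bf e}_j$ lies in $Z(I)\setminus Z(I')$). Next I would invoke the minimality built into the definition of ${\cal I}$: since no $I\in{\cal I}$ properly contains another, ${\cal I}$ is an antichain under inclusion, so by the previous observation no $Z(I)$ is contained in another. Combined with irreducibility, this shows that $Z(\Sigma)=\bigcup_{I\in{\cal I}}Z(I)$ is exactly the irredundant decomposition of $Z(\Sigma)$ into its irreducible components.

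Then I would exploit that $g\in G_B$ is a linear bijection of $\C^n$ fixing $B$ setwise. Because $g$ is a bijection, $g(Z(\Sigma))=g(\C^n\setminus B)=\C^n\setminus g(B)=\C^n\setminus B=Z(\Sigma)$, so $g$ restricts to an automorphism of the variety $Z(\Sigma)$, both $g$ and $g^{-1}$ being morphisms. Such an automorphism sends irreducible subvarieties to irreducible subvarieties and hence permutes the uniquely determined irreducible components of $Z(\Sigma)$. Since these components are precisely the $Z(I)$, it follows that $g(Z(I))=Z(I')$ for some $I'\in{\cal I}$, which is the assertion.

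The one step I would treat with care — and the only place where anything beyond routine verification enters — is the identification of the $Z(I)$ with the full set of irreducible components. This uses both the antichain property of ${\cal I}$ (without it a redundant $Z(I)$ sitting inside another could be mapped to a subspace not listed in ${\cal I}$) and the uniqueness of the irreducible decomposition of a Noetherian affine variety. The remaining ingredients, namely the irreducibility of coordinate subspaces and the fact that a variety automorphism permutes irreducible components, are standard.
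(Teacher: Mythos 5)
Your proof is correct, but it takes a genuinely different route from the paper's. You identify the $Z(I)$, $I\in{\cal I}$, as the irreducible components of the closed set $Z(\Sigma)$ — using that the coordinate subspaces are irreducible, that $Z(I)\subseteq Z(I')$ iff $I'\subseteq I$, and that ${\cal I}$ is an antichain by its minimality — and then invoke the uniqueness of the irredundant irreducible decomposition together with the fact that an automorphism of a variety permutes its components. The paper instead argues by elementary linear algebra: $g(Z(I))$ is a linear subspace of the same dimension, and since the only vector spaces contained in $Z(\Sigma)$ are the $Z(I')$ and their subspaces, $g(Z(I))\subseteq Z(I')$ for some $I'$; applying the same reasoning to $g^{-1}$ and using minimality of ${\cal I}$ forces equality. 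Your argument is more conceptual and would work verbatim for any automorphism of $B$ permuting the components of an arbitrary closed complement, at the price of importing the Noetherian decomposition theorem; the paper's is more self-contained, though it quietly relies on the fact that a linear subspace contained in a finite union of linear subspaces lies in one of them (which over $\mathbb{C}$ is the statement that a vector space is not a finite union of proper subspaces — essentially the same irreducibility input you make explicit). You are also right to flag the antichain property as the load-bearing step: the paper's proof uses it in exactly the same place, when it notes that zero sets already contained in others have been dropped.
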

\begin{proof} For $g\in G_B$, the image $g(Z(I))$ is a vector-space of the same 
dimension as $Z(I)$. But the only vector spaces contained in $Z$ are the $Z(I)$ 
and their sub-spaces, so $g(Z(I))$ must be contained in some $Z(I')$. Say that 
$Z(I')=g(Z(I))\oplus X$, for some vector space $X$. Then 
$g^{-1}(Z(I'))=Z(I)\oplus g^{-1}(X)$ which, by the same argument applied to 
$g^{-1}$, should be a subset of some $Z(\tilde{I})$. This means that 
$Z(I)\subset Z(\tilde{I})$ but, since we have dropped trivial zero sets which 
are already contained in others, it follows that $Z(I)=Z(\tilde{I})$. So we have 
$Z(I)\oplus g^{-1}(X)\subset Z(\tilde{I})=Z(I)$ which means that $g^{-1}(X)=0$ 
and, hence, $X=0$. Therefore, $g(Z(I))=Z(I')$. $\Box$
\end{proof}
\begin{lemma} The group $G_B$ can be decomposed as $G_B=P_B\,H_B$. Further, 
$H_B$ is a normal sub-group of $G_B$ and $P_B\cap H_B=1$. 
\end{lemma}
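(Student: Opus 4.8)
The lemma claims three things about the upstairs invariance group $G_B \subset \mathrm{Gl}(n,\C)$: first, $G_B = P_B\,H_B$ (every element factors as a permutation of zero-sets times a block-preserving map); second, $H_B \trianglelefteq G_B$; and third, $P_B \cap H_B = 1$. Together these give the semi-direct product structure $G_B = P_B \ltimes H_B$ asserted in Eq.~\eqref{GBres}.

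Let me think about the proof plan.

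The key input is the preceding lemma: any $g \in G_B$ maps each zero set $Z(I)$ to some other zero set $Z(I')$. So $g$ induces a permutation of the components $\{Z(I) : I \in \mathcal{I}\}$, equivalently of the index sets $\mathcal{I}$ — call it $p(g) \in P$.

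**Plan.** The plan is to exploit the fact, established in the preceding lemma, that every $g\in G_B$ permutes the components of the zero set: $g(Z(I))\in\mathcal{Z}$ for all $I\in\mathcal{I}$. This yields a well-defined map $\Phi: G_B \to P$ sending $g$ to the permutation it induces on the index sets $\mathcal{I}$ (equivalently, on $\mathcal{J}$, by Lemma~\ref{lemmaJ}). First I would check that $\Phi$ is a group homomorphism — this is immediate from functoriality, since $(g_1 g_2)(Z(I)) = g_1(g_2(Z(I)))$. Its kernel is precisely $H_B$, by the very definition of $H_B$ as the subgroup fixing each $Z(I)$ individually; this instantly gives normality of $H_B$ in $G_B$ (kernels are always normal), disposing of the second claim.

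For the factorization $G_B = P_B\,H_B$, I would argue that $\Phi$ is surjective onto $P$ with $P_B$ furnishing an explicit set of preimages. Recall $P_B\subset G_B$ is the image of the order-preserving permutations $P'\cong P$ under the embedding $S_n\hookrightarrow\mathrm{Gl}(n,\C)$, and each such permutation realizes the corresponding element of $P$ on the $Z(I)$. Hence $\Phi|_{P_B}: P_B \to P$ is an isomorphism, so $\Phi$ is surjective. Then for arbitrary $g\in G_B$, pick $p\in P_B$ with $\Phi(p)=\Phi(g)$; the element $p^{-1}g$ lies in $\ker\Phi = H_B$, so $g = p\,(p^{-1}g) \in P_B\,H_B$. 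The third claim, $P_B\cap H_B = 1$, follows because any element of the intersection is a permutation matrix lying in $\ker\Phi$, i.e.\ an order-preserving permutation inducing the trivial permutation on $\mathcal{J}$; by the uniqueness half of the splitting lemma (each class $p\in P$ contains exactly one order-preserving representative), the only such permutation is the identity.

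**The main obstacle.** The genuinely delicate point is showing that the map $\Phi$ is well-defined as a map into $P$ and that $\Phi|_{P_B}$ is an isomorphism onto $P$ — in other words, that the permutation of components induced by a general linear $g$ is always realizable by an honest coordinate permutation, and that $P_B$ is large enough to hit all of $P$. The preceding lemma guarantees $g$ permutes the $Z(I)$, but one must confirm the induced permutation respects the equivalence structure (that it descends consistently to $\mathcal{J}$ via Lemma~\ref{lemmaJ}) so that it indeed lands in the correctly-defined $P$; the surjectivity of $\Phi|_{P_B}$ then rests on the splitting $R\cong P\ltimes S$ of Eq.~\eqref{sd}, which identifies $P_B$ with a faithful copy of $P$ inside $\mathrm{Gl}(n,\C)$. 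Once this bookkeeping is in place, the three assertions fall out as above from the standard fact that a group is the internal semi-direct product of a normal kernel and a complementary image of a splitting.
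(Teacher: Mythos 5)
Your overall architecture is sound and, for two of the three claims, genuinely slicker than the paper's argument: realising $H_B$ as the kernel of a homomorphism $\Phi:G_B\to P$ gives normality for free (the paper instead verifies $g\eta g^{-1}\in H_B$ by a direct computation using the factorisation), and your derivation of $P_B\cap H_B=1$ from the uniqueness of order-preserving representatives matches the paper's. However, there is a genuine gap at exactly the point you flag as "the main obstacle" and then defer as "bookkeeping": you never show that the permutation of the components $Z(I)$ induced by a general $g\in\mathrm{Gl}(n,\mathbb{C})$ is actually realised by a coordinate permutation, i.e.\ that $\Phi$ lands in $P=R/S$ at all. Without this, the step "pick $p\in P_B$ with $\Phi(p)=\Phi(g)$" has no content, and the factorisation $G_B=P_B\,H_B$ — the only substantive claim of the lemma — is not established. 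This is not mere bookkeeping: a priori a linear map could permute the coordinate subspaces $Z(I)$ in a pattern not induced by any $\sigma\in S_n$, and ruling this out requires an argument.

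The paper supplies precisely this missing step, and it is where essentially all the work lies: for each $I$ one chooses a permutation matrix $r_I$ carrying the standard basis $\{{\bf e}_i\,|\,i\notin I\}$ of $Z(I)$ to that of $Z(I')=g(Z(I))$, and then checks that these local choices can be made consistent on overlaps because $g(Z(I)\cap Z(\tilde I))=g(Z(I))\cap g(Z(\tilde I))$, so that the $r_I$ glue to a single $r\in R_B$ with $r^{-1}g\in H_B$; writing $r=ps$ with $p\in P_B$, $s\in S_B\subset H_B$ then gives $g=ph$. (The consistency check is what guarantees that the cardinalities of all intersections among the $I\in\mathcal{I}$ are preserved, which is the combinatorial condition for a lift to $S_n$ to exist.) To complete your proof you would need to either reproduce this gluing argument to establish that $\Phi$ is well defined into $P$, or replace it by an equivalent counting argument; as written, the proposal identifies the hard step correctly but does not prove it.
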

\begin{proof} For the first part of the statement, consider a $g\in G_B$ and its 
restriction $g_I:Z(I)\rightarrow Z(I')=g(Z(I))$ to the zero set $Z(I)$. We can 
find a permutation matrix $r_I:Z(I)\rightarrow Z(I')$ which maps the standard 
basis $\{{\bf e}_i\,|\,i\notin I\}$ of $Z(I)$ into the basis $\{{\bf 
e}_i\,|\,i\notin I'\}$ of $Z(I')$. Clearly, $\tilde{h}_I\equiv r_I^{-1}\circ 
g_I:Z(I)\rightarrow Z(I)$ and $g_I=r_I\circ\tilde{h}_I$. The maps $r_I$ can be 
made consistent on intersections of zero sets since $g(Z(I)\cap 
Z(I'))=g(Z(I))\cap g(Z(I'))$. Hence, the $r_I$ can be combined to a map $r\in 
Gl(n,\mathbb{C})$ with $r|_{Z(I)}=r_I$ which permutes the standard basis vectors 
${\bf e}_i$ and maps zero sets $Z(I)$ into each other, so $r\in R_B$. Further, 
defining $\tilde{h}=r^{-1}\circ g$ it follows that $\tilde{h}\in H_B$. From the 
previous section, we know that we can write $r=ps$, where $p\in P_B$ and $s\in 
S_B$. Finally, with $h\equiv s\tilde{h}\in H_B$ we have $g=ph$, where $p\in P_B$ 
and $h\in H_B$.\\
To show that $H_B$ is a normal sub-group of $G_B$, consider a $\eta\in H_B$ and 
a $g=ph\in G_B$. Then $g\eta g^{-1}=ph\eta h^{-1}p^{-1}=ph'p^{-1}$, where 
$h'=h\eta h^{-1}\in H_B$. Hence, $g\eta g^{-1}(Z(I))=ph'p^{-1}(Z(I))=Z(I)$, so 
$g\eta g^{-1}\in H_B$.\\
Finally, from the definition of $R$ it is clear that $R_B\cap H_B=S_B$. But 
$S_B$ contains precisely one order-preserving element, the identity. $\Box$
\end{proof}
The result of the previous Lemma shows that $G_B$ is a semi-direct product
$
 G_B\cong P\ltimes H_B
$.

If we denote by $\phi: P\rightarrow P_B$ the isomorphism between 
order-preserving permutations in $P$ and corresponding $Gl(n,\mathbb{C})$ 
matrices, then a pair $(p,h)\in P\ltimes H_B$ is mapped to $\phi(p)h\in G_B$ and 
the multiplication in $P\ltimes H_B$ is
\begin{equation}
 (p,h)(\tilde{p},\tilde{h})=(p\tilde{p},\phi(\tilde{p})^{-1}h\phi(\tilde{p}
)\tilde{h})\; .
\end{equation} 
We have already understood the permutation group $P$ and how to compute it from 
the basic data. It remains to consider $H_B$. To this end, we define the group
\begin{equation}
 G({\cal J})=\bigotimes_{J\in {\cal J}}{\rm Gl}(J,\mathbb{C})\; , \label{GJ}
\end{equation}
 where ${\rm Gl}(J,\mathbb{C})$ is the general linear group acting in the 
directions of the coordinates ${\bf x}_J\equiv (x_j\,|\,j\in J)$. Recall that 
the sets ${\cal J}$ were defined around Eq.~\eqref{equiv}. The elements 
of~\eqref{GJ} can be viewed as block-diagonal $n\times n$ matrices with the 
blocks corresponding to the coordinates ${\bf x}_J$. 
\begin{lemma} It follows that $G({\cal J})\subset H_B$. \end{lemma}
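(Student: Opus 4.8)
The plan is to prove the stronger statement that every block-diagonal element of $G({\cal J})$ not only lies in $G_B$ but actually fixes each component $Z(I)$ of the zero set individually, which is precisely the defining property of $H_B$. Thus, rather than verify invariance of $Z(\Sigma)$ directly, I would verify $g(Z(I))=Z(I)$ for all $I\in{\cal I}$ and read off membership in $H_B$ (and hence in $G_B$) as a corollary.

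First I would record the geometric meaning of the relevant subspaces. The component $Z(I)=\{{\bf x}\in\mathbb{C}^n\,|\,x_i=0\;\forall i\in I\}$ is the coordinate subspace spanned by the standard basis vectors ${\bf e}_j$ with $j\notin I$, so $Z(I)=\bigoplus_{j\in\{1,\ldots,n\}\setminus I}\mathbb{C}{\bf e}_j$. On the other hand, an element $g\in G({\cal J})$ is block-diagonal with respect to the partition $\{1,\ldots,n\}=\bigcup_{J\in{\cal J}}J$, so it preserves each coordinate block $\mathbb{C}^J:=\mathrm{span}\{{\bf e}_j\,|\,j\in J\}$ and restricts there to an invertible linear map; in particular $g(\mathbb{C}^J)=\mathbb{C}^J$ for every $J\in{\cal J}$.

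The key step is to relate the complements $\{1,\ldots,n\}\setminus I$ to the blocks $J$, using the structural fact established just before Eq.~\eqref{decomp}: for $J\in{\cal J}$ and $I\in{\cal I}$ with $J\cap I\neq\emptyset$ one has $J\subset I$. Hence every $J$ is either contained in $I$ or disjoint from $I$, so the complement $\{1,\ldots,n\}\setminus I$ is itself a union of whole blocks $J\in{\cal J}$, and $Z(I)=\bigoplus_{J\subset\{1,\ldots,n\}\setminus I}\mathbb{C}^J$ is a direct sum of full coordinate blocks. Combining this with the previous paragraph gives, for any $g\in G({\cal J})$, that $g(Z(I))=\bigoplus_{J\subset\{1,\ldots,n\}\setminus I}g(\mathbb{C}^J)=\bigoplus_{J\subset\{1,\ldots,n\}\setminus I}\mathbb{C}^J=Z(I)$ for every $I\in{\cal I}$. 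It follows that $g$ leaves $Z(\Sigma)=\bigcup_{I\in{\cal I}}Z(I)$ invariant, so $g\in G_B$, and since it fixes each $Z(I)$ we conclude $g\in H_B$, as claimed.

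I do not anticipate a genuine obstacle here; the argument is essentially a bookkeeping consequence of the definition of ${\cal J}$. The only point requiring care is the claim that each $Z(I)$ decomposes into full coordinate blocks, which is exactly what the equivalence relation $\sim$ is engineered to guarantee. Conceptually, grouping coordinates by $\sim$ is precisely what makes the components $Z(I)$ compatible with the block structure, so that arbitrary invertible maps within each block automatically preserve every component of the zero set.
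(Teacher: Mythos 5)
Your argument is correct and is essentially the paper's own proof: both rest on the fact (established just before Eq.~\eqref{decomp}) that every $J\in{\cal J}$ is either contained in or disjoint from each $I\in{\cal I}$, so that block-diagonal maps preserve each $Z(I)$. The only cosmetic difference is that you write $Z(I)$ as a direct sum of the coordinate blocks $\mathbb{C}^J$ in the complement of $I$, whereas the paper writes it as the intersection $\bigcap_{J\subset I}Z(J)$; these are dual descriptions of the same decomposition.
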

\begin{proof}
 From Eq.~\eqref{decomp} we know that every set $I\in{\cal I}$ can be written as 
a disjoint union $I=\bigcup J$ of some $J\in {\cal J}$. It follows that 
$Z(I)=Z(\cup J)=\cap Z(J)$. The transformations~\eqref{GJ} leave all $Z(J)$ and, 
hence, $Z(I)$ invariant so that $G({\cal J})\subset H_B$. $\Box$
 \end{proof}
Unfortunately, $H_B$ can be larger than $G({\cal J})$. In general we know that 
the zero sets $Z(I)$ and their intersections $Z(I_1)\cap\dots\cap 
Z(I_p)=Z(I_1\cup\dots\cup I_p)$ have to be invariant under transformations 
$h=(h_{ij})\in H_B$. This implies, for any union $I_1\cup\dots\cup I_p$, that
\begin{equation}
 h_{ij}=0\;\mbox{ for }\; i\in I_1\cup\dots\cup I_p\; ,\;\;j\in 
(I_1\cup\dots\cup I_p)^c\; . \label{hcond}
\end{equation}
We call a zero set ${\cal I}$ and associated set ${\cal J}$ {\em regular} if 
for every $J\in{\cal J}$ and all $i\in J$, $j\in J^c$ there exists an $I\in 
{\cal I}$ with $i\in I$ but $j\notin I$. Then we have:
\begin{theorem}
If the zero set ${\cal I}$ is regular then $G({\cal J})=H_B$.
\end{theorem}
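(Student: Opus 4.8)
The plan is to establish the reverse inclusion $H_B\subset G({\cal J})$, since the containment $G({\cal J})\subset H_B$ was already proven in the preceding Lemma. Concretely, I would take an arbitrary $h=(h_{ij})\in H_B$ and show that it is block-diagonal with respect to the partition ${\cal J}$, i.e. that $h_{ij}=0$ whenever $i$ and $j$ belong to distinct classes $J\in{\cal J}$. Together with the invertibility of $h$ (which holds since $H_B\subset{\rm Gl}(n,\mathbb{C})$) this forces each diagonal block to lie in ${\rm Gl}(J,\mathbb{C})$, and hence $h\in\bigotimes_{J\in{\cal J}}{\rm Gl}(J,\mathbb{C})=G({\cal J})$.

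The key input is the vanishing condition~\eqref{hcond} recorded just above the statement: invariance of each zero set $Z(I)$ under $h$ (equivalently, that $h$ preserves the span of $\{{\bf e}_k\,|\,k\in I^c\}$) forces $h_{ij}=0$ for all $i\in I$ and $j\in I^c$, and I would only need this single-set version. The step where regularity enters is the following. Fix a class $J\in{\cal J}$ and indices $i\in J$, $j\in J^c$. By the regularity hypothesis there exists $I\in{\cal I}$ with $i\in I$ and $j\notin I$, i.e. $j\in I^c$. Applying~\eqref{hcond} to this particular $I$ immediately yields $h_{ij}=0$. Since $i\in J$ and $j\in J^c$ were arbitrary and this holds for every $J\in{\cal J}$, any two indices lying in different blocks give a vanishing matrix entry, which is exactly the desired block-diagonal structure.

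The argument is short, and the only real subtlety — the point I would flag as the crux rather than a genuine obstacle — is to recognise that regularity is precisely the combinatorial condition needed to separate an index $i\in J$ from every $j$ outside its class by a single member of ${\cal I}$; this is what converts the per-$I$ vanishing~\eqref{hcond} into per-$J$ block vanishing. Without regularity the zero sets $Z(I)$ need not resolve the partition ${\cal J}$ finely enough, $H_B$ can be strictly larger than $G({\cal J})$, and the statement fails, so the hypothesis cannot be dropped. I would close by invoking invertibility of $h$ to upgrade the block-diagonal form to genuine membership in $G({\cal J})$, completing the proof.
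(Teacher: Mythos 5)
Your proposal is correct and follows essentially the same route as the paper: for $i\in J$ and $j\in J^c$ use regularity to produce an $I\in{\cal I}$ separating them and apply the vanishing condition~\eqref{hcond} to conclude $h_{ij}=0$. The only difference is cosmetic — the paper spells out the symmetric case $i\in J^c$, $j\in J$ separately, whereas you correctly observe that quantifying over all blocks $J$ already kills every off-block entry in one pass.
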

\begin{proof}
Focus on a specific $J\in{\cal J}$ and fix a $i\in J$ and a $j\in J^c$. From 
regularity there exists an $I\in{\cal I}$ such that $i\in I$ and $j\in I^c$. 
Applying~\eqref{hcond} for $I$ and $I^c$ means that $h_{ij}=0$. Since $i$ and 
$j$ were arbitrary this means that $h_{ij}=0$ for all $i\in J$ and all $j\in 
J^c$.\\
Now consider an $i\in J^c$ and a $j\in J$. We know that $J^c$ is the disjoint 
union of the other $J$, so $i\in\tilde{J}\subset J^c$ for a particular 
$\tilde{J}\in{\cal J}$. Regularity, applied to $\tilde{J}$, means that there 
exists an $I\in{\cal I}$ with $i\in I$ and $j\in I^c$. As before, \eqref{hcond} 
applied to $I$ leads to $h_{ij}=0$. Since $i$, $j$ were arbitrary this means 
that $h_{ij}=0$ for all $i\in J^c$ and all $j\in J$.\\
In summary, for all $J\in{\cal J}$ we have $h_{ij}=0$ if $i\in J$ and $j\in J^c$ 
or if $i\in J^c$ and $j\in J$. This means that $h\in G({\cal J})$, so 
$H_B\subset G({\cal J})$. The opposite inclusion has already been shown in the 
previous Lemma, so $G({\cal J})=H_B$. $\Box$
 \end{proof}
 Therefore the complete symmetry group of $B=\mathbb{C}^n-Z(\Sigma)$ is given by
 \begin{equation}
  G_B\cong P\ltimes G(\mc{J})\; .
 \end{equation}

\subsection{Normalizer in ${\rm GL}(n,\mathbb{C})$}
In order to find the symmetry group $G_A$ of the toric space we need to compute 
the normalizer of the toric group ${\cal G}$ within $G_B$, the symmetry group of 
the upstairs space. This is relatively easy if the group elements of ${\cal G}$ 
are proportional to the unit matrix within each block~\eqref{GJ} of $G_B$. In order to sort this out 
it is instructive to first deal with a related - but simpler - problem, namely 
to compute the normalizer of ${\cal G}$ within ${\rm Gl}(n,\mathbb{C})$. 

To this end it is useful to split the various coordinate directions up into 
disjoint blocks ${\cal K}=\{K\,|\,K\subset \{1,\ldots ,n\}\}$, collecting the 
directions with the same charges ${\bf q}_i$, such that
\begin{equation}
 {\cal G}=\bigotimes_{K\in{\cal K}}{\cal G}_{{\bf q}_K}(K)
\end{equation}
where ${\cal G}_{{\bf q}_K}(K)=\{{\bf s}^{{\bf q}_K}\,{\bf 1}_{|K|}\,|\,{\bf 
s}\in(\mathbb{C}^*)^{n-d}\}$ consists of matrices proportional to the unit 
matrix and by ${\bf q}_K$ we denote the charge in the directions in $K$. As 
usual, we think of the symmetric group $S_n$ as being embedded into ${\rm 
Gl}(n,\mathbb{C})$ via $\sigma ({\bf e}_i)={\bf e}_{\sigma (i)}$ for $\sigma\in 
S_n$.  We would like to work out the normalizer
\begin{equation}
 N_{{\rm Gl}(n,\mathbb{C})}({\cal G})=\{g\in{\rm Gl}(n,\mathbb{C})\,|\, g{\cal 
G}={\cal G}g\}=
 \{g\in{\rm Gl}(n,\mathbb{C})\,|\, \forall\gamma\in{\cal G}\;\;\exists 
\tilde{\gamma}\in{\cal G}\,:\,g\gamma=\tilde{\gamma}g\}\; .
\end{equation} 
It is also useful to introduce the commutant
\begin{equation}
C_{{\rm Gl}(n,\mathbb{C})}({\cal G})=\{g\in{\rm Gl}(n,\mathbb{C})\,|\, 
g\gamma=\gamma g\;\forall\gamma\in{\cal G}\}\; ,
\end{equation}
of ${\cal G}$ within ${\rm Gl}(n,\mathbb{C})$ as well as the sub-groups of $S_n$ defined by
\begin{equation}
 {\cal R}=S_n\cap  N_{{\rm Gl}(n,\mathbb{C})}({\cal G})\; ,\quad {\cal 
S}=S_n\cap C_{{\rm Gl}(n,\mathbb{C})}({\cal G})\; .
\end{equation}
They consist of permutations which normalize or commute ${\cal G}$, respectively.  
We start by characterizing the groups ${\cal R}$ and ${\cal S}$ in a different 
way which is more useful for practical calculations.
\begin{lemma} i) ${\cal R}\subset\bar{\cal R}\equiv\{\sigma\in S_n\,|\, \sigma 
(K)\in{\cal K}\;\;\forall K\in{\cal K}\}$, ii) ${\cal S}=\{\sigma\in S_n\,|\, 
\sigma (K)=K\;\;\forall K\in{\cal K}\}$
\end{lemma}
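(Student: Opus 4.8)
The plan is to reduce both statements to a single conjugation computation together with one elementary fact about characters. I would write a generic element of ${\cal G}$ as the diagonal matrix $g_{\bf s}={\rm diag}({\bf s}^{{\bf q}_1},\ldots,{\bf s}^{{\bf q}_n})$ and use the embedding $\sigma({\bf e}_i)={\bf e}_{\sigma(i)}$ to record that $\sigma g_{\bf s}\sigma^{-1}$ is again diagonal, with $j$-th entry ${\bf s}^{{\bf q}_{\sigma^{-1}(j)}}$. The fact I would establish first, and then use throughout, is that two coordinate positions $a,b$ carry identical entries for \emph{every} element of ${\cal G}$ — that is ${\bf s}^{{\bf q}_a}={\bf s}^{{\bf q}_b}$ for all ${\bf s}\in(\mathbb{C}^*)^{n-d}$ — precisely when ${\bf q}_a={\bf q}_b$, since the characters ${\bf s}\mapsto{\bf s}^{\bf q}$ are pairwise distinct for distinct integer charge vectors. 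By definition of ${\cal K}$ this says exactly that $a,b$ lie in the same block $K\in{\cal K}$.

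For part (ii) I would argue directly. The condition $\sigma\in{\cal S}$ means $\sigma g_{\bf s}\sigma^{-1}=g_{\bf s}$ for all ${\bf s}$, which by the conjugation formula is ${\bf s}^{{\bf q}_{\sigma^{-1}(j)}}={\bf s}^{{\bf q}_j}$ for all $j$ and all ${\bf s}$, hence, by the character fact, ${\bf q}_{\sigma(i)}={\bf q}_i$ for all $i$. This is precisely the statement that $\sigma$ preserves each charge level set, i.e. $\sigma(K)=K$ for all $K\in{\cal K}$ (the inclusion $\sigma(K)\subseteq K$ forcing equality as $K$ is finite). Since every step is an equivalence, this yields the asserted \emph{equality} of groups.

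For part (i) the situation is genuinely harder, and this is the step I expect to be the main obstacle: the normalizer condition only gives $\sigma g_{\bf s}\sigma^{-1}=g_{\tilde{\bf s}}$ for some $\tilde{\bf s}$ depending on ${\bf s}$, rather than fixing $g_{\bf s}$, so one cannot compare charges at fixed positions directly. The resolution is to observe that for $\sigma\in{\cal R}$ conjugation by $\sigma$ is an \emph{automorphism} of ${\cal G}$, hence surjective, so that as ${\bf s}$ varies the matrices $g_{\tilde{\bf s}}$ exhaust all of ${\cal G}$. I would then fix $i,j$ in a common block, so ${\bf q}_i={\bf q}_j$; the $\sigma(i)$-th and $\sigma(j)$-th entries of $\sigma g_{\bf s}\sigma^{-1}$ are ${\bf s}^{{\bf q}_i}$ and ${\bf s}^{{\bf q}_j}$ and therefore coincide, so the element $g_{\tilde{\bf s}}$ has equal entries at positions $\sigma(i),\sigma(j)$. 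By surjectivity every element of ${\cal G}$ has equal entries at these two positions, and the character fact gives ${\bf q}_{\sigma(i)}={\bf q}_{\sigma(j)}$. Hence $\sigma$ maps the block of $i$ into a single block, i.e. $\sigma(K)\subseteq K'$ for some $K'\in{\cal K}$. Finally, applying the same conclusion to $\sigma^{-1}$, which also lies in ${\cal R}$ since ${\cal R}$ is a group, and using that distinct blocks are disjoint promotes this inclusion to $\sigma(K)=K'\in{\cal K}$, which establishes ${\cal R}\subseteq\bar{\cal R}$ and completes (i).
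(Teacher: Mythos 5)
Your proof is correct and follows essentially the same route as the paper: compare the diagonal entries of $\sigma g_{\bf s}\sigma^{-1}$ with those of an element of ${\cal G}$, use that distinct integer charge vectors give distinct characters of $(\mathbb{C}^*)^{n-d}$ to deduce ${\bf q}_{\sigma(i)}={\bf q}_{\sigma(j)}$ for $i,j$ in a common block, and then upgrade the resulting inclusion $\sigma(K)\subseteq K'$ to an equality by running the argument for $\sigma^{-1}$. The only cosmetic differences are that you make the surjectivity of conjugation on ${\cal G}$ explicit where the paper simply asserts ``this only works if all ${\bf q}_{\sigma(i)}$ are equal'', and that you establish part (ii) by direct computation where the paper invokes Schur's lemma for the converse inclusion.
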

\begin{proof} i) Consider a permutation $\sigma\in S_n$ which normalizes ${\cal 
G}$ so that for all $\gamma\in{\cal G}$ there exists a $\tilde{\gamma}\in{\cal 
G}$ such that $\gamma\sigma=\sigma\tilde{\gamma}$ with $\gamma={\rm diag}({\bf 
s}^{{\bf q}_i})$ and $\tilde{\gamma}={\rm diag}(\tilde{\bf s}^{{\bf q}_i})$. It 
follows that $\tilde{\bf s}^{{\bf q}_i}={\bf s}^{{\bf q}_{\sigma(i)}}$ for all 
$i$, so that $\tilde{\gamma}={\rm diag}({\bf s}^{{\bf s}_{\sigma(i)}})$. 
Consider all $i\in K$ for a given block $K$, so that ${\bf q}_i={\bf q}_K$. Then 
${\bf s}^{{\bf q}_{\sigma(i)}}=\tilde{\bf s}^{{\bf q}_K}$ for all $i\in K$. This 
only works if all ${\bf q}_{\sigma(i)}={\bf q}_{K'}$ are equal, so that 
$\sigma(i)\in K'$ for all $i\in K$. Hence, $\sigma(K)\subset K'\in{\cal K}$. 
Applying the same argument to $\sigma^{-1}$ and $K'$ leads to 
$\sigma^{-1}(K')\subset K$, so combined we have $\sigma (K)=K'$.\\
ii) Clearly, permutations which only act within blocks are in the commutant. 
Conversely, permutations in the commutant have to be block-diagonal from 
Schur's Lemma. $\Box$
\end{proof}
So, ${\cal S}$ is the set of permutations of directions within each block and 
$\bar{\cal R}$ contains permutations of blocks and within blocks. Unfortunately, 
the group ${\cal R}$ we are actually interested in can be genuinely smaller than 
$\bar{\cal R}$. If there are more blocks than variables ${\bf s}$ it might 
happen that a permutation of these blocks cannot be compensated by a choice of 
$\tilde{\bf s}$ in the normalizer condition. Clearly, ${\cal S}\subset{\cal R}$ 
is a sub-group and indeed a normal sub-group so that we can define the quotient 
group
\begin{equation}
 {\cal P}={\cal R}/{\cal S}\; .
\end{equation} 
There is a monomorphism ${\cal P}\rightarrow {\cal R}$ which involves assigning 
to an element of ${\cal P}$ the element of ${\cal R}$ which permutes the blocks 
in ${\cal K}$ in the same way and preserves the natural order in each block 
$K\in{\cal K}$. In this way,
\begin{equation}
 {\cal R}={\cal P}\ltimes{\cal S}\; .
\end{equation}
We are now ready for the following theorem.
\begin{theorem} The normalizer can be expressed in terms of the commutant as 
$N_{{\rm Gl}(n,\mathbb{C})}({\cal G})={\cal P}\,C_{{\rm Gl}(n,\mathbb{C})}({\cal 
G})$, $C_{{\rm Gl}(n,\mathbb{C})}({\cal G})$ is a normal sub-group of $N_{{\rm 
Gl}(n,\mathbb{C})}({\cal G})$ and ${\cal P}\cap C_{{\rm Gl}(n,\mathbb{C})}({\cal 
G})=1$.
\end{theorem}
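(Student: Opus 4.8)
The plan is to exploit the block structure induced by $\mathcal{K}$. Writing $N:=N_{{\rm Gl}(n,\mathbb{C})}(\mathcal{G})$ and $C:=C_{{\rm Gl}(n,\mathbb{C})}(\mathcal{G})$, I would first record two facts. Since every element of $\mathcal{G}$ is scalar on each block $K\in\mathcal{K}$, the commutation argument used in the previous Lemma shows that $C$ consists exactly of the block-diagonal matrices with respect to the decomposition $\mathbb{C}^n=\bigoplus_{K\in\mathcal{K}}V_K$, where $V_K=\mathrm{span}\{{\bf e}_i\,|\,i\in K\}$; equivalently $C=\bigotimes_{K\in\mathcal{K}}{\rm Gl}(K,\mathbb{C})$. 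Moreover the $V_K$ are precisely the weight spaces of the $\mathcal{G}$-action, with $V_K$ carrying the character $\chi_K:{\rm diag}({\bf s}^{{\bf q}_i})\mapsto{\bf s}^{{\bf q}_K}$, and distinct blocks carry distinct characters. The inclusion $\mathcal{P}\,C\subseteq N$ is then immediate: $C\subseteq N$ always holds, and $\mathcal{P}\subseteq\mathcal{R}=S_n\cap N\subseteq N$ by definition.

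The heart of the argument is the reverse inclusion. Given $g\in N$, conjugation $\phi_g:\gamma\mapsto g^{-1}\gamma g$ is an automorphism of $\mathcal{G}$, so for $v\in V_K$ one computes $\gamma(gv)=g\,\phi_g(\gamma)\,v=\chi_K(\phi_g(\gamma))\,(gv)=(\chi_K\circ\phi_g)(\gamma)\,(gv)$; thus $gV_K$ lies in the weight space for the character $\chi_K\circ\phi_g$. Since $gV_K\neq 0$ and the only weights occurring on $\mathbb{C}^n$ are the $\chi_{K'}$, this forces $gV_K\subseteq V_{K'}$ for some $K'$, and applying the same reasoning to $g^{-1}$ gives $gV_K=V_{K'}$ with $|K|=|K'|$. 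Hence $g$ permutes the weight spaces, inducing a permutation $\pi$ of $\mathcal{K}$. Let $p$ be the order-preserving block permutation realizing $\pi$, so that $p(V_K)=V_{\pi(K)}$. Then $c:=p^{-1}g$ fixes every $V_K$ setwise, is therefore block-diagonal, and so lies in $C$. This gives $g=pc$, and since $p=gc^{-1}\in N\cap S_n=\mathcal{R}$ is order-preserving it indeed lies in $\mathcal{P}$. This proves $N=\mathcal{P}\,C$.

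For normality, take $g\in N$ and $c\in C$; for any $\gamma\in\mathcal{G}$ set $\gamma'=g^{-1}\gamma g\in\mathcal{G}$, so that $gcg^{-1}\gamma=gc\gamma'g^{-1}=g\gamma'cg^{-1}=\gamma\,gcg^{-1}$, using that $c$ commutes with $\gamma'\in\mathcal{G}$ and that $g\gamma'g^{-1}=\gamma$. As $\gamma$ was arbitrary, $gcg^{-1}\in C$, so $C\trianglelefteq N$. Finally, since $\mathcal{P}\subseteq S_n$ we have $\mathcal{P}\cap C\subseteq S_n\cap C=\mathcal{S}$, and any element of $\mathcal{P}\cap\mathcal{S}$ is simultaneously order-preserving and stabilizes every block $K$, hence is the identity; thus $\mathcal{P}\cap C=1$. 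The one genuinely nontrivial point, and the step I would expect to require the most care, is the weight-space permutation claim in the second paragraph, namely verifying that a normalizing element $g$ cannot mix different blocks; once $C$ is identified with the block-diagonal subgroup, everything else is formal group theory.
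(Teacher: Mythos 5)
Your proof is correct, and its overall skeleton matches the paper's: decompose a normalizing element as (block permutation) $\times$ (commutant element), then observe that normality and $\mathcal{P}\cap C=1$ are formal. The one substantive step — showing $N\subseteq\mathcal{P}\,C$ — is, however, argued by a genuinely different mechanism. The paper works at the level of matrix entries: from $g\gamma=\tilde\gamma g$ it derives $g_{ij}(\tilde\chi_i-\chi_j)=0$, uses invertibility of $g$ to extract a permutation $\sigma$ with $g_{i\sigma^{-1}(i)}\neq 0$ (a nonzero generalized diagonal), concludes $\tilde\chi_i=\chi_{\sigma^{-1}(i)}$, and checks that $c=\sigma^{-1}g$ commutes with $\mathcal{G}$; the resulting $\sigma$ need not be order-preserving, so it is further factored as $\sigma=ps$ with $s\in\mathcal{S}$ absorbed into the commutant. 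You instead identify the blocks $V_K$ as the weight spaces of the $\mathcal{G}$-action with pairwise distinct characters, and use the fact that conjugation by $g$ is an automorphism of $\mathcal{G}$ to show $g$ permutes these weight spaces, then peel off the order-preserving block permutation directly. Your route is more conceptual and makes the underlying isotypic-decomposition structure explicit (it also cleanly explains why only whole blocks can be permuted, and why $|K|=|K'|$), at the cost of presupposing the identification $C=\bigotimes_{K}{\rm Gl}(K,\mathbb{C})$, which the paper only records after the theorem; that identification is an easy consequence of the same weight-space observation (distinct charge vectors $\mathbf{q}_K$ give distinct characters of $(\mathbb{C}^*)^{n-4}$), so nothing is circular. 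The paper's entrywise argument is more elementary and self-contained but hides the representation-theoretic reason the decomposition exists. Both are valid; the remaining parts (normality of $C$ in $N$ and triviality of $\mathcal{P}\cap C$) coincide with the paper's in all essentials.
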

\begin{proof} We begin by showing that every matrix $g=\in N_{{\rm 
Gl}(n,\mathbb{C})}({\cal G})$ can be written as a product of a permutation and a 
matrix in the commutant. First, $g=(g_{ij})\in N_{{\rm Gl}(n,\mathbb{C})}({\cal 
G})$ means that for all $\gamma\in{\cal G}$ there exists a $\tilde{\gamma}\in 
{\cal G}$ such that $g\gamma=\tilde{\gamma} g$. Write $\gamma={\rm 
diag}(\chi_i)$ and $\tilde{\gamma}={\rm diag}(\tilde{\chi}_i)$ where 
$\chi_i={\bf s}^{{\bf q}_i}$ and $\tilde{\chi}_i=\tilde{\bf s}^{{\bf q}_i}$. 
Then it follows that $g_{ij}(\tilde{\chi}_i-\chi_j)=0$ for all $i$, $j$. The 
matrix $g$ is invertible which means that for every row $i$ there exists a 
column $j$ with $g_{ij}\neq 0$ and, moreover, for each row we can choose a 
different column. This means there exists a permutation $\sigma\in S_n$ so that 
$g_{i\sigma^{-1}(i)}\neq 0$ for all $i$. In order to satisfy the normalized 
condition we then need that $\tilde{\chi}_i=\chi_{\sigma^{-1}(i)}$. Now define 
$c=\sigma^{-1}g$ so that $g=\sigma c$. The 
normalizer condition then reads $c\gamma 
c^{-1}=\sigma^{-1}\tilde{\gamma}\sigma={\rm diag}(\tilde{\chi}_{\sigma 
(i)})={\rm diag}(\chi_i)=\gamma$ and, hence, $c\in C_{{\rm 
Gl}(n,\mathbb{C})}({\cal G})$ is in the commutant. This shows that every $g$ in 
the normalizer can be written as $g=\sigma c$ with $\sigma\in S_n$ and $c$ in 
the commutant. With this decomposition, the normalizer condition, 
$g\gamma=\tilde{\gamma}g$ becomes $\sigma\gamma=\tilde{\gamma}\sigma$ which 
shows that, in fact, $\sigma\in {\cal R}$. Further we know that every $r\in 
{\cal R}$ can be written as $r=ps$, where $p\in{\cal P}$ and $s\in{\cal S}$, so 
that $g=rc=psc$. But both $s$, $c$ and their product $sc$ are in the commutant. 
This proves the decomposition  $N_{{\rm Gl}(n,\mathbb{C})}({\cal G})={\cal 
P}\,C_{{\rm Gl}(n,\mathbb{C})}({\cal G})$.\\
We now need to show that $C_{{\rm Gl}(n,\mathbb{C})}({\cal G})$ is a normal 
sub-group of $N_{{\rm Gl}(n,\mathbb{C})}({\cal G})$. Choose a $c$ in the 
commutant and $g=pc'$ in the normalizer. Then $gcg^{-1}=p\tilde{c}p^{-1}$ where 
$\tilde{c}=c'c{c'}^{-1}$ is in the commutant, so we need to show that 
$p\tilde{c}p^{-1}$ is also in the commutant, that is 
$p\tilde{c}p^{-1}\gamma=\gamma p\tilde{c}p^{-1}$ for all $\gamma\in{\cal G}$. 
This is equivalent to $\tilde{c}\tilde{\gamma}=\tilde{\gamma}\tilde{c}$ where 
$\tilde{\gamma}=p^{-1}\gamma p$. However, $\tilde{\gamma}$ has the same 
block-structure than elements in ${\cal G}$ (as the action of $p$ permutes the 
blocks) so that $\tilde{c}$ and $\tilde{\gamma}$ indeed commute.\\ 
Finally, since the action $p^{-1}\gamma p$ of a $p\in{\cal P}$ on 
$\gamma\in{\cal G}$ permutes the blocks of $\gamma$, it follows that the only 
$p\in {\cal P}$ for which $p^{-1}\gamma p =\gamma$ is, in fact, $p=1$, so that 
$N_{{\rm Gl}(n,\mathbb{C})}({\cal G})$ and ${\cal P}\cap C_{{\rm 
Gl}(n,\mathbb{C})}({\cal G})=1$. $\Box$
\end{proof}
So, in summary, we have found that the normalizer can be expressed in terms of 
the commutant as
\begin{equation}
 N_{{\rm Gl}(n,\mathbb{C})}({\cal G})={\cal P}\ltimes C_{{\rm 
Gl}(n,\mathbb{C})}({\cal G})\; ,
\end{equation}
and the latter,  by means of Schur's Lemma, can be written as
\begin{equation}
 C_{{\rm Gl}(n,\mathbb{C})}({\cal G})=\bigotimes_{K\in{\cal K}}{\rm 
Gl}(K,\mathbb{C})\; .
\end{equation} 
\subsection{Invariance group of the toric ambient space}
In order to find the symmetry group $G_A$ of the toric ambient space $A$ we have to compute the normalizer $N_{G_B}({\cal G})$. 
Clearly, this normalizer is given by
\begin{equation}
 N_{G_B}({\cal G})=G_B\cap N_{{\rm Gl}(n,\mathbb{C})}({\cal G})\; .
\end{equation} 
The two groups on the right-hand side have been explicitly determined in the 
previous two sections, so our remaining task is to express their intersection in 
a useful form. Both groups have a similar structure in that they relate to a 
block-decomposition of the $n$ coordinates and consist of a semi-direct product 
of a permutation group which exchanges the blocks times block-diagonal matrices 
which generate general linear transformations within each block. However, the 
block-decomposition is in general different for the two cases. The group $G_B$ 
relates to the block-decomposition ${\cal J}$ which follows from the structure 
of the zero set and is given by
\begin{equation}
 G_B=P\ltimes H_B\; ,\quad H_B=\bigotimes_{J\in{\cal J}}{\rm Gl}(J,\mathbb{C})\; 
.
\end{equation} 
The group $N_{{\rm Gl}(n,\mathbb{C})}({\cal G})$, on the other hand, relates to 
the block-decomposition ${\cal K}$ which follows from the toric group ${\cal G}$ 
and is given by
\begin{equation}
 N_{{\rm Gl}(n,\mathbb{C})}({\cal G})={\cal P}\ltimes C_{{\rm 
Gl}(n,\mathbb{C})}({\cal G})\; ,\quad
 C_{{\rm Gl}(n,\mathbb{C})}({\cal G})=\bigotimes_{K\in{\cal K}}{\rm 
Gl}(K,\mathbb{C})\; .
\end{equation} 
The complication is that the block-decompositions ${\cal J}$ and ${\cal K}$ are 
not necessarily identical. It is, therefore, useful to define the refined 
block-decomposition given as the intersection of ${\cal J}$ and ${\cal K}$ 
defined by\footnote{For fans obtained from triangulations of reflexive polytopes, ${\cal K}$ always is a refinement of ${\cal J}$ (see Appendix \ref{sect:setJ}).}
\begin{equation}
 {\cal L}=\{L=J\cap K\,|\,J\in{\cal J}\, ,\;K\in{\cal K}\, ,\; L\neq 
\emptyset\}\; .
\end{equation}
We introduce the usual collection of groups associated to this 
block-decomposition. First, there are the block-diagonal matrices in
\begin{equation}
 H_A=\bigotimes_{L\in{\cal L}}{\rm Gl}(L,\mathbb{C})\; .
\end{equation}
Then we have the stabilizer group, $S_{\cal L}$, and the block-permutation 
group, $R_{\cal L}$, defined by
\begin{equation}
 S_{\cal L}=\{\sigma\in S_n\,|\, \sigma (L)=L\;\;\forall L\in{\cal L}\}\; ,\quad
 R_{\cal L}=\{\sigma\in S_n\,|\, \sigma (L)\in{\cal L}\;\;\forall L\in{\cal 
L}\}\; .
 \end{equation}
 As usual, we can then form the quotient and the semi-direct product
 \begin{equation}
  P_{\cal L}=R_{\cal L}/S_{\cal L}\; ,\quad  R_{\cal L}=P_{\cal L}\ltimes 
S_{\cal L}\; .
 \end{equation}  
 From the definition of the various block structures is it clear that
 \begin{equation}
  H_A=H_B\cap C_{{\rm Gl}(n,\mathbb{C})}({\cal G})\; ,
\end{equation} 
and we expect this to be the continuous part of $N_{G_B}({\cal G})$. We also 
define the intersection of the permutation groups
\begin{equation}
 R_A=R\cap{\cal R}\; ,\quad S_A=S\cap{\cal S}\; ,\quad P_A=R_A/S_A\; ,\quad 
R_A=P_A\ltimes S_A\; .
\end{equation}
The two sets of discrete groups relate in the following way.
\begin{lemma} i) $R_A\subset R_{\cal L}$, ii) $S_A= S_{\cal L}$, iii) 
$P_A\subset P_{\cal L}$. \end{lemma}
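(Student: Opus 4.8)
The plan is to derive all three parts directly from the two structural results already in hand. The first is Lemma~\ref{lemmaJ} together with Theorem~\ref{thS}, which tell us that elements of $S$ fix each $J\in\mathcal{J}$ while elements of $R$ permute the $J\in\mathcal{J}$. The second is the analogous lemma characterizing $\mathcal{S}$ and $\mathcal{R}$, namely that elements of $\mathcal{S}$ fix each $K\in\mathcal{K}$ and, since $\mathcal{R}\subset\bar{\mathcal{R}}$, elements of $\mathcal{R}$ permute the $K\in\mathcal{K}$. The only additional ingredients are the elementary fact that an injection commutes with intersection, $\sigma(A\cap B)=\sigma(A)\cap\sigma(B)$, and the observation that $\mathcal{J}$ and $\mathcal{K}$ are both partitions of $\{1,\dots,n\}$, so that $\mathcal{L}$ is genuinely their common refinement and every $J$ (resp.\ every $K$) is a disjoint union of blocks of $\mathcal{L}$.

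For part (i) I would take $\sigma\in R_A=R\cap\mathcal{R}$ and an arbitrary $L=J\cap K\in\mathcal{L}$. Since $\sigma\in R$ gives $\sigma(J)\in\mathcal{J}$ and $\sigma\in\mathcal{R}$ gives $\sigma(K)\in\mathcal{K}$, commuting $\sigma$ past the intersection yields $\sigma(L)=\sigma(J)\cap\sigma(K)$, the intersection of a $\mathcal{J}$-block with a $\mathcal{K}$-block; injectivity of $\sigma$ keeps it nonempty, hence in $\mathcal{L}$, so $\sigma\in R_{\mathcal{L}}$. Part (ii) is the same computation with all memberships upgraded to equalities: for $S_A\subseteq S_{\mathcal{L}}$ one uses $\sigma(J)=J$ and $\sigma(K)=K$ to obtain $\sigma(L)=L$; for the reverse inclusion $S_{\mathcal{L}}\subseteq S_A$ one invokes that each $J$ and each $K$ is a disjoint union of $\mathcal{L}$-blocks, so fixing every $L$ forces $\sigma(J)=J$ and $\sigma(K)=K$ for all $J,K$, i.e.\ $\sigma\in S\cap\mathcal{S}=S_A$ by the characterizations of $S$ and $\mathcal{S}$.

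Finally, part (iii) follows formally from (i) and (ii). I would consider the composite homomorphism $R_A\hookrightarrow R_{\mathcal{L}}\twoheadrightarrow P_{\mathcal{L}}$; its kernel is $R_A\cap S_{\mathcal{L}}$, which by (ii) equals $R_A\cap S_A=S_A$, using the easy inclusion $S_A\subseteq R_A$ that holds because $S\subseteq R$ and $\mathcal{S}\subseteq\mathcal{R}$. The first isomorphism theorem then identifies $P_A=R_A/S_A$ with a subgroup of $P_{\mathcal{L}}$. I do not anticipate any real obstacle; the only point demanding a little care is the reverse inclusion in (ii), where one must use that $\mathcal{L}$ refines \emph{both} partitions, so that fixing every refinement block is equivalent to fixing all blocks of $\mathcal{J}$ and of $\mathcal{K}$ simultaneously.
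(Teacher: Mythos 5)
Your proof is correct and follows essentially the same route as the paper: part (i) and the forward inclusion of (ii) via $\sigma(J\cap K)=\sigma(J)\cap\sigma(K)$, and the reverse inclusion of (ii) by writing each $J$ and $K$ as a disjoint union of $\mathcal{L}$-blocks. Your part (iii) merely makes explicit, via the first isomorphism theorem applied to $R_A\hookrightarrow R_{\mathcal{L}}\twoheadrightarrow P_{\mathcal{L}}$, what the paper dismisses as following ``directly from i) and ii)''.
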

\begin{proof} i) Consider a $r\in R\cap{\cal R}$ so that $r(J)\in{\cal J}$ for 
all $J\in{\cal J}$ and $r(K)\in{\cal K}$ for all $K\in{\cal K}$. Then, for any 
$L=J\cap K\in{\cal L}$ we have $r(L)=r(J\cap K)=r(J)\cap r(K)$ so that 
$r(L)\in{\cal L}$. Hence, $R\cap{\cal R}\subset R_{\cal L}$.\\
ii) ``$\subset$": For $s\in S\cap{\cal S}$ we have $s(J)=J$ for all $J\in{\cal 
J}$ and $s(K)=K$ for all $K\in{\cal K}$. Then, for all $L=J\cap K\in{\cal L}$ we 
have $s(L)=s(J\cap K)=s(J)\cap s(K)=J\cap K=L$, so that $s\in S_{\cal L}$.\\
``$\supset$": Let $s\in S_{\cal L}$, hence $s(L)=L$ for all $L\in{\cal L}$. 
Every $J\in{\cal J}$ can be written as a disjoint union $J=\bigcup_{L:L\cap 
J\neq\emptyset}L$ and likewise every $K\in{\cal K}$ as $K=\bigcup_{L:L\cap 
K\neq\emptyset}L$. Hence, since all $L$ are left invariant by $s$ so are all $J$ 
and $K$ and it follows that $s\in S\cap {\cal S}$.\\
iii) This follows directly from i) and ii).  $\Box$ 
\end{proof}

\begin{theorem} The normalizer group $N_{G_B}({\cal G})$ can be written as 
$N_{G_B}({\cal G})=R_AH_A$.\end{theorem}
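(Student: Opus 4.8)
The plan is to establish the two inclusions $R_A H_A \subseteq N_{G_B}({\cal G})$ and $N_{G_B}({\cal G}) \subseteq R_A H_A$ separately and conclude equality. For the easy inclusion I would recall from the previous section that $N_{G_B}({\cal G}) = G_B \cap N_{{\rm Gl}(n,\mathbb{C})}({\cal G})$ and observe that both generators already sit inside this intersection. Indeed, the permutation matrices in $R_A = R \cap {\cal R}$ preserve $Z(\Sigma)$ (so they lie in $G_B$) and normalise ${\cal G}$ (so they lie in $N_{{\rm Gl}(n,\mathbb{C})}({\cal G})$), while $H_A = H_B \cap C_{{\rm Gl}(n,\mathbb{C})}({\cal G})$ sits in $G_B$ through $H_B$ and in $N_{{\rm Gl}(n,\mathbb{C})}({\cal G})$ through the commutant. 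Since $N_{G_B}({\cal G})$ is a group, the product $R_A H_A$ is contained in it.

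The substance is the reverse inclusion. Take $g \in N_{G_B}({\cal G})$. Since $g \in G_B = P \ltimes H_B$ with $H_B = \bigotimes_{J} {\rm Gl}(J,\mathbb{C})$, the element $g$ is block-monomial with respect to the partition ${\cal J}$: it sends each coordinate subspace $\langle e_j : j \in J\rangle$ isomorphically onto $\langle e_j : j \in J'\rangle$ for some $J' \in {\cal J}$. Since also $g \in N_{{\rm Gl}(n,\mathbb{C})}({\cal G}) = {\cal P} \ltimes C_{{\rm Gl}(n,\mathbb{C})}({\cal G})$, the same holds for the partition ${\cal K}$. I would then use that $g$ is a linear isomorphism, so $g(U \cap V) = g(U) \cap g(V)$, applied to $\langle J\rangle$ and $\langle K\rangle$ whose intersection is $\langle J \cap K\rangle$. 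Because $J \cap K$ ranges over ${\cal L}$ and $g$ carries it to $\langle J'\rangle \cap \langle K'\rangle = \langle J' \cap K'\rangle$ with $J' \cap K' \in {\cal L} \cup \{\emptyset\}$ of the same dimension as $J \cap K$ (hence non-empty and in ${\cal L}$), I conclude that $g$ is block-monomial with respect to the refined partition ${\cal L}$.

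Block-monomiality with respect to ${\cal L}$ lets me factor $g = r\, h_A$, where $r \in S_n$ is the order-preserving permutation realising the induced bijection of the ${\cal L}$-blocks and $h_A := r^{-1} g$ preserves each $\langle L\rangle$, so $h_A \in H_A = \bigotimes_{L} {\rm Gl}(L,\mathbb{C})$. It then remains to pin down $r$. Writing $r = g\, h_A^{-1}$ and using $H_A \subseteq H_B$ together with $H_A \subseteq C_{{\rm Gl}(n,\mathbb{C})}({\cal G})$, I obtain $r \in G_B$ and $r \in N_{{\rm Gl}(n,\mathbb{C})}({\cal G})$. A permutation matrix lying in $G_B$ must permute the irreducible components $Z(I)$ of $Z(\Sigma)$, so its underlying permutation lies in $R$; a permutation matrix in $N_{{\rm Gl}(n,\mathbb{C})}({\cal G})$ lies in ${\cal R}$ by definition. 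Hence $r \in R \cap {\cal R} = R_A$, giving $g = r\, h_A \in R_A H_A$, which completes the proof.

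The main obstacle is the middle step: showing that $g$ is simultaneously block-monomial for ${\cal J}$ and ${\cal K}$, and therefore for their common refinement ${\cal L}$. Everything afterwards is comparatively routine — extracting the permutation $r$ and checking that it lands in $R \cap {\cal R} = R_A$, rather than merely in the a priori larger $R_{\cal L}$, follows automatically because $r = g\, h_A^{-1}$ inherits membership in both $G_B$ and $N_{{\rm Gl}(n,\mathbb{C})}({\cal G})$.
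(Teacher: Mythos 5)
Your proof is correct, but it takes a genuinely different route for the hard inclusion $N_{G_B}({\cal G})\subseteq R_AH_A$. The paper works purely algebraically with the two semidirect-product factorisations: it writes $g=p\tilde{h}=\pi\tilde{c}$ with $p\in P$, $\tilde h\in H_B$, $\pi\in{\cal P}$, $\tilde c\in C_{{\rm Gl}(n,\mathbb{C})}({\cal G})$, observes that the permutation $\pi^{-1}p=\tilde c\tilde h^{-1}$ must lie in ${\cal S}S$ (the permutations inside $C_{{\rm Gl}(n,\mathbb{C})}({\cal G})H_B$), and then re-absorbs $s\in S$ and $\sigma\in{\cal S}$ into the continuous factors to land on $g=\gamma h$ with $\gamma=ps=\pi\sigma\in R\cap{\cal R}=R_A$ and $h=c\in H_B\cap C_{{\rm Gl}(n,\mathbb{C})}({\cal G})=H_A$. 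You instead argue geometrically: $g$ permutes the coordinate subspaces of both partitions ${\cal J}$ and ${\cal K}$, hence (by intersecting subspaces and counting dimensions) those of the common refinement ${\cal L}$; you then peel off the canonical order-preserving permutation $r$ and check that $r=gh_A^{-1}$ inherits membership in $G_B$ and $N_{{\rm Gl}(n,\mathbb{C})}({\cal G})$, whence $r\in R\cap{\cal R}$ via the lemma that elements of $G_B$ permute the components $Z(I)$. Your version makes the role of ${\cal L}$ explicit and yields a distinguished permutation representative, at the cost of invoking the block-monomial structure and the $Z(I)$-permutation lemma; the paper's version is shorter once the two factorisations are available, but quietly relies on the unproved identification of the permutations in $C_{{\rm Gl}(n,\mathbb{C})}({\cal G})H_B$ with ${\cal S}S$, a step your argument sidesteps entirely. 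Both are valid.
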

\begin{proof} ``$\supset$": Consider a $g=rh\in R_AH_A$, where $r\in 
R_A=R\cap{\cal R}$ and $h\in H_A=H_B\cap C_{{\rm Gl}(n,\mathbb{C})}({\cal G})$. 
It follows that $rh\in RH_B=G_B$ and $rh\in{\cal R}C_{{\rm 
Gl}(n,\mathbb{C})}=N_{{\rm Gl}(n,\mathbb{C})}({\cal G})$ and, hence, $g=rh\in 
N_{G_B}({\cal G})$.\\
``$\subset$": Start with a $g\in N_{G_B}({\cal G})=G_B\cap N_{{\rm 
Gl}(n,\mathbb{C})}({\cal G})$ which can be written as 
$g=p\tilde{h}=\pi\tilde{c}$ with $p\in P$, $\tilde{h}\in H_B$, $\pi\in{\cal P}$, 
$\tilde{c}\in C_{{\rm Gl}(n,\mathbb{C})}({\cal G})$. It follows that 
$\pi^{-1}p=\tilde{c}\tilde{h}^{-1}\in C_{{\rm Gl}(n,\mathbb{C})}({\cal G})H_B$. 
Since $\pi^{-1}p$ is a permutation it follows that $\pi^{-1}p\in {\cal S}S$ 
which is the set of permutations in $C_{{\rm Gl}(n,\mathbb{C})}({\cal G})H_B$. Write 
$\pi^{-1}p=\sigma s^{-1}$, where $\sigma\in{\cal S}$ and $s\in S$, so that 
$ps=\sigma\pi\equiv\gamma$. Since $ps\in R$ and $\pi\sigma\in{\cal R}$ it 
follows that $\gamma\in R\cap{\cal R}=R_A$. Define $h=s^{-1}\tilde{h}\in H_B$ 
and $c=\sigma^{-1}\tilde{c}\in C_{{\rm Gl}(n,\mathbb{C})}({\cal G}$ so that 
$g=\gamma h=\gamma c$. It follows that $h=c\in H_B\cap C_{{\rm 
Gl}(n,\mathbb{C})}({\cal G}=H_A$ and, therefore $g=\gamma h\in R_AH_A$. $\Box$
\end{proof}
It is clear that we can write $N_{G_B}({\cal G})$ as a semi-direct product
\begin{equation}
 N_{G_B}({\cal G})=P_A\ltimes H_A
\end{equation}
by dividing out the group $S_A$. To find the invariance group, $G_A$, of the 
toric space we have to divide this by ${\cal G}$ which results in
\begin{equation}\label{eq:toricsymgroup}
 G_A=P_A\ltimes(H_A/{\cal G})\; .
\end{equation} 
This completes the calculation of $G_A$.

\subsection{Toric Varieties Obtained from Reflexive Polytopes}

The results outlined above hold for any toric variety constructed from a simplicial fan. Here, however, we are interested in toric varieties which are associated with triangulations of reflexive polytopes, which gives extra structure. Let us first recall the definition of Demazure roots, \eqref{liealgroots}. In the context of reflexive polytopes, it directly follows that only $x_i$ corresponding to vertices ${\bf v}_i$ can have Demazure roots, which are in turn given by all interior points\footnote{This is the reason for the appearance of one of the correction terms in Batyrev's formula for $h^{2,1}$ of a toric hypersurface Calabi-Yau threefold.} of the dual faces $\Theta^{[3]}_i$. Restricting to actions which act linearly on the homogeneous coordinates we recover the theorem stated in section \ref{sect:setK}. The set of all roots $\alpha$ for which $-\alpha$ is a root as well are known as semi-simple roots in the literature. 

\subsubsection{The set $\mathcal{K}$} \label{sect:setK}

Let us review the set $\mathcal{K}$. Its elements $K$ are sets 
containing a collection of indices $\{1,..,n\}$ such that ${\bf q}_i = {\bf q}_j$ for all 
$i,j$ in a specific $K$. As we discuss now, 
this has fairly strong implications for any two $i,j$ in the same $K$. 
\begin{theorem}
On a simply connected toric variety, any two lattice points ${\bf v}_i$ and ${\bf v}_j$ on 
$\Delta^\circ$ for which ${\bf q}_i = {\bf q}_j$
are vertices of $\Delta^\circ$. Furthermore, there exist points $\alpha_{ij}$ 
inside of $\Theta^{[3]}_i$ as well as $\alpha_{ji}$ inside of $\Theta^{[3]}_j$ for which $\alpha_{ij} = - 
\alpha_{ji}$.
\end{theorem}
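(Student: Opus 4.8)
The plan is to convert the combinatorial hypothesis ${\bf q}_i={\bf q}_j$ into the existence of a single lattice point ${\bf m}_0\in M$, and then read off both assertions from the values of ${\bf m}_0$ paired against the ray generators. The construction of ${\bf m}_0$ is exactly the point at which the simple connectedness enters.

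First I would reformulate the charge condition lattice-theoretically. Let $L=\{(a_k)\in\Z^n\,|\,\sum_k a_k{\bf v}_k=0\}$ be the lattice of linear relations among the ray generators; by construction the rows of the charge matrix $Q$ form a $\Z$-basis of $L$, and ${\bf q}_k$ is the $k$-th column of $Q$. Hence ${\bf q}_i={\bf q}_j$ is equivalent to $Q({\bf e}_i-{\bf e}_j)=0$, i.e. to ${\bf e}_i-{\bf e}_j\in\ker Q=L^\perp$, where ${\bf e}_k$ are the standard basis vectors and $L^\perp=\{c\in\Z^n\,|\,\sum_k a_k c_k=0\ \forall (a_k)\in L\}$. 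Now consider the map $\beta^\vee:M\to\Z^n$, ${\bf m}\mapsto(\langle{\bf m},{\bf v}_k\rangle)_k$. Since the ${\bf v}_k$ span $N_\R$, $\beta^\vee$ is injective with image of rank $4$, and that image is contained in $L^\perp$, which also has rank $4$. The simple connectedness hypothesis says the ${\bf v}_k$ generate $N$ over $\Z$, equivalently that the dual map $\beta^\vee$ is saturated (equivalently $\mathrm{Cl}(A)=\Z^n/\mathrm{im}(\beta^\vee)$ is torsion-free). Since $\mathrm{im}(\beta^\vee)\subseteq L^\perp$ are both of rank $4$ and the smaller one is saturated, they coincide. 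Therefore there is a unique ${\bf m}_0\in M$ with $\langle{\bf m}_0,{\bf v}_k\rangle=\delta_{ik}-\delta_{jk}$ for all $k$. This saturation step, which upgrades an orthogonality statement to an \emph{integral} solvability statement, is where the hypothesis is indispensable, and I expect it to be the main obstacle to phrase cleanly.

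Next I would set $\alpha_{ij}:=-{\bf m}_0$ and $\alpha_{ji}:={\bf m}_0$, so that $\alpha_{ij}=-\alpha_{ji}$ holds by fiat. The pairing values are then $\langle\alpha_{ij},{\bf v}_i\rangle=-1$, $\langle\alpha_{ij},{\bf v}_j\rangle=+1$, and $\langle\alpha_{ij},{\bf v}_k\rangle=0$ for $k\neq i,j$. All of these are $\geq-1$, so $\alpha_{ij}\in\Delta$ by Eq.~\eqref{polydual}; moreover the minimal value $-1$ is attained at ${\bf v}_i$ and at no other vertex of $\Delta^\circ$, where the pairing is strictly $>-1$. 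By the reflexive duality between vertices of $\Delta^\circ$ and facets of $\Delta$, this is precisely the statement that $\alpha_{ij}$ lies in the relative interior of the dual facet $\Theta^{[3]}_i$; symmetrically $\alpha_{ji}$ lies in the interior of $\Theta^{[3]}_j$. This proves the second assertion and exhibits $\alpha_{ij},\alpha_{ji}$ as the claimed opposite Demazure roots attached to ${\bf v}_i$ and ${\bf v}_j$.

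Finally, the vertex claim I would deduce from the incompatibility of a Demazure root with a non-vertex lattice point, the fact already flagged in the reflexive-polytope discussion preceding the theorem. Suppose ${\bf v}_i$ were not a vertex; since it is not an interior point of a $3$-face nor the origin, it lies in the relative interior of a face $F$ of $\Delta^\circ$ of dimension $\geq1$, so ${\bf v}_i=\sum_l\lambda_l{\bf w}_l$ with $\lambda_l>0$ and vertices ${\bf w}_l$ of $F$, all distinct from ${\bf v}_i$. Each ${\bf w}_l$ is a ray generator other than ${\bf v}_i$, so from the values computed above $\langle\alpha_{ij},{\bf w}_l\rangle\in\{0,1\}\geq0$, whence $-1=\langle\alpha_{ij},{\bf v}_i\rangle=\sum_l\lambda_l\langle\alpha_{ij},{\bf w}_l\rangle\geq0$, a contradiction. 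Hence ${\bf v}_i$ is a vertex, and running the identical argument with $\alpha_{ji}$ shows ${\bf v}_j$ is a vertex, completing the proof.
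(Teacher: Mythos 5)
Your proposal is correct and follows essentially the same route as the paper: both proofs convert ${\bf q}_i={\bf q}_j$ into a lattice point $\alpha_{ij}\in M$ with $\langle\alpha_{ij},{\bf v}_i\rangle=-1$, $\langle\alpha_{ij},{\bf v}_j\rangle=1$ and $\langle\alpha_{ij},{\bf v}_k\rangle=0$ otherwise (with simple connectedness supplying integrality, which you spell out via the saturation of $\mathrm{im}(\beta^\vee)$ in $L^\perp$ where the paper argues more briefly that $\alpha_{ij}$ is integral on a generating set of $N$), and then read off vertexhood and membership in the relative interior of the dual facets from these pairing values. The only cosmetic difference is that you state the dual-facet conclusion before establishing that ${\bf v}_i,{\bf v}_j$ are vertices; since your convexity argument for vertexhood uses only the pairing values, this is harmless, but the two steps should be swapped.
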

\begin{proof}
From the equality of  ${\bf q}_i$ and ${\bf q}_j$ it follows that the 
corresponding divisors $D_i$ and $D_j$ are linearly equivalent. 
As all linear equivalences take the form
\begin{equation}
 \sum_{k=1}^n \langle \alpha_{ij},{\bf v}_k \rangle D_k = 0 \, 
\end{equation}
for some vector $\alpha_{ij}$ in $M \otimes \R$, there must be
a vector $\alpha_{ij}$ such that
\begin{align}\label{lineqcond}
\langle \alpha_{ij}, {\bf v}_i \rangle = -1 && \langle \alpha_{ij},{\bf v}_j \rangle = 1 && 
\langle \alpha_{ij}, {\bf v}_k \rangle = 0 \,\,\, \forall\,\, k \neq i,j 
\end{align}
For a simply connected toric variety, the set of all vectors ${\bf v}_i$ generate the 
lattice $N$. Hence 
a first consequence of this is that $\alpha_{ij}$ is contained in the dual 
lattice $M$. Furthermore, the above implies that
it is a lattice point on $\Delta$, as $\langle \alpha_{ij}, {\bf v}_k \rangle \geq -1 
$ for all points ${\bf v}_k$ on $\Delta^\circ$.
$\langle \alpha_{ij},{\bf v}_k \rangle = 0 \,\,\, \forall\,\, k \neq i,j $ implies 
that all ${\bf v}_k$ except ${\bf v}_i$ and ${\bf v}_j$ are contained
in a three-dimensional hyperplane of $N\otimes \R$ passing through the origin. 
As $\Delta^\circ$ is four-dimensional and
the convex hull of all of the ${\bf v}_l$, it follows that ${\bf v}_i$ and ${\bf v}_j$ must be 
above and below this hyperplane, so that they
must be vertices. In this case \eqref{lineqcond} implies that $\alpha_{ij}$ is 
inside the dual three-dimensional face of
${\bf v}_i$, $\Theta^{[3]}_i$ and  $\alpha_{ji} = - \alpha_{ij}$ is inside 
$\Theta^{[3]}_j$. $\Box$
\end{proof}

The above provides an alternative algorithm to determine the set 
$\mathcal{K}$. All indices $l$ corresponding to
non-vertices ${\bf v}_l$ form sets $K_l = \{l\}$. For any vertex ${\bf v}_i$, we have to 
find the interior points of the dual
face $\Theta^{[3]}_i$. If the negative of such an interior point is also inside 
of a three-dimensional face $\Theta^{[3]}_j$
(defining a dual vertex ${\bf v}_j$ of $\Delta^\circ$) there is a set $K \in 
\mathcal{K}$ containing both $i$ and $j$. This defines an
equivalence relation $\sim_\mathcal{K}$ and the equivalence classes are 
precisely the sets $K \in \mathcal{K}$. \\

This way of thinking is particularly useful for polytopes 
$\Delta^\circ$ with many integral points, where many points (all non-vertices) can never occur in any non-trivial set $K \in 
\mathcal{K}$.

\subsubsection{The set $\mathcal{J}$}\label{sect:setJ}

The data exploited in the last paragraph only depends on the structure of the 
polytopes $\Delta^\circ$ and $\Delta$. Let us now
consider the triangulation data. We have described the SR ideal by an index sets $I \in \mathcal{I}$ and have constructed the refinement $\mathcal{J}$ of $\mathcal{I}$ as the set of equivalence classes under the relation
\begin{equation}
 j \sim_\mathcal{J} i \,\, \leftrightarrow \,\, \mbox{$i$ and $j$ are contained 
in the same sets $I \in \mathcal{I} $} \, . 
\end{equation}
In other words, $\mathcal{J}$ is obtained by first forming all intersections 
between sets in $\mathcal{I}$ and then discarding 
sets which are already containing smaller sets.
\begin{lemma}
If all cones in a fan $\Sigma$ are symmetrical with respect to exchanging $i 
\leftrightarrow j$ then $i \sim_\mathcal{J} j$. 
\end{lemma}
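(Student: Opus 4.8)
The plan is to translate the hypothesis into a statement about the Stanley--Reisner generators $\mathcal{I}$ and then to show that no $I\in\mathcal{I}$ can contain exactly one of $i,j$. Saying that every cone of $\Sigma$ is symmetric under the exchange $i\leftrightarrow j$ is the same as saying that the transposition $\tau=(i\,j)$, viewed as a relabelling of the rays, maps cones of $\Sigma$ to cones of $\Sigma$; since $\mathcal{I}$ records precisely the minimal index sets whose generators fail to lie in a common cone, this is equivalent to $\tau(\mathcal{I})=\mathcal{I}$. Concretely, $i$ and $j$ become \emph{twins}: for any index set $F$ with $i,j\notin F$, the set $F\cup\{i\}$ spans a cone of $\Sigma$ if and only if $F\cup\{j\}$ does. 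By the definition of $\sim_{\mathcal{J}}$ it then suffices to prove that there is no $I\in\mathcal{I}$ with $i\in I$ and $j\notin I$.

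First I would suppose, for contradiction, that such an $I$ exists and set $F:=I\setminus\{i\}$. Minimality of $I$ makes $F$ a cone, and by construction $j\notin F$; the twin property immediately gives that $\tau(I)=F\cup\{j\}$ is again a minimal non-face, so $i$ and $j$ genuinely occur asymmetrically. The crucial extra input, which has no analogue for an arbitrary simplicial complex, is that the fans under consideration come from fine, regular, star triangulations of a reflexive polytope and are therefore \emph{complete}: the underlying face complex is a simplicial three-sphere, i.e. a closed pseudo-manifold in which every codimension-one cone lies in exactly two maximal cones. I would exploit this to exclude the asymmetric $I$.

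The cleanest route uses the geometry already established for the set $\mathcal{K}$. In the situation where this Lemma is applied, namely two rays with ${\bf q}_i={\bf q}_j$, the theorem of Section~\ref{sect:setK} produces $\alpha_{ij}\in M$ with $\langle\alpha_{ij},{\bf v}_i\rangle=-1$, $\langle\alpha_{ij},{\bf v}_j\rangle=+1$ and $\langle\alpha_{ij},{\bf v}_k\rangle=0$ for all other $k$. Thus ${\bf v}_i$ and ${\bf v}_j$ lie strictly on opposite sides of the supporting hyperplane $\{\langle\alpha_{ij},\cdot\rangle=0\}$ and can never generate a single strongly convex cone; hence $\{i,j\}$ is itself a non-face and $i,j$ are \emph{non-adjacent} twins. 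For non-adjacent twins I would show that completeness forces a suspension structure: the common link $L:=\mathrm{lk}(i)=\mathrm{lk}(j)$ is a two-sphere, the faces $\{i\}\ast L$ and $\{j\}\ast L$ already assemble into the suspension $\{i,j\}\ast L$, which is itself a closed three-sphere contained in $\Sigma$, and since no simplicial three-sphere can be a proper subcomplex of another, $\Sigma=\{i,j\}\ast L$. Every maximal cone then contains $i$ or $j$, so the only element of $\mathcal{I}$ meeting $\{i,j\}$ is $\{i,j\}$ itself; no asymmetric $I$ exists and $i\sim_{\mathcal{J}}j$.

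I expect the main obstacle to be exactly this rigidity step, where completeness must be used in an essential way: a purely combinatorial argument from $\tau(\mathcal{I})=\mathcal{I}$ alone is insufficient, since a non-pure complex (for instance one with minimal non-faces $\{i,k\}$ and $\{j,k\}$) is $\tau$-symmetric yet has $i\not\sim_{\mathcal{J}}j$. The pseudo-manifold condition---each wall in exactly two maximal cones---is what forbids extra faces among the remaining rays and pins the complex down to the suspension; making this ``no strictly larger equator'' argument precise, and, if one insists on the fully general statement rather than the ${\bf q}_i={\bf q}_j$ case, dealing separately with the possibility of adjacent twins (where instead one checks that a single top-dimensional non-face containing all relevant rays absorbs both $i$ and $j$, as for the boundary of a simplex), is the delicate part of the proof.
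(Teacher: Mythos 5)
The paper's own proof of this lemma is precisely the two\-/line combinatorial argument that you set aside as insufficient: the set of cones is preserved by the transposition $\tau=(i\,j)$, hence so is its complement in the power set of $\{1,\ldots,n\}$, hence so is the set of minimal generators $\mathcal{I}$, and from $\tau(\mathcal{I})=\mathcal{I}$ the paper concludes $i\sim_{\mathcal{J}}j$ directly. Your objection to that final inference is well founded: $\tau$-invariance of $\mathcal{I}$ only shows that the generators containing $i$ are carried bijectively onto those containing $j$, not that these two collections coincide, and your example $\mathcal{I}=\{\{i,k\},\{j,k\}\}$ (realised, for instance, by a fan of three rays in the plane with no two-dimensional cones) satisfies the stated hypothesis while violating the conclusion. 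So the lemma as literally worded needs the completeness (pseudo-manifold) property of the fans actually under consideration, which the paper's proof never invokes; your diagnosis of where the real content lies is correct, and your suspension argument (the common link $L$ is a $2$-sphere, $S^0\ast L$ is a closed $3$-pseudomanifold which cannot be a proper subcomplex of the $3$-sphere $\Sigma$, hence $\Sigma=S^0\ast L$ and every candidate asymmetric minimal non-face is in fact a face) is a valid way to supply the missing input when $\{i,j\}$ is a non-face.

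As a proof of the lemma, however, your proposal is itself incomplete in three places. First, you import the hypothesis $\mathbf{q}_i=\mathbf{q}_j$ from the theorem in which the lemma is applied; the lemma is a purely combinatorial statement and makes no such assumption. Second, the step ``$\mathbf{v}_i$ and $\mathbf{v}_j$ lie strictly on opposite sides of the hyperplane $\alpha_{ij}^{\perp}$ and can therefore never generate a single strongly convex cone'' is wrong as stated: a simplicial cone on linearly independent generators is automatically strongly convex, and two vectors on opposite sides of a hyperplane through the origin are in general linearly independent, so this does not rule out $\{i,j\}$ being a face. To justify non-adjacency you would need a different argument, e.g.\ that every maximal simplex of a star triangulation of $\partial\Delta^{\circ}$ lies in a single facet of $\Delta^{\circ}$. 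Third, the adjacent-twin case, which you explicitly defer, is not optional if one wants the lemma as stated (and it does occur for $\tau$-symmetric triangulations, e.g.\ the boundary of a simplex). In short, you have correctly identified that the paper's argument proves less than it claims, but your replacement establishes a weaker statement and contains one step that fails as written.
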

\begin{proof}
To see this, all we have to do is go through the construction of $\mathcal{J}$ 
from the data of cones. We first find the SR ideal by forming 
the complement of the set of cones in the power set of $\{1, \cdots, n\}$. By 
construction this will be symmetrical under 
$i \leftrightarrow j$. But then also the set of generators $\mathcal{I}$ enjoys 
the same symmetry so that $i$ and $j$ are in the equivalence class in
$\mathcal{J}$ as claimed. $\Box$
\end{proof}

\begin{theorem}
If $D_i = D_j$, so that $i \sim_\mathcal{K} j$, it follows that $i 
\sim_\mathcal{J} j$.
\end{theorem}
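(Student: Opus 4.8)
The plan is to reduce to the preceding Lemma by showing that the equality $\mathbf{q}_i=\mathbf{q}_j$ forces the fan $\Sigma$ to be symmetric under the exchange $i\leftrightarrow j$. My starting point is the Theorem of Section~\ref{sect:setK}: from $\mathbf{q}_i=\mathbf{q}_j$ I obtain a lattice point $\alpha_{ij}\in M$ with $\langle\alpha_{ij},\mathbf{v}_i\rangle=-1$, $\langle\alpha_{ij},\mathbf{v}_j\rangle=+1$ and $\langle\alpha_{ij},\mathbf{v}_k\rangle=0$ for every remaining ray. Writing $H=\{n:\langle\alpha_{ij},n\rangle=0\}$, this says that all rays other than $\mathbf{v}_i,\mathbf{v}_j$ lie on $H$, while $\mathbf{v}_i$ and $\mathbf{v}_j$ sit on opposite sides of it; since the origin lies on $H$, the polytope $\Delta^\circ$ is a bipyramid with apexes $\mathbf{v}_i,\mathbf{v}_j$ over the base $B=\Delta^\circ\cap H$.

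First I would record two consequences of this bipyramid structure. Because $H$ passes through the interior point $0$, no facet of $\Delta^\circ$ can lie in $H$, and no (convex) facet can contain both apexes, since the segment $\overline{\mathbf{v}_i\mathbf{v}_j}$ meets $H$ in the interior; hence every facet contains exactly one of $\mathbf{v}_i,\mathbf{v}_j$. In particular no cone of $\Sigma$ contains both rays, so $\{i,j\}$ is a minimal non-face and $\{i,j\}\in\mathcal{I}$. Secondly, and this is the crux, inside a facet $F=\mathrm{conv}(\{\mathbf{v}_i\}\cup B_m)$ lying over a $2$-face $B_m\subset H$ of the base, the apex $\mathbf{v}_i$ is the only ray off the plane $\mathrm{aff}(B_m)=\mathrm{aff}(F)\cap H$. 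Any maximal simplex of the boundary triangulation inside $F$ is $3$-dimensional and therefore cannot have all four vertices in the $2$-plane $\mathrm{aff}(B_m)$; it must use $\mathbf{v}_i$. Thus the triangulation of $F$ is the join of $\mathbf{v}_i$ with the induced triangulation $T_m$ of $B_m$, and the analogous statement holds for the opposite facet $\mathrm{conv}(\{\mathbf{v}_j\}\cup B_m)$ with the \emph{same} $T_m$, since the triangulation of the shared face $B_m$ is intrinsic.

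With this join structure in hand, the symmetry of the fan becomes immediate: the involution $\phi\in\mathrm{Gl}(N)$ swapping $\mathbf{v}_i\leftrightarrow\mathbf{v}_j$ and fixing $H$ pointwise, explicitly $\phi(n)=n+\langle\alpha_{ij},n\rangle(\mathbf{v}_i-\mathbf{v}_j)$, is well defined over $\mathbb{Z}$ because $\alpha_{ij}\in M$, and it sends the cones over $\mathbf{v}_i*T_m$ to the cones over $\mathbf{v}_j*T_m$ while fixing all equatorial cones; hence it permutes the cones of $\Sigma$ among themselves. Therefore all cones of $\Sigma$ are symmetric under $i\leftrightarrow j$, and the preceding Lemma yields $i\sim_\mathcal{J}j$. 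Equivalently one can finish without the Lemma: the join property shows that for any base-face $I_0$ the set $\{i\}\cup I_0$ is again a face, so apart from $\{i,j\}$ no element of $\mathcal{I}$ can contain $i$ or $j$ singly, giving $i\sim_\mathcal{J}j$ directly.

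I expect the main obstacle to be the second step, establishing the join structure inside each facet, i.e. that every top-dimensional simplex of a facet uses its unique apex. The condition $\mathbf{q}_i=\mathbf{q}_j$ is precisely what guarantees this, by confining all other rays to $H$; the one delicate point is to argue that there are no stray lattice rays off $H$ on the side faces or in facet interiors, which the convention of discarding interior points of three-dimensional faces, together with the fact that $\langle\alpha_{ij},\cdot\rangle$ is strictly negative on such interior points, rules out.
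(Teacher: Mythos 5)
Your overall strategy---reduce to the preceding Lemma by exhibiting the lattice reflection $\phi(n)=n+\langle\alpha_{ij},n\rangle(\mathbf{v}_i-\mathbf{v}_j)$ and showing it permutes the cones of $\Sigma$---is the same as the paper's, but the geometric input you use to establish the symmetry of the fan contains a genuine error. You assert that the segment $\overline{\mathbf{v}_i\mathbf{v}_j}$ meets the hyperplane $H$ in the \emph{interior} of $\Delta^\circ$, and conclude that no facet (hence no cone of $\Sigma$) contains both $\mathbf{v}_i$ and $\mathbf{v}_j$ and that $\{i,j\}\in\mathcal{I}$. None of this follows from $\mathbf{q}_i=\mathbf{q}_j$: the segment meets $H$ only at the midpoint $(\mathbf{v}_i+\mathbf{v}_j)/2$, which need not be an interior point of $\Delta^\circ$. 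The quintic is an explicit counterexample: all five rays of $\mathbb{P}^4$ carry the same charge, any two of the corresponding vertices span an edge of the simplex $\Delta^\circ$, three of the five facets contain both of them, and the Stanley--Reisner data is the single set $\{1,\ldots,5\}$, so no pair $\{i,j\}$ lies in $\mathcal{I}$. Since the bipyramid/join structure built on this claim drives both your main argument and your alternative ending, the proof as written does not go through.

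The paper sidesteps this by splitting into cases. A maximal cone containing both $\mathbf{v}_i$ and $\mathbf{v}_j$ is already symmetric under $i\leftrightarrow j$ and needs no argument. For a maximal cone containing $\mathbf{v}_i$ but not $\mathbf{v}_j$, its remaining three generators lie in $H$; the three-dimensional wall they span is shared by exactly one other maximal cone of the complete fan, and the fourth generator of that neighbour must lie off $H$---but the only rays off $H$ are $\mathbf{v}_i$ and $\mathbf{v}_j$, so it must be $\mathbf{v}_j$. This local ``every wall bounds exactly two maximal cones'' argument replaces your global join structure and is insensitive to whether some cones contain both rays. Your treatment of the facets over a base $2$-face is close in spirit to this, so the repair is to drop the bipyramid claim and add the missing case rather than to change strategy.
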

\begin{proof}
Due to the lemma, it is sufficient to show that any triangulation of the 
polytope $\Delta^\circ$ is such that the collection of
cones is symmetric under the exchange $i \leftrightarrow j$. First of all, the 
triangulation $tr(\Delta^\circ)$ is completely
fixed in terms of its cones of maximal (i.e. 4) dimension. Every one of these 
cones is spanned by four linearly independent
vectors in $N$. If such a cone contains both ${\bf v}_i$ and ${\bf v}_j$ among its 
generators, there is nothing to show, as this is symmetric
in $i$ and $j$. A cone which just contains ${\bf v}_i$ but not ${\bf v}_j$ is hence spanned 
by ${\bf v}_i$ and three vectors ${\bf v}_{k_1},{\bf v}_{k_2},{\bf v}_{k_3}$ in the plane orthogonal to 
$\alpha_{ij}$. This means the cone in question has a face (which is a 
three-dimensional cone) spanned only by ${\bf v}_{k_1},{\bf v}_{k_2},{\bf v}_{k_3}$.
For a fan constructed from a triangulation, every three-dimensional cone is the 
intersection of precisely two four-dimensional cones. Hence
there must be another four-dimensional cone spanned by ${\bf v}_{k_1},{\bf v}_{k_2},{\bf v}_{k_3}$ 
and another vector. The only vector which turns ${\bf v}_{k_1},{\bf v}_{k_2},{\bf v}_{k_3}$ into a 
four-dimensional cone (besides ${\bf v}_i$) is ${\bf v}_j$. Hence there is a cone $\{ 1,k_1, 
k_2, k_3 \}$ as well. $\Box$
\end{proof}

As a direct consequence of the above, $\mathcal{K}$ is a refinement of 
$\mathcal{J}$ (the converse is wrong, i.e. $\mathcal{K} 
\neq \mathcal{J}$ in general) for any triangulation. This means we do not need to consider the index set 
$\mathcal{L} = \mathcal{K} \cap \mathcal{J}$ since it equals $\mathcal{K} $. Also ${\cal K}$  is ultimately 
independent of the triangulation chosen for $\Delta^\circ$.

\section{Toric $\pi$-twisted Representations}
\label{app:toricpitwrep}

In this appendix, we collect some results on the representation theory of finite groups $\Gamma$ acting linearly on the 
homogeneous coordinates of toric varieties. Due to the decomposition \eqref{eq:toricsymgroup} for the toric symmetry groups $G_A$, 
our representations $R:\Gamma\rightarrow G_A$ can similarly be split into a permutation representation $\pi:\Gamma\rightarrow P_A$ and a map $r:\Gamma\rightarrow H_A/{\cal G}$ into the continuous part of the ambient space symmetry group. Since this continuous part has the block structure 
 $H_A/{\cal G} = \bigoplus_{K \in {\cal K}} {\rm Gl}(K,\C)/{\cal G}$ the map $r$ can be split up into the corresponding blocks which we denote by $r_i$. The semi-direct product structure of the ambient space symmetry group $G_A$ means that $r$ is a $\pi$-representation, that is, it satisfies
 \begin{equation}
  r(\gamma\tilde{\gamma})=\pi(\tilde{\gamma})^{-1}r(\gamma)\pi(\tilde{\gamma})r(\tilde{\gamma})\; . \label{rpitwist}
\end{equation}

\subsection{Toric representations}

Let us first discuss the case where $\pi$ is trivial so that $r:\Gamma\rightarrow H_A/{\cal G}$ is a regular group homomorphism.

\subsubsection{Schur covers}\label{sect:schurcoversstandard}

We begin with the review of Schur covers, see \cite{aschbacher2000finite,rotman2012introduction} for a proper introduction. The central idea it to construct projective representations
\begin{equation}
\bar{r}: \Gamma \rightarrow {\rm Gl}(d,\mathbb{C}) / \mathbb{C}^*\, ,
\end{equation}
via linear representations 
\begin{equation}
 \hat{r}:  \hat{\Gamma} \rightarrow {\rm Gl}(d,\mathbb{C})
\end{equation}
of a larger group $\hat{\Gamma}$, called the Schur cover of $\Gamma$. The Schur cover is a central extension of $\Gamma$ by the Schur multiplier  $\kappa=H^2(\Gamma,\mathbb{C}^*)$, that is, there is a short exact sequence
\begin{equation}\label{eq:schurextension}
 1 \rightarrow \kappa \rightarrow \hat{\Gamma} \rightarrow \Gamma \rightarrow 1 \, ,
\end{equation}
We will discuss the definition of group cohomology below. The extension sequence \eqref{eq:schurextension} can be defined by fixing a so-called factor set 
$e:\Gamma \times \Gamma \rightarrow K$ via the multiplication rule
\begin{equation}
 (k,\gamma)(\tilde{k},\tilde{\gamma}) = 
(k\tilde{k} e(\gamma,\tilde{\gamma}),\gamma \tilde{\gamma}) \, .
\end{equation}

For a given extension \eqref{eq:schurextension} and projective representation $\bar{r}$, let $l(\gamma)$ be a lift 
of $\bar{r}$ to ${\rm Gl}(d,\mathbb{C})$ which satisfies
\begin{equation}\label{eq:ffactset}
 l(\gamma) l (\tilde{\gamma}) = f(\gamma,\tilde{\gamma}) l(\gamma\tilde{\gamma}) 
\, .
\end{equation}
Here, $f$ is map $f:\Gamma \times \Gamma \rightarrow \mathbb{C}^*$. The factor set $f$ captures the departure of the projective representation $\bar{r}$ from a linear one. Note that two such lifts $l,l'$ are equivalent if we can find a 
$h:\Gamma \rightarrow \mathbb{C}^*$ such that 
\begin{equation}
h(\gamma)l(\gamma) = l'(\gamma) 
\end{equation}
By Eq.~\eqref{eq:ffactset}, the factor sets are then related by
\begin{equation}\label{eq:equivf}
f'(\gamma,\tilde{\gamma}) = f(\gamma,\tilde{\gamma}) 
h(\gamma)h(\tilde{\gamma})h^{-1}(\gamma\tilde{\gamma})
\end{equation}
These relations define the group cohomology. Two factor sets $f$, $f'$ are in the same class $[f]=[f']$ in $H^2(\Gamma,\mathbb{C}^*)$ if they are related to equivalent lifts $l$ and $l'$.

One can define the map $\delta: {\rm Hom}(K,\mathbb{C}^*)\rightarrow H^2(\Gamma,\mathbb{C}^*)$ by $\delta(\varphi)=[\varphi\circ e]$. Now suppose that the class $[f]\in H^2(\Gamma,\mathbb{C}^*)$, associated to the lift~\eqref{eq:ffactset}, is in the image of $\delta$. This means there exists a $\varphi\in {\rm Hom}(K,\mathbb{C}^*)$ such that $[\varphi\circ e]=[f]$ or, dropping cohomology classes, there exists a function $h:\Gamma\rightarrow \mathbb{C}^*$ such that
\begin{equation}
  \varphi\circ e(\gamma,\tilde{\gamma})=f(\gamma,\tilde{\gamma})h(\tilde{\gamma})h(\gamma\tilde{\gamma})^{-1}h(\gamma)\; .
\end{equation}
Then defining a map $\hat{r}:\hat{\Gamma}\rightarrow H_A$ by
\begin{equation}
 \hat{r}(k,\gamma)=\varphi(k)h(\gamma)l(\gamma)
\end{equation}
gives us a linear representation which becomes the projective representation $\bar{r}$ in ${\rm Gl}(d,\mathbb{C}) / \mathbb{C}^*$.\\

The question is now if we can `undo' the factor set $f$ by enlarging the group $\Gamma$ to $\hat{\Gamma}$ and studying linear representations of 
$\hat{\Gamma}$. As discussed in the last section, if we can find a map
$\varphi \in {\rm Hom}(K,\mathbb{C}^*)$ (that is, $\varphi$ can be used to map 
$e(\gamma,\tilde{\gamma}) \rightarrow f(\gamma,\tilde{\gamma})$ and
 we can think of $\varphi(k)$ as a being proportional to the unit matrix) and 
 $h(\gamma)$ such that
 \begin{equation}\label{eq:deltacond}
  \varphi(e(\gamma,\tilde{\gamma})) =  f(\gamma,\tilde{\gamma}) 
 h(\gamma)h(\tilde{\gamma})h^{-1}(\gamma\tilde{\gamma})
 \end{equation}
 holds for every factor set $f(\gamma,\tilde{\gamma})$ our choice of 
 $e(\gamma,\tilde{\gamma})$ is sufficiently general. The central statement
 is now that the Schur cover obtained by using the Schur multiplier is precisely such that 
 this holds. For the choice $K = H^2(\Gamma,\mathbb{C}^*)$, $\delta$ is indeed surjective, and any projective representation can be lifted to a linear one. Hence, using the Schur cover we can lift every projective representation to a linear one.

\subsubsection{Multiple Schur covers} \label{sect:schurcoverstoric}

Let us now discuss the next more complicated cases, which are representations on products of projective spaces
\begin{equation}\label{eq:prodprojspacerep}
\bar{r}: \Gamma \rightarrow \bigotimes_i {\rm Gl}(d_i,\mathbb{C})/\mathbb{C}^* \, .
\end{equation}
First of all, if we are given such a representation and a lift $l(\gamma)$, 
\eqref{eq:ffactset} implies that $f$ is a map $f:\Gamma \times \Gamma 
\rightarrow (\mathbb{C}^*)^n$. We can think of $f$ 
as a matrix
\begin{equation}
 f(\gamma,\tilde{\gamma}) = \bigoplus \mathbb{I}_{d_i \times d_i} 
c_i(\gamma,\tilde{\gamma})\; ,
\end{equation}
that is, it is block diagonal and each block is proportional to the unit matrix. We 
can of course rewrite
\begin{align}\label{eq:decompf}
f(\gamma,\tilde{\gamma}) & = \prod_i f_i(\gamma,\tilde{\gamma}) \\
f_i(\gamma,\tilde{\gamma}) & =  \mathbb{I}_{d_i \times d_i} 
c_i(\gamma,\tilde{\gamma}) \oplus_{j \neq i}  \mathbb{I}_{d_j \times d_j} 
\end{align}
where now $f_i:\Gamma \times \Gamma \rightarrow \mathbb{C}^*$.

Clearly, equivalent $f(\gamma,\tilde{\gamma})$ are again related by 
\eqref{eq:equivf}. This equivalence is derived using the action of
$\mathbb{C}^*$ on the homogeneous coordinates and we may consider each of 
the $n$ actions in turn and denote the corresponding map
$h_i:\Gamma \rightarrow \mathbb{C}^*$ $h_i(\gamma)$. This gives rise to a 
similar decomposition into $h(\gamma) = \prod h_i(\gamma)$ 
with $h_i:\Gamma \rightarrow \mathbb{C}^*$.  With this we find that 
$f(\gamma,\tilde{\gamma}),f'(\gamma,\tilde{\gamma})$ define the same class in 
$H^2(\Gamma,(\mathbb{C}^*)^n)$ if all of the $f_i$ and $f_i'$ define the same 
class in $H^2(\Gamma,\mathbb{C}^*)$
\begin{equation}
f_i'(\gamma,\tilde{\gamma}) = f_i(\gamma,\tilde{\gamma}) 
h_i(\gamma)h_i(\tilde{\gamma})h_i^{-1}(\gamma\tilde{\gamma})\; .
\end{equation}

We now use the same Schur cover as before, (that is, we have the same exact sequence 
\eqref{eq:schurextension} and
$e:\Gamma \times \Gamma \rightarrow K$), and consider maps $\varphi$, which now 
maps $\varphi: \mathbb{K}\rightarrow (\mathbb{C}^*)^n$. 
As is $f(\gamma,\tilde{\gamma})$, $\varphi(k)$ will be a block-diagonal matrix 
with each block proportional to the unit matrix.
We can hence also decompose $\varphi$ into homomorphisms $\varphi_i: K 
\rightarrow \mathbb{C}^*$,
\begin{align}
\varphi(k) & = \prod_i \varphi_i(k) \\
\varphi_i(k) & =  \mathbb{I}_{d_i \times d_i} p_i(k) \oplus_{j \neq i}  
\mathbb{I}_{d_j \times d_j} 
\end{align}

We are now ready to confront \eqref{eq:deltacond} for this setup. As we have 
decomposed $\varphi,h,f$ in the same way, a solution
to the generalization of \eqref{eq:deltacond} can be found if we can solve
\begin{equation}
 \varphi_i(e(\gamma,\tilde{\gamma})) =  f_i(\gamma,\tilde{\gamma}) 
h_i(\gamma)h_i(\tilde{\gamma})h_i^{-1}(\gamma\tilde{\gamma})
\end{equation}
The existence of such $h_i,\varphi_i$ for each $f(\gamma,\tilde{\gamma})$ is already implied if $e(\gamma,\tilde{\gamma})$ originates from the standard Schur cover, so that we are done and the same construction applies here. The linear representation is found as
\begin{equation}
\hat{r}(k,\gamma)=\varphi(k)h(\gamma)l(\gamma)  
\end{equation}

The case of a toric variety may now be discussed in a similar fashion. The difference to having independent products of projective linear groups as in Eq.~\eqref{eq:prodprojspacerep} is that we may have more factors of ${\rm Gl}(d_i,\C)$ than $\C^*$ actions in \eqref{eq:prodprojspacerep}. Alternatively, we may think of the $\C^*$ actions on each of the ${\rm Gl}(d_i,\C)$ `blocks' as being correlated. Let us assume we have in total $b$ blocks, with $c$ blocks correlated to the other $(b-c)$ blocks, so that $\mathcal{G} \cong (\mathbb{C}^*)^{b-c}$. 
We can find all multi-projective representations in this case by treating all of the blocks as uncorrelated and then checking two more conditions to remove the invalid ones. A sufficient condition for a linear representation to descend to a multi-projective representation in the case of correlated blocks is:
\begin{equation}
\hat{r}(k,\gamma)\hat{r}(\tilde{k},\gamma)^{-1} \in \mathcal{G}
\end{equation}
This reason for this is that given the same $\gamma$, varying $k$ does not change the 
identification of $\bar{r}(\gamma)$ inside $\hat{r}(k,\gamma)$. Denoting the projection from $\hat{\Gamma}$ to $\Gamma$ by $\hat{\pi}$
we have that $\hat{\pi}(k,\gamma) = \hat{\pi}(\tilde{k},\gamma)$ for all $k,\tilde{k} \in K$. Hence by the definition of $\hat{\pi}$ we have 
$\hat{r}(k,\gamma)\hat{r}(\tilde{k},\gamma)^{-1} \in \mathcal{G}$.\\

Finally, we need to identify under which circumstances two different linear 
representations of $\Gamma$ give rise to the same multi projective representation.
Two projective representations $\bar{r}_1$ and $\bar{r}_2$ are equivalent only 
when there exists a automorphism $\mathcal{P}$ of the space of homogeneous coordinates 
and a function $\theta : \Gamma \rightarrow \mathcal{G}$, such that for each $\gamma \in \Gamma$, we have 
$\mathcal{P}(\bar{r}_1(\gamma)) = \theta(\gamma)(\bar{r}_2(\gamma)\mathcal{P})$.
Inserting the definition of $\bar{r} = \hat{\pi}(\hat{r}(k,\gamma))$, this means that
\begin{equation}
\mathcal{P}(\hat{\pi}(\hat{r}(k_1,\gamma)) = 
\theta(\gamma)(\hat{\pi}(\hat{r}_2(k_2,\gamma))\mathcal{P})\; .\label{equiprojrepr}
\end{equation}
We see that choosing different $k_1$ and $k_2$ is equivalent to inserting
another factor into $\theta$, namely $\theta$ is $k_1,k_2$ dependent. However in 
the sense of equivalent representation, we could just make one convenient choice 
of $k_1,k_2$ for each $\gamma$, checking the formula (\ref{equiprojrepr}). This is 
sufficient to remove the redundancies.

\subsection{Twisted Representations}

Let us now discuss the case of a non-trivial representation $\pi$ and the resulting interplay between $\pi$ and $r$. Before considering the full problem of finding all (multi-)projective  $\pi$-twisted representations, we discuss how we can manage to find all $\pi$-twisted linear representations $r:\Gamma\rightarrow P_A \ltimes \bigoplus_{K \in {\cal K}} {\rm Gl}(K,\C)$. 
To start the construction, let us assume that we have found a, not necessarily injective, group homomorphism
\begin{equation}
 \pi: \Gamma \rightarrow P_A \; .
\end{equation}
How can we find all compatible $\pi$-twisted representations $r$ satisfying \eqref{rpitwist}? 
Furthermore, we assume for simplicity that $\pi(\Gamma)$ acts transitively on the 'blocks' $K_i\in{\cal K}$ (of course all of these have to have the same size) and leaves everything else unchanged. We will discuss the general case below.

Labelling the relevant blocks by $K_i$, where $i=1,\ldots,b$, we can single out the block $K_i$  
and consider its stabilizer groups $\Gamma_i \subset \Gamma$ under the action of $\pi$. We denote by $r_i:\Gamma_i\rightarrow GL(B_i,\mathbb{C})/\mathbb{C}^*$ the map $r$ restricted to the block $K_i$. Equivalently, we can write the entire representation $R$ as the matrix product
\begin{equation}
 R(\gamma) = \pi(\gamma) \, \cdot \,\mbox{diag}(r_1(\gamma),\cdots, r_n(\gamma) \, .
\end{equation}
Note that the restriction $\tilde{r}_i$ and $r_i$ to the stabiliser group $\Gamma_i$ defines a group homomorphism (since, by definition, $\pi$ acts trivially on the stabiliser). We now explain that it is sufficient to fix a single one of the representations $\tilde{r}_i$ to recover the whole $\pi$-twisted representation $r$. For definiteness, let us hence consider $\Gamma_1$, the stabilizer of the first block, and fix the representation
$\tilde{r}_1 : \Gamma_1 \mapsto {\rm Gl}(K_1,\mathbb{C})$. To reconstruct the whole action of $\Gamma$ we pick a set of group
elements $\gamma_i$ such that $\pi(\gamma_i)(1) = i$. For any block $i$, we can then write an arbitrary group element $\gamma\in\Gamma$ uniquely as
\begin{equation}\label{pirmatrix}
\gamma = \gamma_{\pi(\gamma)(i)} h \gamma_{i}^{-1} \, ,
\end{equation}
where $h\in\Gamma_1$. We can hence think of $h$ to depend on $\gamma$ and $i$.
To see this, note that we can choose $h$ as
\begin{equation}\label{hfromiandg}
 h = \gamma_{\pi(\gamma)(i)}^{-1} \gamma \gamma_{i} \, . 
\end{equation}
This is indeed an element of $\Gamma_1$ since
\begin{equation}
\begin{aligned}
 \pi(h)(1)& =  \pi(\gamma_{\pi(\gamma)(i)})^{-1} \pi(\gamma) \pi(\gamma_{i})(1) \\
   &= \pi(\gamma_{\pi(\gamma)(i)})^{-1} \pi(\gamma)(i) \\
   &= 1\; .
 \end{aligned}
\end{equation}

We may hence write 
\begin{equation}
 R(\gamma) = R(\gamma_{\pi(\gamma)(i)}) R(h) R(\gamma_i^{-1})
\end{equation}
for any $i$ and $\gamma$ and using the appropriate $h\in\Gamma_1$. Going through the definitions this means that we have
\begin{equation}\label{rifromr1}
r_i(\gamma) = \tilde{r}_1(h) 
\end{equation}
This allows us to recover all of the matrices $r_i(\gamma)$ and hence the entire $\pi$-twisted representation from $\tilde{r}_1$. The group homomorphisms $\tilde{r}_1$ can, in turn, be constructed from the linear representations of the Schur cover $\hat{\Gamma}_1$, as discussed above.

In general $\pi:\Gamma \rightarrow P_A$ does not act transitively on the blocks and there may be several orbits.
In this case, we can apply the above method separately for each of the orbits and then combine all of the data to find the
representation $R:\Gamma\rightarrow P_A \ltimes H_A$.

In summary, in order to find the representations $R=(\pi,r):\Gamma\rightarrow P_A \ltimes H_A$, we have to carry out the following steps. 
\begin{itemize}
 \item[1)] Find all permutation representations $\pi:\Gamma\rightarrow P_A$, not necessarily faithful.
 \item[2)] Find all orbits of the blocks under the action of $\pi$.
 \item[3)] For each orbit $\{K_i\}$, pick out the block $K_1$ and determine its stabiliser group, $\Gamma_1$, under the action of $\pi$.
 \item[4)] Choose the group elements $\gamma_i$ such that $\pi(\gamma_i)(1) = i$.
 \item[5)] Study all representations $\tilde{r}_1: \Gamma_1\rightarrow  {\rm Gl}(K_1,\mathbb{C})/\mathbb{C}^*$ by considering the Schur cover, $\hat{\Gamma}_1$ of $\Gamma_1$. 
 \item[6)] Re-construct the entire map $r$ from $\tilde{r}_1$ by using Eq.~\eqref{hfromiandg}. 
  \item[7)] Repeat this process for each orbit, assemble the results into the complete representation $R$ and check if $R$ is faithful.
\end{itemize}

\section{Patches on Toric Varieties and Smoothness of Hypersurfaces}\label{sect:patches}
Let us describe how the coordinate patches corresponding to each four-dimensional cone $\sigma\in\Sigma$ of a toric variety with fan $\Sigma$ are found. Let us start with cones corresponding to smooth affine varieties, in which case the corresponding patch is simply $\C^4$. In this case, the rays of
$\sigma$ are generated by four lattice vectors ${\bf v}_i^\sigma$ such that the matrix $({\bf v}_1^\sigma,{\bf v}_2^\sigma,{\bf v}_3^\sigma,{\bf v}_4^\sigma)$ has determinant one and the ${\bf v}_i$ generate the N-lattice with integer coefficients. The four rays of the dual cone\footnote{The dual cone is defined by $\langle \sigma^\vee, \sigma \rangle\geq 0$.} $\sigma^\vee$ are then generated by by four lattice vectors ${\bf v}_i^\vee$ which generate the M-lattice and satisfy 
\begin{equation}
 \langle {\bf v}_j^\vee,{\bf v}_i^\sigma \rangle =\delta_{ij} \, .
\end{equation}
The affine toric variety $V(\sigma) = (\C^*)^4$ associated with each smooth four-dimensional cone has a stratification
\begin{equation}\label{eq:sigmastrat}
V(\sigma) = (\C^*)^4 \amalg_{i=1 .. 4} (\C^*)^3  \amalg_{j=1 .. 6} (\C^*)^2  \amalg_{k=1 .. 4} (\C^*) \amalg pt 
\end{equation}
in which each stratum $(\C^*)^n$ is associated with an $4-n$ dimensional cone contained in $\sigma$. We can find good coordinates 
for all of these strata using points on the $M$-lattice. In the present case, the ${\bf v}_i^\vee$ defined above give rise to coordinates on $(\C^*)^4$ which have a well-defined limit to the other strata in \eqref{eq:sigmastrat}, that is, we can use them as coordinates for the whole of $V(\sigma)$. These coordinates can be written as
\begin{equation}
\hat{x}_i^\sigma \equiv  \prod_{j} (x_j)^{\langle{\bf v}_i^\vee, {\bf v}_j\rangle} = x_i^\sigma \prod_{\{x_j\} \setminus \{x_i^\sigma\}} x_j^{\langle {\bf v}_i^\vee,{\bf v}_j \rangle} \, ,
\end{equation}
where we denote the homogeneous coordinates corresponding to the ${\bf v}_i^\sigma$ by $x_i^\sigma$ and keep the notation of denoting all 
homogeneous coordinates simply by $x_i$. In the above expression, the first product ranges over all homogeneous coordinates (ray generators of the fan $\Sigma$) and the second one only over homogeneous coordinates associated with rays not contained in $\sigma$.

On the patch $V(\sigma)$, we can rewrite the defining polynomial $p$ of a TCY hypersurface $X$ in terms of the local coordinates $\hat{x}_i$. Every monomial $p_{\bf m}$ in $p$ is associated with a vector ${\bf m}$ in the M-lattice and we may write ${\bf m} = \sum_{i=1..4} m_i {\bf v}_i^\vee$ as the ${\bf v}_i^\vee$ generate the M-lattice. Hence
\be
\begin{aligned}
p_m &= \prod x_j^{\langle {\bf m} , {\bf v}_j\rangle +1 } = \prod_j x_j \prod_i x_j^{m_i \langle {\bf v}_i^{\vee},{\bf v}_j \rangle}  \\
    &= \prod_i (\hat{x}_i^\sigma)^{m_i} \left[\prod_j x_j \right]\\
    &= \prod_i  (\hat{x}_i^\sigma)^{\langle {\bf m},{\bf v}_i^\sigma \rangle+1} \left[ \prod_{\{x_j\} \setminus \{x_i^\sigma\}}x_j\right] 
\end{aligned}
\ee
Not surprisingly, we can rewrite every monomial purely in terms of the $\hat{x}_i$ such that $p(x_i^\sigma)= p(\hat{x}_i^\sigma)$ when we set all
other coordinates $=1$. Let us take $p^\sigma $ to be the polynomial which is obtained when we set 
$\{x_j\} \setminus \{x_i^\sigma\}$ to $1$. This is of course the same as 'gauge fixing' all coordinates $x_i$ except the local coordinates $x_i^\sigma$. 
For each patch $V(\sigma)=\C^4$ we can define the Jacobi ideal of the local defining polynomial $p^\sigma$ by
\begin{equation}
I^{\sigma} = \langle p^\sigma, \frac{\partial p^\sigma}{\partial x_1^\sigma} ,\cdots , \frac{\partial p^\sigma}{\partial x_4^\sigma} \rangle\; .
 \end{equation}
To check smoothness on $V(\sigma)$ we have to verify that the dimension of this ideal is $-1$. 

Let us now discuss the singular patches $V(\sigma)$. Naively, singular patches arise because gauge fixing all coordinates except
the ones in $\sigma$ may leave some residual discrete group $G\subset{\cal G}$ which still has to be divided out. Hence, the local patch has the structure $V(\sigma)=\C^4/G$ and the local coordinates $x_i^\sigma$ really parametrize the covering space $\C^4$. As before, we may find coordinates on $(\C^*)^4 \subset V(\sigma)$ by considering four vectors generating the M-lattice. What we cannot hope for, however, is for the ray generators of $\sigma_i^\vee$ to generate the M-lattice (over $\Z$). We can however extend these to a set of generators of the M-lattice by considering a larger set $\{{\bf v}_\mu^\vee\} \supset \{{\bf v}_i^\vee\}$ . 
These can be used to define a number of coordinates on $V(\sigma)$
\begin{equation}\label{eq:defxmuhat}
\hat{x}_\mu^\sigma \equiv  \prod_{i} (x_i)^{\langle {\bf v}_\mu^\vee, {\bf v}_i\rangle} \, .
\end{equation}
As we can expand any vector of the $m$-lattice in terms of the $ {\bf v}_\mu^\vee$ with integer coefficients, we can again
rewrite $p$ as a function solely of the ${\bf v}_\mu^\vee$ upon dividing out $\prod_{\{x_j\} \setminus \{x_i^\sigma\}}x_j$.
(This also follows more abstractly from the fact that $\C[{\bf v}_\mu^\vee]$ is the polynomial ring of $V(\sigma)$.)
However, we have more than four ${\bf v}_\mu^\vee$, so that there are a number of linear relations 
\begin{equation}
 \sum_\mu \alpha_\mu {\bf v}_\mu^\vee =0
\end{equation}
with $\alpha_\mu \in \Z$. These turn into relations
\begin{equation}\label{eq:binomialrelations}
\prod_\mu \hat{x}_\mu^{\alpha_\mu} =1 \, .
\end{equation}
Hence we now find a complete intersection as the appropriate expression for $p^\sigma$. The additional defining equations capture the orbifold singularities on $V(\sigma) = \C^4/G$. \\

As a simple example, let us consider a Calabi-Yau (elliptic curve) hypersurface of $\P_{123}$. We take homogeneous coordinates
$(y,x,z) \sim (\lambda^3 y, \lambda^2 x, \lambda z )$. The fan of $\P_{123}$ has rays generated by 
\begin{equation}
{\bf v}_y = \left(\begin{array}{c}
  1 \\ 0
 \end{array}\right) \hspace{1cm}
{\bf v}_x = \left(\begin{array}{c}
  0 \\ 1
 \end{array}\right)\hspace{1cm}
{\bf v}_z = \left(\begin{array}{c}
  -3 \\ -2
 \end{array}\right)
\end{equation}
The hypersurface can be described by
\begin{equation}\label{eq:ellcurve}
 y^2 = x^3 + fxz^4 +g z^6
\end{equation}
Let us discuss the cone $\sigma$ spanned by $x$ and $z$. Gauge fixing $y=1$ leaves a residual $\Z_3$ acting on $x$ and $z$ so that
we expect $V(\sigma) = \C^2/\Z_3$. The generators of $\sigma^\vee$ are
\begin{equation}
 \left(\begin{array}{c}
  -1 \\ 0 
 \end{array}\right) \rightarrow \hat{z} = z^3/y \hspace{1cm}
 \left(\begin{array}{c}
  -2 \\ 3 
 \end{array}\right) \rightarrow \hat{x} = x^3/y^2  
\end{equation}
Clearly, we cannot rewrite \eqref{eq:ellcurve} in terms of these coordinates alone, which corresponds to the fact that the above vectors
do not generate the M-lattice by only a sublattice $M'$ such that $M/M' =\Z_3$. We can cure this by defining a new coordinate
\begin{equation}
 \left(\begin{array}{c}
  1 \\ -1 
 \end{array}\right) \rightarrow  \hat{\xi} \equiv x z /y 
\end{equation}
so that \eqref{eq:ellcurve} becomes (after dividing by $y^2$)
\begin{equation}
 1 = \hat{x} + f \hat{\xi} \hat{z} + g \hat{z}^2 = 0
\end{equation}
However, we now have the additional relation
\begin{equation}
 \hat{\xi}^3 = \hat{z} \hat{x}
\end{equation}
which is nothing but the defining equation of an $A_3$ singularity (which is the same as $\C^2/\Z_3$) embedded into $\C^3$.\\

As we have already discussed, we have to make sure our hypersurface $X$ defined by $p=0$ must not meet the singularities of the ambient space as this would lead to singularities on $X$ as well. As the singularities arise through fixed points of the group action of $G$ on $\C^4$, we can lift $X \subset V(\sigma)$ to a (different) hypersurface $\tilde{X}$ in $\C^4$ which descends to $X$ via a free group action of $G$. Smoothness of the covering space is then equivalent to smoothness of the quotient. We are hence interested in studying coordinates on the cover of $V(\sigma)$. These coordinates can be found by writing the $\hat{x}_\mu$ as monomials such that \eqref{eq:binomialrelations} is automatically solved. This, however, can be done by simply using \eqref{eq:defxmuhat} and setting all $\{x_j\} \setminus \{x_i^\sigma\}$ to unity. We have hence found that we can treat all four-dimensional cones in the very same way provided that the orbifold points of the ambient space $A$ do not meet the invariant hypersurface $X$.

\section{Invariant Polynomials of Examples}\label{app:invpolys}

\subsection{Case \#1, first symmetry action}

The invariant polynomial under \eqref{eq:z5quinticaction1} is a general linear combination of monomials with exponents
\begin{equation}
\begin{aligned}[]
(5, 0, 0, 0, 0),
 (3, 1, 0, 0, 1),
 (3, 0, 1, 1, 0),
 (2, 0, 1, 0, 2),
 (2, 1, 2, 0, 0),
 (2, 2, 0, 1, 0),
 (2, 0, 0, 2, 1),\\
 (1, 2, 0, 0, 2),
 (1, 3, 1, 0, 0),
 (1, 0, 3, 0, 1),
 (1, 0, 0, 1, 3),
 (1, 1, 1, 1, 1),
 (1, 0, 2, 2, 0),
 (1, 1, 0, 3, 0),\\
 (0, 0, 0, 0, 5),
 (0, 5, 0, 0, 0),
 (0, 1, 1, 0, 3),
 (0, 2, 2, 0, 1),
 (0, 0, 5, 0, 0),
 (0, 3, 0, 1, 1),
 (0, 0, 2, 1, 2),\\
 (0, 1, 3, 1, 0),
 (0, 1, 0, 2, 2),
 (0, 2, 1, 2, 0),
 (0, 0, 1, 3, 1),
 (0, 0, 0, 5, 0),
\end{aligned}
\end{equation}
where we have used the same labelling of coordinates as in \eqref{eq:raysquintic}. 

\subsection{Case \#1, second symmetry action}

The invariant polynomial under \eqref{eq:z5quinticaction2} is found from the above by finding the orbits of the homogeneous coordinates under the permutations $R(\gamma_2)$. The monomials in each orbit must then appear with an identical coefficient in the defining polynomial of an invariant hypersurface. The orbits under the choice $(2)$ in \eqref{eq:z5quinticaction2} are 
\begin{equation}
\begin{aligned}[]
 [(0, 5, 0, 0, 0), (5, 0, 0, 0, 0), (0, 0, 0, 5, 0), (0, 0, 5, 0, 0), (0,
0, 0, 0, 5)]
\,,\\ 
[(0, 0, 1, 3, 1), (1, 0, 0, 1, 3), (3, 1, 0, 0, 1), (0, 1, 3, 1, 0), (1,
3, 1, 0, 0)]
\,,\\ 
[(1, 0, 3, 0, 1), (0, 3, 0, 1, 1), (3, 0, 1, 1, 0), (1, 1, 0, 3, 0), (0,
1, 1, 0, 3)]
\,,\\ 
[(1, 0, 2, 2, 0), (2, 2, 0, 1, 0), (0, 1, 0, 2, 2), (0, 2, 2, 0, 1), (2,
0, 1, 0, 2)]
\,,\\ 
[(2, 1, 2, 0, 0), (0, 2, 1, 2, 0), (1, 2, 0, 0, 2), (2, 0, 0, 2, 1), (0,
0, 2, 1, 2)]
\,,\\ 
[(1, 1, 1, 1, 1)]\, .
\end{aligned}
\end{equation}
The other groups actions in \eqref{eq:z5quinticaction2} can be found from this by a permutation of variables. 

\subsection{Case \#2, first symmetry action}
The exponents of the invariant monomials under \eqref{eq:actionbicubic1} are given by
\begin{equation}
\begin{aligned}[]
(3, 0, 0, 0, 0, 3),
 (3, 0, 0, 3, 0, 0),
 (3, 0, 0, 1, 1, 1),
 (3, 0, 0, 0, 3, 0),
 (2, 0, 1, 1, 0, 2),
 (2, 1, 0, 2, 0, 1),\\
 (2, 1, 0, 0, 1, 2),
 (2, 0, 1, 2, 1, 0),
 (2, 0, 1, 0, 2, 1),
 (2, 1, 0, 1, 2, 0),
 (1, 1, 1, 0, 0, 3),
 (1, 2, 0, 1, 0, 2),\\
 (1, 0, 2, 2, 0, 1),
 (1, 1, 1, 3, 0, 0),
 (1, 0, 2, 0, 1, 2),
 (1, 1, 1, 1, 1, 1),
 (1, 2, 0, 2, 1, 0),
 (1, 2, 0, 0, 2, 1),\\
 (1, 0, 2, 1, 2, 0),
 (1, 1, 1, 0, 3, 0),
 (0, 0, 3, 0, 0, 3),
 (0, 3, 0, 0, 0, 3),
 (0, 1, 2, 1, 0, 2),
 (0, 2, 1, 2, 0, 1),\\
 (0, 0, 3, 3, 0, 0),
 (0, 3, 0, 3, 0, 0),
 (0, 2, 1, 0, 1, 2),
 (0, 0, 3, 1, 1, 1),
 (0, 3, 0, 1, 1, 1),
 (0, 1, 2, 2, 1, 0),\\
 (0, 1, 2, 0, 2, 1),
 (0, 2, 1, 1, 2, 0),
 (0, 0, 3, 0, 3, 0),
 (0, 3, 0, 0, 3, 0).
\end{aligned}
\end{equation}

\subsection{Case \#2, second symmetry action}

As in case $\#1$, the invariant polynomial under \eqref{eq:actionbicubic2} is found from the above by finding the orbits of the homogeneous coordinates under the permutations $R(\gamma_2)$. The monomials in each orbit must then appear with an identical coefficient in the defining polynomial of an invariant hypersurface. The orbits under the choice $(4)$ in \eqref{eq:actionbicubic2} are 
\begin{equation}
\begin{aligned}[]
[(3, 0, 0, 0, 0, 3), (0, 3, 0, 3, 0, 0), (0, 0, 3, 0, 3, 0)]
\,,\\ 
[(0, 3, 0, 0, 3, 0), (0, 0, 3, 0, 0, 3), (3, 0, 0, 3, 0, 0)]
\,,\\ 
[(3, 0, 0, 1, 1, 1), (0, 3, 0, 1, 1, 1), (0, 0, 3, 1, 1, 1)]
\,,\\ 
[(0, 0, 3, 3, 0, 0), (3, 0, 0, 0, 3, 0), (0, 3, 0, 0, 0, 3)]
\,,\\ 
[(1, 2, 0, 2, 1, 0), (2, 0, 1, 1, 0, 2), (0, 1, 2, 0, 2, 1)]
\,,\\ 
[(2, 1, 0, 2, 0, 1), (0, 2, 1, 1, 2, 0), (1, 0, 2, 0, 1, 2)]
\,,\\ 
[(2, 1, 0, 0, 1, 2), (0, 2, 1, 2, 0, 1), (1, 0, 2, 1, 2, 0)]
\,,\\ 
[(2, 0, 1, 2, 1, 0), (0, 1, 2, 1, 0, 2), (1, 2, 0, 0, 2, 1)]
\,,\\ 
[(0, 1, 2, 2, 1, 0), (1, 2, 0, 1, 0, 2), (2, 0, 1, 0, 2, 1)]
\,,\\ 
[(0, 2, 1, 0, 1, 2), (2, 1, 0, 1, 2, 0), (1, 0, 2, 2, 0, 1)]
\,,\\ 
[(1, 1, 1, 0, 0, 3), (1, 1, 1, 0, 3, 0), (1, 1, 1, 3, 0, 0)]
\,,\\ 
[(1, 1, 1, 1, 1, 1)]\, .
\end{aligned}
\end{equation}

\subsection{Case \#3}
The exponents of the invariant monomials under \eqref{eq:actioncase3} are given by 
\begin{equation}
\begin{aligned}[]
(8, 0, 0, 0, 0, 0, 4),
 (6, 1, 0, 0, 1, 1, 3),
 (6, 0, 0, 2, 0, 0, 4),
 (5, 0, 0, 1, 2, 1, 3),
 (5, 2, 0, 1, 0, 1, 3),\\
 (4, 0, 0, 0, 4, 2, 2),
 (4, 2, 0, 0, 2, 2, 2),
 (4, 4, 0, 0, 0, 2, 2),
 (4, 1, 0, 2, 1, 1, 3),
 (4, 0, 0, 4, 0, 0, 4),\\
 (3, 0, 1, 1, 0, 0, 2),
 (3, 1, 0, 1, 3, 2, 2),
 (3, 3, 0, 1, 1, 2, 2),
 (3, 0, 0, 3, 2, 1, 3),
 (3, 2, 0, 3, 0, 1, 3),\\
 (2, 0, 1, 0, 2, 1, 1),
 (2, 2, 1, 0, 0, 1, 1),
 (2, 1, 0, 0, 5, 3, 1),
 (2, 3, 0, 0, 3, 3, 1),
 (2, 5, 0, 0, 1, 3, 1),\\
 (2, 0, 0, 2, 4, 2, 2),
 (2, 2, 0, 2, 2, 2, 2),
 (2, 4, 0, 2, 0, 2, 2),
 (2, 1, 0, 4, 1, 1, 3),
 (2, 0, 0, 6, 0, 0, 4),\\
 (1, 1, 1, 1, 1, 1, 1),
 (1, 0, 1, 3, 0, 0, 2),
 (1, 0, 0, 1, 6, 3, 1),
 (1, 2, 0, 1, 4, 3, 1),
 (1, 4, 0, 1, 2, 3, 1),\\
 (1, 6, 0, 1, 0, 3, 1),
 (1, 1, 0, 3, 3, 2, 2),
 (1, 3, 0, 3, 1, 2, 2),
 (1, 0, 0, 5, 2, 1, 3),
 (1, 2, 0, 5, 0, 1, 3),\\
 (0, 0, 2, 0, 0, 0, 0),
 (0, 1, 1, 0, 3, 2, 0),
 (0, 3, 1, 0, 1, 2, 0),
 (0, 0, 1, 2, 2, 1, 1),
 (0, 2, 1, 2, 0, 1, 1),\\
 (0, 0, 0, 0, 8, 4, 0),
 (0, 2, 0, 0, 6, 4, 0),
 (0, 4, 0, 0, 4, 4, 0),
 (0, 6, 0, 0, 2, 4, 0),
 (0, 8, 0, 0, 0, 4, 0),\\
 (0, 1, 0, 2, 5, 3, 1),
 (0, 3, 0, 2, 3, 3, 1),
 (0, 5, 0, 2, 1, 3, 1),
 (0, 0, 0, 4, 4, 2, 2),
 (0, 2, 0, 4, 2, 2, 2),\\
 (0, 4, 0, 4, 0, 2, 2),
 (0, 1, 0, 6, 1, 1, 3),
 (0, 0, 0, 8, 0, 0, 4).
\end{aligned}
\end{equation}

\subsection{Case \#4, first symmetry action}
The exponents of the invariant monomials under \eqref{eq:actioncase4_1} are given by 
\begin{equation}\label{eq:case4_1invmonos}
\begin{aligned}[]
(8, 0, 0, 4, 0, 0, 4),
 (8, 0, 2, 2, 0, 0, 4),
 (8, 0, 4, 0, 0, 0, 4),
 (7, 1, 1, 3, 0, 0, 4),
 (7, 1, 3, 1, 0, 0, 4),\\
 (6, 2, 0, 4, 0, 0, 4),
 (6, 0, 0, 4, 1, 0, 3),
 (6, 2, 2, 2, 0, 0, 4),
 (6, 0, 2, 2, 1, 0, 3),
 (6, 2, 4, 0, 0, 0, 4),\\
 (6, 0, 4, 0, 1, 0, 3),
 (5, 3, 1, 3, 0, 0, 4),
 (5, 1, 1, 3, 1, 0, 3),
 (5, 3, 3, 1, 0, 0, 4),
 (5, 1, 3, 1, 1, 0, 3),\\
 (4, 0, 1, 1, 0, 1, 2),
 (4, 4, 0, 4, 0, 0, 4),
 (4, 2, 0, 4, 1, 0, 3),
 (4, 0, 0, 4, 2, 0, 2),
 (4, 4, 2, 2, 0, 0, 4),\\
 (4, 2, 2, 2, 1, 0, 3),
 (4, 0, 2, 2, 2, 0, 2),
 (4, 4, 4, 0, 0, 0, 4),
 (4, 2, 4, 0, 1, 0, 3),
 (4, 0, 4, 0, 2, 0, 2),\\
 (3, 1, 0, 2, 0, 1, 2),
 (3, 1, 2, 0, 0, 1, 2),
 (3, 5, 1, 3, 0, 0, 4),
 (3, 3, 1, 3, 1, 0, 3),
 (3, 1, 1, 3, 2, 0, 2),\\
 (3, 5, 3, 1, 0, 0, 4),
 (3, 3, 3, 1, 1, 0, 3),
 (3, 1, 3, 1, 2, 0, 2),
 (2, 2, 1, 1, 0, 1, 2),
 (2, 0, 1, 1, 1, 1, 1),\\
 (2, 6, 0, 4, 0, 0, 4),
 (2, 4, 0, 4, 1, 0, 3),
 (2, 2, 0, 4, 2, 0, 2),
 (2, 6, 2, 2, 0, 0, 4),
 (2, 0, 0, 4, 3, 0, 1),\\
 (2, 4, 2, 2, 1, 0, 3),
 (2, 2, 2, 2, 2, 0, 2),
 (2, 6, 4, 0, 0, 0, 4),
 (2, 0, 2, 2, 3, 0, 1),
 (2, 4, 4, 0, 1, 0, 3),\\
 (2, 2, 4, 0, 2, 0, 2),
 (2, 0, 4, 0, 3, 0, 1),
 (1, 3, 0, 2, 0, 1, 2),
 (1, 1, 0, 2, 1, 1, 1),
 (1, 3, 2, 0, 0, 1, 2),\\
 (1, 1, 2, 0, 1, 1, 1),
 (1, 7, 1, 3, 0, 0, 4),
 (1, 5, 1, 3, 1, 0, 3),
 (1, 3, 1, 3, 2, 0, 2),
 (1, 7, 3, 1, 0, 0, 4),\\
 (1, 1, 1, 3, 3, 0, 1),
 (1, 5, 3, 1, 1, 0, 3),
 (1, 3, 3, 1, 2, 0, 2),
 (1, 1, 3, 1, 3, 0, 1),
 (0, 0, 0, 0, 0, 2, 0),\\
 (0, 4, 1, 1, 0, 1, 2),
 (0, 2, 1, 1, 1, 1, 1),
 (0, 0, 1, 1, 2, 1, 0),
 (0, 8, 0, 4, 0, 0, 4),
 (0, 6, 0, 4, 1, 0, 3),\\
 (0, 4, 0, 4, 2, 0, 2),
 (0, 8, 2, 2, 0, 0, 4),
 (0, 2, 0, 4, 3, 0, 1),
 (0, 6, 2, 2, 1, 0, 3),
 (0, 0, 0, 4, 4, 0, 0),\\
 (0, 4, 2, 2, 2, 0, 2),
 (0, 8, 4, 0, 0, 0, 4),
 (0, 2, 2, 2, 3, 0, 1),
 (0, 6, 4, 0, 1, 0, 3),
 (0, 0, 2, 2, 4, 0, 0),\\
 (0, 4, 4, 0, 2, 0, 2),
 (0, 2, 4, 0, 3, 0, 1),
 (0, 0, 4, 0, 4, 0, 0)  
\end{aligned}
\end{equation}

\subsection{Case \#4, second symmetry action}

The invariant polynomial under \eqref{eq:actioncase4_2} is found from \eqref{eq:case4_1invmonos} by finding the orbits of the homogeneous coordinates under the permutations $R(\gamma_2)$. The monomials in each orbit must then appear with an identical coefficient in the defining polynomial of an invariant hypersurface. The orbits under \eqref{eq:actioncase4_2} are 
\begin{equation}
 \begin{aligned}[]
[(0, 8, 4, 0, 0, 0, 4), (8, 0, 0, 4, 0, 0, 4)],
 [(0, 8, 2, 2, 0, 0, 4), (8, 0, 2, 2, 0, 0, 4)],\\
 [(0, 8, 0, 4, 0, 0, 4), (8, 0, 4, 0, 0, 0, 4)],
 [(1, 7, 3, 1, 0, 0, 4), (7, 1, 1, 3, 0, 0, 4)],\\
 [(1, 7, 1, 3, 0, 0, 4), (7, 1, 3, 1, 0, 0, 4)],
 [(2, 6, 4, 0, 0, 0, 4), (6, 2, 0, 4, 0, 0, 4)],\\
 [(0, 6, 4, 0, 1, 0, 3), -(6, 0, 0, 4, 1, 0, 3)],
 [(2, 6, 2, 2, 0, 0, 4), (6, 2, 2, 2, 0, 0, 4)],\\
 [(0, 6, 2, 2, 1, 0, 3), -(6, 0, 2, 2, 1, 0, 3)],
 [(2, 6, 0, 4, 0, 0, 4), (6, 2, 4, 0, 0, 0, 4)],\\
 [(0, 6, 0, 4, 1, 0, 3), -(6, 0, 4, 0, 1, 0, 3)],
 [(3, 5, 3, 1, 0, 0, 4), (5, 3, 1, 3, 0, 0, 4)],\\
 [(1, 5, 3, 1, 1, 0, 3), -(5, 1, 1, 3, 1, 0, 3)],
 [(3, 5, 1, 3, 0, 0, 4), (5, 3, 3, 1, 0, 0, 4)],\\
 [(1, 5, 1, 3, 1, 0, 3), -(5, 1, 3, 1, 1, 0, 3)],
 [(0, 4, 1, 1, 0, 1, 2), -(4, 0, 1, 1, 0, 1, 2)],\\
 [(4, 4, 0, 4, 0, 0, 4), (4, 4, 4, 0, 0, 0, 4)],
 [(2, 4, 4, 0, 1, 0, 3), -(4, 2, 0, 4, 1, 0, 3)],\\
 [(0, 4, 4, 0, 2, 0, 2), (4, 0, 0, 4, 2, 0, 2)],
 [(4, 4, 2, 2, 0, 0, 4)],\\
 [(2, 4, 2, 2, 1, 0, 3), -(4, 2, 2, 2, 1, 0, 3)],
 [(0, 4, 2, 2, 2, 0, 2), (4, 0, 2, 2, 2, 0, 2)],\\
 [(2, 4, 0, 4, 1, 0, 3), -(4, 2, 4, 0, 1, 0, 3)],
 [(0, 4, 0, 4, 2, 0, 2), (4, 0, 4, 0, 2, 0, 2)],\\
 [(1, 3, 2, 0, 0, 1, 2), -(3, 1, 0, 2, 0, 1, 2)],
 [(1, 3, 0, 2, 0, 1, 2), -(3, 1, 2, 0, 0, 1, 2)],\\
 [(3, 3, 1, 3, 1, 0, 3), -(3, 3, 3, 1, 1, 0, 3)],
 [(1, 3, 3, 1, 2, 0, 2), (3, 1, 1, 3, 2, 0, 2)],\\
 [(1, 3, 1, 3, 2, 0, 2), (3, 1, 3, 1, 2, 0, 2)],
 [(0, 2, 1, 1, 1, 1, 1), (2, 0, 1, 1, 1, 1, 1)],\\
 [(2, 2, 0, 4, 2, 0, 2), (2, 2, 4, 0, 2, 0, 2)],
 [(0, 2, 4, 0, 3, 0, 1), -(2, 0, 0, 4, 3, 0, 1)],\\
 [(2, 2, 2, 2, 2, 0, 2)],
 [(0, 2, 2, 2, 3, 0, 1), -(2, 0, 2, 2, 3, 0, 1)],\\
 [(0, 2, 0, 4, 3, 0, 1), -(2, 0, 4, 0, 3, 0, 1)],
 [(1, 1, 0, 2, 1, 1, 1), (1, 1, 2, 0, 1, 1, 1)],\\
 [(1, 1, 1, 3, 3, 0, 1), -(1, 1, 3, 1, 3, 0, 1)],
 [(0, 0, 0, 0, 0, 2, 0)],\\
 [(0, 0, 0, 4, 4, 0, 0), (0, 0, 4, 0, 4, 0, 0)],
 [(0, 0, 2, 2, 4, 0, 0)]
 \end{aligned}
\end{equation}
A sign in front of a vector of exponents indicates that the two monomials must have the same coefficient with a relative sign in the defining equation. 

\subsection{Case \#5}

The exponents of the invariant monomials under \eqref{eq:actioncase5} are given by 
\begin{equation}
\begin{aligned}[]
 (8, 0, 0, 4, 0, 0, 4),
 (8, 0, 2, 2, 0, 0, 4),
 (8, 0, 4, 0, 0, 0, 4),
 (7, 1, 1, 3, 0, 0, 4),
 (7, 1, 3, 1, 0, 0, 4),\\
 (6, 2, 0, 4, 0, 0, 4),
 (6, 0, 0, 4, 1, 0, 3),
 (6, 2, 2, 2, 0, 0, 4),
 (6, 0, 2, 2, 1, 0, 3),
 (6, 2, 4, 0, 0, 0, 4),\\
 (6, 0, 4, 0, 1, 0, 3),
 (5, 3, 1, 3, 0, 0, 4),
 (5, 1, 1, 3, 1, 0, 3),
 (5, 3, 3, 1, 0, 0, 4),
 (5, 1, 3, 1, 1, 0, 3),\\
 (4, 0, 0, 2, 0, 1, 2),
 (4, 0, 2, 0, 0, 1, 2),
 (4, 4, 0, 4, 0, 0, 4),
 (4, 2, 0, 4, 1, 0, 3),
 (4, 0, 0, 4, 2, 0, 2),\\
 (4, 4, 2, 2, 0, 0, 4),
 (4, 2, 2, 2, 1, 0, 3),
 (4, 0, 2, 2, 2, 0, 2),
 (4, 4, 4, 0, 0, 0, 4),
 (4, 2, 4, 0, 1, 0, 3),\\
 (4, 0, 4, 0, 2, 0, 2),
 (3, 1, 1, 1, 0, 1, 2),
 (3, 5, 1, 3, 0, 0, 4),
 (3, 3, 1, 3, 1, 0, 3),
 (3, 1, 1, 3, 2, 0, 2),\\
 (3, 5, 3, 1, 0, 0, 4),
 (3, 3, 3, 1, 1, 0, 3),
 (3, 1, 3, 1, 2, 0, 2),
 (2, 2, 0, 2, 0, 1, 2),
 (2, 0, 0, 2, 1, 1, 1),\\
 (2, 2, 2, 0, 0, 1, 2),
 (2, 0, 2, 0, 1, 1, 1),
 (2, 6, 0, 4, 0, 0, 4),
 (2, 4, 0, 4, 1, 0, 3),
 (2, 2, 0, 4, 2, 0, 2),\\
 (2, 6, 2, 2, 0, 0, 4),
 (2, 0, 0, 4, 3, 0, 1),
 (2, 4, 2, 2, 1, 0, 3),
 (2, 2, 2, 2, 2, 0, 2),
 (2, 6, 4, 0, 0, 0, 4),\\
 (2, 0, 2, 2, 3, 0, 1),
 (2, 4, 4, 0, 1, 0, 3),
 (2, 2, 4, 0, 2, 0, 2),
 (2, 0, 4, 0, 3, 0, 1),
 (1, 3, 1, 1, 0, 1, 2),\\
 (1, 1, 1, 1, 1, 1, 1),
 (1, 7, 1, 3, 0, 0, 4),
 (1, 5, 1, 3, 1, 0, 3),
 (1, 3, 1, 3, 2, 0, 2),
 (1, 7, 3, 1, 0, 0, 4),\\
 (1, 1, 1, 3, 3, 0, 1),
 (1, 5, 3, 1, 1, 0, 3),
 (1, 3, 3, 1, 2, 0, 2),
 (1, 1, 3, 1, 3, 0, 1),
 (0, 0, 0, 0, 0, 2, 0),\\
 (0, 4, 0, 2, 0, 1, 2),
 (0, 2, 0, 2, 1, 1, 1),
 (0, 0, 0, 2, 2, 1, 0),
 (0, 4, 2, 0, 0, 1, 2),
 (0, 2, 2, 0, 1, 1, 1),\\
 (0, 0, 2, 0, 2, 1, 0),
 (0, 8, 0, 4, 0, 0, 4),
 (0, 6, 0, 4, 1, 0, 3),
 (0, 4, 0, 4, 2, 0, 2),
 (0, 8, 2, 2, 0, 0, 4),\\
 (0, 2, 0, 4, 3, 0, 1),
 (0, 6, 2, 2, 1, 0, 3),
 (0, 0, 0, 4, 4, 0, 0),
 (0, 4, 2, 2, 2, 0, 2),
 (0, 8, 4, 0, 0, 0, 4),\\
 (0, 2, 2, 2, 3, 0, 1),
 (0, 6, 4, 0, 1, 0, 3),
 (0, 0, 2, 2, 4, 0, 0),
 (0, 4, 4, 0, 2, 0, 2),
 (0, 2, 4, 0, 3, 0, 1),\\
 (0, 0, 4, 0, 4, 0, 0).
 \end{aligned}
\end{equation}

\subsection{The tetra-quadric}\label{app:polytq}

The generic invariant hypersurface is given by a linear combination of polynomials with the following exponent vectors. Here, a sign indicates that the two monomials must be combined with a relative sign. 
\begin{equation}
\begin{aligned}[]
[(2, 1, 0, 0, 2, 2, 1, 0), (0, 0, 2, 1, 1, 0, 2, 2)],
[(2, 2, 0, 0, 2, 2, 0, 0), (0, 0, 2, 2, 0, 0, 2, 2)],\\
[(2, 1, 1, 0, 2, 1, 1, 0), (1, 0, 2, 1, 1, 0, 2, 1)],
[(2, 2, 1, 0, 2, 1, 0, 0), (1, 0, 2, 2, 0, 0, 2, 1)],\\
[(2, 0, 0, 1, 1, 2, 2, 0), -(0, 1, 2, 0, 2, 0, 1, 2)],
[(2, 0, 1, 1, 1, 1, 2, 0), -(1, 1, 2, 0, 2, 0, 1, 1)],\\
[(2, 2, 2, 1, 1, 0, 0, 0), -(2, 1, 2, 2, 0, 0, 1, 0)],
[(2, 0, 0, 2, 0, 2, 2, 0), -(0, 2, 2, 0, 2, 0, 0, 2)],\\
[(2, 0, 1, 2, 0, 1, 2, 0), -(1, 2, 2, 0, 2, 0, 0, 1)],
[(1, 0, 0, 0, 2, 2, 2, 1), -(0, 0, 1, 0, 2, 1, 2, 2)],\\
[(1, 1, 0, 1, 1, 2, 1, 1), (0, 1, 1, 1, 1, 1, 1, 2)],
[(1, 2, 0, 1, 1, 2, 0, 1), (0, 1, 1, 2, 0, 1, 1, 2)],\\
[(1, 2, 1, 1, 1, 1, 0, 1), (1, 1, 1, 2, 0, 1, 1, 1)],
[(1, 1, 0, 2, 0, 2, 1, 1), (0, 2, 1, 1, 1, 1, 0, 2)],\\
[(1, 2, 0, 2, 0, 2, 0, 1), (0, 2, 1, 2, 0, 1, 0, 2)],
[(0, 2, 0, 1, 1, 2, 0, 2), (0, 1, 0, 2, 0, 2, 1, 2)],\\
[(0, 2, 0, 2, 0, 2, 0, 2)] ,[(0, 1, 0, 1, 1, 2, 1, 2)], 
[(1, 1, 1, 1, 1, 1, 1, 1)],[(2, 0, 2, 0, 2, 0, 2, 0)], \\
[(1, 2, 1, 2, 0, 1, 0, 1)].
\end{aligned}
\end{equation}

\subsection{The twisted tetra-quadric}\label{app:monosttq}

The invariant polynomial of the model discussed in Section \ref{sect:twtetraquadric} is generated by the following linear combinations of monomials with equal coefficients 
\begin{equation}
\begin{aligned}[]
[(0, 0, 0, 1, 2, 0, 1, 2, 2, 2), (2, 2, 2, 1, 0, 2, 1, 0, 0, 0)],\\
 [(0, 0, 0, 2, 1, 1, 0, 2, 2, 2), (2, 2, 2, 0, 1, 1, 2, 0, 0, 0)],\\
 [(1, 0, 0, 2, 2, 0, 0, 2, 2, 1), (1, 2, 2, 0, 0, 2, 2, 0, 0, 1)],\\
 [(0, 0, 1, 0, 2, 0, 2, 1, 2, 2), (2, 2, 1, 2, 0, 2, 0, 1, 0, 0)],\\
 [(0, 0, 1, 1, 1, 1, 1, 1, 2, 2), (2, 2, 1, 1, 1, 1, 1, 1, 0, 0)],\\
 [(1, 0, 1, 1, 2, 0, 1, 1, 2, 1), (1, 2, 1, 1, 0, 2, 1, 1, 0, 1)],\\
 [(0, 0, 1, 2, 0, 2, 0, 1, 2, 2), (2, 2, 1, 0, 2, 0, 2, 1, 0, 0)],\\
 [(1, 0, 1, 2, 1, 1, 0, 1, 2, 1), (1, 2, 1, 0, 1, 1, 2, 1, 0, 1)],\\
 [(0, 2, 1, 0, 0, 2, 2, 1, 0, 2), (2, 0, 1, 2, 2, 0, 0, 1, 2, 0)],\\
 [(0, 0, 2, 0, 1, 1, 2, 0, 2, 2), (2, 2, 0, 2, 1, 1, 0, 2, 0, 0)],\\
 [(1, 0, 2, 0, 2, 0, 2, 0, 2, 1), (1, 2, 0, 2, 0, 2, 0, 2, 0, 1)],\\
 [(0, 0, 2, 1, 0, 2, 1, 0, 2, 2), (2, 2, 0, 1, 2, 0, 1, 2, 0, 0)],\\
 [(1, 0, 2, 1, 1, 1, 1, 0, 2, 1), (1, 2, 0, 1, 1, 1, 1, 2, 0, 1)],\\
 [(0, 2, 0, 1, 0, 2, 1, 2, 0, 2), (2, 0, 2, 1, 2, 0, 1, 0, 2, 0)],\\
 [(1, 0, 2, 2, 0, 2, 0, 0, 2, 1), (1, 2, 0, 0, 2, 0, 2, 2, 0, 1)],\\
 [(0, 2, 0, 0, 1, 1, 2, 2, 0, 2), (2, 0, 2, 2, 1, 1, 0, 0, 2, 0)],\\
 [(0, 1, 0, 0, 2, 0, 2, 2, 1, 2), (2, 1, 2, 2, 0, 2, 0, 0, 1, 0)],\\
 [(0, 1, 0, 1, 1, 1, 1, 2, 1, 2), (2, 1, 2, 1, 1, 1, 1, 0, 1, 0)],\\
 [(1, 1, 0, 1, 2, 0, 1, 2, 1, 1), (1, 1, 2, 1, 0, 2, 1, 0, 1, 1)],\\
 [(0, 1, 0, 2, 0, 2, 0, 2, 1, 2), (2, 1, 2, 0, 2, 0, 2, 0, 1, 0)],\\
 [(1, 1, 0, 2, 1, 1, 0, 2, 1, 1), (1, 1, 2, 0, 1, 1, 2, 0, 1, 1)],\\
 [(0, 1, 2, 0, 0, 2, 2, 0, 1, 2), (2, 1, 0, 2, 2, 0, 0, 2, 1, 0)],\\
 [(0, 1, 1, 0, 1, 1, 2, 1, 1, 2), (2, 1, 1, 2, 1, 1, 0, 1, 1, 0)],\\
 [(1, 1, 1, 0, 2, 0, 2, 1, 1, 1), (1, 1, 1, 2, 0, 2, 0, 1, 1, 1)],\\
 [(0, 1, 1, 1, 0, 2, 1, 1, 1, 2), (2, 1, 1, 1, 2, 0, 1, 1, 1, 0)],\\
 [(1, 1, 1, 1, 1, 1, 1, 1, 1, 1)].
\end{aligned}
\end{equation}
As before, we have given the vector of exponents of the monomials.

\end{document}